\documentclass[11pt,letterpaper]{amsart}
\usepackage{preamble}
\counterwithin{equation}{section}
\begin{document}

\title[The Entropic Sum-Product Phenomenon]{The Entropic Sum-Product Phenomenon}

\author[Rupert Li]{Rupert Li}
\address[]{Department of Mathematics, Stanford University, Stanford, CA 94305, USA}
\email{rupertli@stanford.edu}

\begin{abstract}
Let $X,X'$ be independent and identically distributed discrete real-valued random variables of finite Shannon entropy, and write $H(X)$ for the Shannon entropy of $X$.
We prove that
\[
  \max\set{H(X+X'),\,H(XX')} \ge \frac87 H(X)-O(\log H(X)).
\]
This is the entropic analog of the celebrated sum-product phenomenon, and answers a question of Goh, which simply asked for a coefficient strictly larger than 1.
An example by the author, Gavalakis, and Kontoyiannis showed the coefficient cannot exceed $\frac43$.
Previous work by Gavalakis, Goh, and Kontoyiannis was able to prove a result of a weaker form, which could not translate to a coefficient strictly larger than 1 because of examples where the min-entropy is significantly smaller than the Shannon entropy.
By splitting the distribution of $X$ into uniform pieces, which costs $O(\log H(X))$ entropy, we obviate this issue, establishing a coefficient of $\frac{10}{9}$.
We augment this to $\frac87$ by adapting the work of Solymosi, which established the combinatorial sum-product phenomenon with coefficient $\frac43$ by bounding the multiplicative energy, to the entropy setting, again via a uniformization technique. 
\end{abstract}

\maketitle

\section{Introduction}\label{section:introduction}
Let $A\subset\R$ be a finite set of real numbers.
The sum-product phenomenon of Erd\H{o}s and Szemer\'edi \cite{ErdosSzemeredi1983} asserts that $A$ cannot be simultaneously close to an arithmetic progression and to a geometric progression, and so cannot have both its sumset $A+A=\set{a_1+a_2:a_1,a_2\in A}$ and its product set $A\cdot A=\set{a_1a_2:a_1,a_2\in A}$ of size close to $\abs A$.
Formally, in 1983 they proved there exists some constant $\eps>0$ such that
\[ \max\set{\abs{A+A},\abs{A\cdot A}}\geq\abs{A}^{1+\eps-o(1)}, \]
where $o(1)$ denotes a term that goes to 0 as $\abs{A}\to\infty$.
The original result did not provide an explicit value for $\eps$ and only holds for subsets of the integers, i.e., $A\subset\Z$, and there has been much subsequent work on proving this result for larger and larger $\eps$.
Two milestone results in this field were the work of Elekes \cite{Elekes1997} in 1997, who proved the result for $\eps=\frac14$ using the Szemer\'edi--Trotter incidence theorem \cite{Szekely1997,SzemerediTrotter1983}, and the work of Solymosi \cite{Solymosi2009} in 2009 who proved the result for $\eps=\frac13$, by specifically proving the following theorem.
\begin{theorem}[\protect{\cite[Theorem 2.1]{Solymosi2009}}]\label{thm:solymosi}
Let $A\subset\R$ be a finite set of positive reals. Then
\[
  \abs{A\cdot A}\,\abs{A+A}^2 \ge \frac{\abs A^4}{4\ceil{\log\abs A}}.
\]
\end{theorem}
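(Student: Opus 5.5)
We follow Solymosi's argument: bound the multiplicative energy $E(A)=\abs{\set{(a,b,c,d)\in A^4: ab=cd}}$ from above in terms of $\abs{A+A}$, and from below in terms of $\abs{A\cdot A}$ via Cauchy--Schwarz. The lower bound is immediate. Writing $r(x)=\abs{\set{(a,b)\in A^2:ab=x}}$, we have
\[
\abs A^4=\Bigl(\sum_{x\in A\cdot A} r(x)\Bigr)^2\le \abs{A\cdot A}\sum_x r(x)^2=\abs{A\cdot A}\,E(A).
\]
It therefore suffices to show $E(A)\le 4\ceil{\log\abs A}\,\abs{A+A}^2$.

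For the upper bound, note that $ab=cd$ is equivalent to $a/c=d/b$. Hence $E(A)=\sum_{\lambda}\abs{A\cap \lambda A}^2$, where $\lambda$ ranges over the ratio set $A/A$. Since $\sum_\lambda \abs{A\cap\lambda A}=\abs A^2$, the terms with $\abs{A\cap\lambda A}=k$ can be grouped dyadically: let $S_j=\set{\lambda: 2^{j}\le \abs{A\cap\lambda A}<2^{j+1}}$ for $0\le j<\ceil{\log \abs A}$ (logarithm base 2). By pigeonhole, some $j$ satisfies
\[
\sum_{\lambda\in S_j}\abs{A\cap\lambda A}^2\ge E(A)/\ceil{\log\abs A},
\]
and then $\abs{S_j}\,2^{2j+2}\ge E(A)/\ceil{\log\abs A}$. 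We fix this $j$, write $k=2^j$, and set $m=\abs{S_j}$, so that $m k^2\gtrsim E(A)/(4\ceil{\log\abs A})$.

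The geometric heart of the argument is as follows. For each $\lambda\in S_j$, consider the line $y=\lambda x$ through the origin. The points of $A\times A$ on this line are $(a,\lambda a)$ with $a\in A\cap\lambda^{-1}A$, and there are at least $k$ of them. Order the slopes $\lambda_1<\lambda_2<\dots<\lambda_m$. Positivity of $A$ puts all points in the open first quadrant, which is what makes this ordering meaningful. For consecutive lines $\ell_i,\ell_{i+1}$, form the vector sums $P+Q$ with $P\in (A\times A)\cap\ell_i$ and $Q\in (A\times A)\cap \ell_{i+1}$. These lie in $(A+A)\times(A+A)$, and there are at least $k^2$ of them, all distinct, since two independent directions give unique decompositions. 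Moreover, they lie strictly inside the open cone between $\ell_i$ and $\ell_{i+1}$, so the sets arising from different consecutive pairs are disjoint. This yields $(m-1)k^2\le\abs{A+A}^2$.

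The main obstacle is the bookkeeping that turns $m-1$ into a clean bound with constant $4$. One option is to add a line on the boundary, using the vertical axis or the line of slope $\lambda_m$ paired with itself, to gain the missing pair. The alternative, which I would try first, is to handle $m=1$ separately (then $E(A)\le \ceil{\log\abs A}\,4k^2\le 4\ceil{\log\abs A}\abs A^2\le4\ceil{\log\abs A}\abs{A+A}^2$ trivially) and otherwise use $m-1\ge m/2$. That costs a factor 2, which the dyadic slack must absorb. To recover it, I would instead make the dyadic classes $k\le\abs{A\cap\lambda A}<2k$, giving $\sum_{S_j}\abs{A\cap\lambda A}^2< m\cdot 4k^2$. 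Combined with $m k^2\le 2(m-1)k^2\le 2\abs{A+A}^2$ (for $m\ge 2$), this gives $E(A)\le 8\ceil{\log\abs A}\abs{A+A}^2$. To reach exactly $4$, one adds the extra line trick, or notes that $\lambda=1$ contributes $\abs A^2$ and is handled separately. Combining the two bounds then gives the theorem.
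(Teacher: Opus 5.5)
Up to the last step, your argument is the paper's proof of Theorem \ref{thm:asym-solymosi} with $B=A$: dyadic pigeonhole on $n_\lambda=\abs{A\cap\lambda A}$, disjoint open sectors between consecutive slopes, and the split $m=1$ versus $m\le 2(m-1)$. Like the paper, it only yields $E(A)\le 8\ceil{\log\abs A}\,\abs{A+A}^2$, i.e.\ the theorem with $8$ in place of $4$. The paper never proves the constant $4$; it quotes it from Solymosi. So your final step is a genuine gap, and the remedies you suggest do not close it. Pairing $\ell_m$ with itself produces sums lying on $\ell_m$, in bijection with $A_m+A_m$ where $A_m=A\cap\lambda_m^{-1}A$. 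There are therefore at most $\binom{n+1}{2}<n^2$ of them for $n=\abs{A_m}\ge2$, and possibly only $2n-1$. The literal vertical axis contains no point of $A\times A$. Setting $\lambda=1$ aside does nothing about the $m$ versus $m-1$ loss, which occurs in whichever class the pigeonhole selects.

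The missing idea is an $m$-th block of at least $k^2$ sums lying beyond $\ell_m$. Take the points $(\min A,\lambda_m a')$, $a'\in A_m$, on the vertical line $x=\min A$. They belong to $A\times A$ since $\lambda_m a'\in A$, and they lie on or above $\ell_m$ since $a'\ge\min A$. The sums $(a+\min A,\ \lambda_m(a+a'))$ with $a,a'\in A_m$ have the following properties:
\begin{itemize}
\item they lie in $(A+A)\times(A+A)$;
\item they are pairwise distinct, since the first coordinate determines $a$ and then the second determines $a'$;
\item they have slope $\lambda_m(a+a')/(a+\min A)\ge\lambda_m$, so they miss all $m-1$ open sectors.
\end{itemize}
This gives $mk^2\le\abs{A+A}^2$ for every $m\ge1$, with no case split. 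Hence $E(A)\le 4\ceil{\log\abs A}\,mk^2\le 4\ceil{\log\abs A}\,\abs{A+A}^2$, and Cauchy--Schwarz finishes.

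Separately, fix your indexing. With half-open classes and $0\le j<\ceil{\log\abs A}$, the ray $\lambda=1$ (where $n_\lambda=\abs A$) is not covered when $\abs A$ is a power of $2$. Make the top class $2^j\le n_\lambda\le 2^{j+1}$, which still gives $n_\lambda^2\le 4k^2$. Also assume $\abs A\ge 2$, since the statement is meaningless for $\abs A=1$.
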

In 2015, Konyagin and Shkredov \cite{KonyaginShkredov2015} slightly improved upon Solymosi's $\eps=\frac13$ to $\eps=\frac{1}{3}+\frac{1}{20598}$, and the current published record is $\frac13+\frac{2}{1167}$ by Rudnev and Stevens \cite{RudnevStevens2022}, with subsequent preprints by Bloom \cite{Bloom2025} and then Cushman \cite{Cushman2025} improving this to $\eps=\frac13+\frac{2}{951}$ and $\eps=\frac13+\frac{10}{4407}$, respectively.

The trivial upper bound on $\eps$ is 1, and Erd\H os and Szemer\'edi \cite{ErdosSzemeredi1983} in fact conjectured that the result is true for $\eps=1$, i.e., that
\[ \max\set{\abs{A+A},\abs{A\cdot A}}\geq\abs{A}^{2-o(1)}. \]
A recent breakthrough by Bloom, Sawin, Schildkraut, and Zhelezov \cite{BSSZ2026} showed this sum-product conjecture is false over $\R$; the conjecture is still open for $A\subset\Z$.

Throughout, $\log$ denotes the base-2 logarithm, and all entropies are in bits.
The natural logarithm will be denoted $\ln$.
For a discrete random variable $X$ taking values in a set $A$, Shannon's Asymptotic Equipartition Property (AEP) lets us heuristically interpret the entropy $H(X)$ as the log-cardinality of the ``essential support'' of $X$, the probabilistic analog of $\log\abs{A}$.
Based on this heuristic correspondence, Ruzsa \cite{Ruzsa2009} in 2009 and Tao \cite{Tao2010} in 2010 began exploring this connection between combinatorics and information theory in the context of sumset inequalities.
Interestingly, they found that if one took a sumset inequality, such as the classical Ruzsa triangle inequality \cite{Ruzsa1978}
\[ \abs{A-C}\leq\frac{\abs{A-B}\abs{B-C}}{\abs{B}}, \]
and replaced discrete subsets with independent discrete random variables and log-cardinalities with entropies, miraculously one would often obtain a legitimate new entropy inequality, such as the entropic Ruzsa triangle inequality \cite{Ruzsa2009}
\begin{equation}\label{eq:entropic_Ruzsa_triangle}
    H(X-Z) \leq H(X-Y) + H(Y-Z) - H(Y)
\end{equation}
for independent $X$, $Y$, and $Z$.
This correspondence is rich and nontrivial, often with neither the combinatorial nor entropy version of these bounds implying each other.
We refer interested readers to \cite{LGK} and references therein for a modern overview of this correspondence, including the differential entropy setting for continuous random variables.

Recently, we have also seen the field of combinatorics use entropy inequalities and results as the primary tool in the argument, so this dialogue has been fruitful in both directions.
Likely the most prominent example of entropy tools being used for sumset results is the recent work of Gowers, Green, Manners, and Tao \cite{GGMT2023conjecture,GGMT2024marton}, which proves a conjecture of Marton, widely known in the literature as the \emph{polynomial Freiman--Ruzsa conjecture}, first in characteristic 2 and then for arbitrary additive groups with bounded torsion.

Bringing the sum-product phenomenon into this cardinality-entropy correspondence, Goh \cite{Goh2024} asked whether an entropic sum-product phenomenon exists.
\begin{conjecture}[\protect{\cite[Conjecture 18]{Goh2024}}]\label{conjecture:main}
There is a $\delta>0$ such that, for every discrete real-valued
random variable $X$ with independent copy $X'$ and finite entropy $H(X)<\infty$,
\begin{equation*}
  \max\set{H(X+X'),\,H(XX')} \ge \paren{1+\delta-o(1)}H(X),
\end{equation*}
where $o(1)$ denotes a term that goes to 0 as $H(X)\to\infty$.
\end{conjecture}
Note that if $X$ is uniform on a finite set $A$, one has
\[ H(X)=\log\abs{A}, \quad H(X+X')\leq\log\abs{A+A}, \quad\text{and}\quad H(XX')\leq\log\abs{A\cdot A}, \]
so an entropic sum-product phenomenon would imply its combinatorial counterpart at $\eps=\delta$, and so in this case the two statements are logically comparable, with the entropic statement being the stronger one.

The author, Gavalakis, and Kontoyiannis \cite[Example 3]{LGK} showed that if such a $\delta$ exists, it must be at most $\frac{1}{3}$, by considering the example where $X=0$ with probability $\frac{1}{3}$ and otherwise is uniformly distributed in $\set{1,2,\dots,n}$.
Thus, the optimal (maximal) $\delta$ lies in $\left[0,\frac13\right)$, while the optimal $\eps$ from the combinatorial setting lies in $\left[\frac13+\frac{10}{4407},1\right)$, and the two are thus provably not equal.

The first quantitative results towards \cref{conjecture:main} are due to Gavalakis, Goh,
and Kontoyiannis \cite{GGK}, though the conjecture remained open, i.e., they could not establish any $\delta>0$.
Here the min-entropy is denoted $H_\infty(X)=-\log\max_x\P(X=x)$.
The first result is for random variables over an arbitrary field $F$.
\begin{theorem}[\protect{\cite[Theorem 1.3]{GGK}}]\label{thm:GGK_1_3}
There exist absolute constants $K_1$ and $K_2$ such that the following holds.
Let $X$ be a finitely supported random variable taking values in a field $F$,
and let $X'$ be an independent copy of $X$. If $F$ has characteristic $p>0$,
assume further that $H_\infty(X)\le\frac23\log p-K_1$. Then
\[
  \max\set{H(X+X'),\,H(XX')} \ge \frac{4H(X)+3H_\infty(X)}6-K_2 .
\]
\end{theorem}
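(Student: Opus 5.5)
Since \cref{thm:GGK_1_3} is quoted from \cite{GGK}, the plan is to reconstruct a proof along what I expect are its lines: reduce the entropic statement to a combinatorial sum-product estimate over a field $F$, with the min-entropy controlling the loss. I will not use the positivity or ordering of $\R$ anywhere; only field operations and a point--line incidence bound (Stevens--de Zeeuw type) valid over arbitrary fields. That incidence bound is also the only source of the characteristic restriction $H_\infty(X)\le\frac23\log p-K_1$.

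\textbf{Step 1 (uniform core).} Let $p_x=\P(X=x)$ and $m=H_\infty(X)$, so $p_x\le 2^{-m}$ for all $x$. I will dyadically decompose the support into level sets $A_j=\set{x:2^{-j-1}<p_x\le 2^{-j}}$ for $j\ge m$. By the AEP-type heuristic, together with a Markov/Chebyshev bound on $-\log p_X$, there is a level $j$ for which
\[
\P(X\in A_j)\gtrsim 1/\mathrm{poly}(H(X)) \quad\text{and}\quad \abs{A_j}\approx 2^{j}\P(X\in A_j).
\]
On $A_j$ the random variable is essentially uniform. However, I would rather not lose a $\log H$ term here, since the statement has only an absolute constant $K_2$. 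This suggests working with the whole distribution through energies rather than passing to a single level set.

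\textbf{Step 2 (entropy vs.\ energy).} Define the additive collision quantity $E_+=\sum_s \P(X+X'=s)^2$ and the multiplicative one $E_\times=\sum_t\P(XX'=t)^2$. Since Shannon entropy dominates collision entropy,
\[
H(X+X')\ge -\log E_+ \quad\text{and}\quad H(XX')\ge -\log E_\times .
\]
The core estimate I aim for is a weighted Elekes/Stevens--de Zeeuw bound obtained from incidences between the point set $\set{(a+b,\,cd)}$ and the lines $y=c(x-a)\cdot(\text{something})$. Weighting the points and lines by $p$, the incidence bound has the shape $I\lesssim (\text{points}\cdot\text{weights})^{3/4}\cdots+\max\text{weight}$ terms. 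The max-weight terms are exactly where $2^{-m}$ enters.

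Concretely, I would combine two ingredients. The first is Cauchy--Schwarz: $\sum_s\P(X+X'=s)^2\,\abs{\mathrm{supp}}\ge 1$, and more usefully $2^{H(X+X')}\gtrsim$ the effective support. The second is the weighted incidence count. Together they should give
\[
2^{2H(X+X')}\,2^{2H(XX')}\gtrsim 2^{4H(X)}\cdot 2^{3m}/2^{\,\cdot}\ldots,
\]
in the form $4\max\{\cdot\}\cdot 3\ge 4H+3m$ after taking logarithms, which yields exactly the $\frac{4H(X)+3H_\infty(X)}6$ exponent.

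\textbf{Main obstacle.} The hardest part is making the incidence count genuinely weighted so that the Shannon entropy $H(X)$, not a Rényi entropy, appears in the main term. I would handle this by restricting to a set $S$ of atoms of probability at least about $2^{-H(X)-C}$ and total mass $\ge c$. Such a set is guaranteed with constant loss by an entropy-typicality argument applied to $X$ and its copies jointly, via $H(X+X')\ge H(X+X'\mid\text{event})-O(1)$ for an event of constant probability. I would then apply the unweighted Stevens--de Zeeuw bound on $S$, using $\abs S\le 2^{H+C}$, with the max-probability $2^{-m}$ controlling the multiplicity of each point. The characteristic condition is precisely the hypothesis the incidence theorem needs, namely that the point set is small compared with $p^{2}$.
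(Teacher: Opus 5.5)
\textbf{There is a genuine gap: the proposal never reaches an actual inequality, and both of its devices for bringing in Shannon entropy fail.}

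Your Step 2 display is not a well-formed estimate. Its bookkeeping also points the wrong way: a bound $2H(X+X')+2H(XX')\ge 4H+3m$ would give $\max\ge\frac{4H+3m}{4}$, not $\frac{4H+3m}{6}$.

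\emph{Collision entropies.} Lower-bounding both $H(X+X')$ and $H(XX')$ by collision entropies cannot work, whatever incidence bound you use afterwards. Let $X$ be uniform on $A=\{1,\dots,n\}\cup\{2^{n+1},\dots,2^{2n}\}$, so $H(X)=H_\infty(X)=\log(2n)$. Counting only quadruples from the arithmetic half gives $E_+\ge(2n)^{-4}\#\{(a,b,c,d)\in\{1,\dots,n\}^4:a+b=c+d\}\ge\frac1{24n}$. The geometric half gives $E_\times\ge\frac1{24n}$ in the same way. Hence $\max\{-\log E_+,-\log E_\times\}\le\log n+O(1)$, far below $\frac76\log(2n)-K_2$.

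\emph{Typicality.} This step rests on the false principle $H(W)\ge H(W\mid E)-O(1)$ for events of constant probability. Conditioning only gives $H(W)\ge\P(E)\,H(W\mid E)$, which is a multiplicative loss: take $W$ constant with probability $\frac12$ and uniform on $N$ points otherwise. Nor need a set of atoms of probability $\ge 2^{-H-C}$ with mass bounded below exist, even when $H_\infty(X)>\frac23H$. Mix a uniform law on $N$ points with weight $1-\epsilon$ and a uniform law on $N^{0.7}$ other points with weight $\epsilon=\frac{20C}{3\log N}$. Then only the second piece qualifies, and its mass tends to $0$.

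\textbf{The missing idea} is the route of \cite{GGK}, as described in the introduction and in \cref{rmk:erratum}. Pass through the auxiliary quantity $H(X(Y+Z))$, with $X,Y,Z$ i.i.d., and bound it from both sides.

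\emph{Lower bound.} \cref{lem:vadhan} with parameter $H_\infty(X)$ (after rounding) writes $X$ \emph{exactly} as a mixture of uniform laws on sets of size about $2^{H_\infty(X)}$, so no typical set is needed. Condition on the mixture labels and apply the point--plane bound \cref{thm:incidence} to collisions $a(b+c)=a'(b'+c')$, with points $(a,b',ac)$ and planes $bx-a'y+z=a'c'$. This gives $H(X(Y+Z))\ge\frac32H_\infty(X)-O(1)$. Here $\abs{P}=2^{3H_\infty(X)}$, which is exactly why the characteristic hypothesis reads $H_\infty(X)\le\frac23\log p-K_1$. In your point--line setup the point set would be controlled by $H$ or by the sum and product entropies, not by $H_\infty$ alone.

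\emph{Upper bound.} Functional submodularity (\cref{prop:upper}, \cref{cor:upper-zf}) gives
\[
H(X(Y+Z))\le 2H(XX')+H(X+X')-2H+p_0\paren{2H-H(X+X')} .
\]
This is the only place Shannon entropy enters.

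\emph{Combining.} The two bounds give $3\max\{H(X+X'),H(XX')\}\ge 2H+\frac32H_\infty(X)-2p_0H-O(1)$, which is the $\frac{4H+3H_\infty}{6}$ exponent. The $p_0$ term is removed by the case split in \cref{rmk:erratum}. If $H_\infty\le\frac23H$ the claim is trivial. Otherwise $p_0H\le H2^{-2H/3}=O(1)$.
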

The second result specializes to real-valued random variables, with a slightly larger ``$\delta$'' value of $\frac15$ instead of $\frac16$.
\begin{theorem}[\protect{\cite[Theorem 1.4]{GGK}}]\label{thm:GGK_1_4}
There exists an absolute constant $K_3$ such that the following holds. If $X$
is a finitely supported real-valued random variable with $\P(X=0)=0$, and $X'$
is an independent copy of $X$, then
\[
  \max\set{H(X+X'),\,H(XX')} \ge \frac{4H(X)+2H_\infty(X)}5
  -K_3\log H_\infty(X).
\]
\end{theorem}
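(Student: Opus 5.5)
The plan is to prove a weighted, entropic version of an Elekes/Solymosi-type incidence bound and then take a convex combination. First come two harmless reductions. Write $X=\sigma\abs X$ with $\sigma\in\set{\pm1}$. Conditioning on the signs of $X$ and $X'$ changes all entropies involved by $O(1)$. This lets me assume $X>0$; the hypothesis $\P(X=0)=0$ is exactly what makes this split and the ratios below well defined. Next, I use a dyadic decomposition of the atom masses $p(x)=\P(X=x)$. Only $O(\log H_\infty(X))$ dyadic classes carry non-negligible mass, since masses below $2^{-CH(X)}$ contribute little. So by pigeonholing and a chain-rule argument I can pass to a conditional law that is essentially uniform on a set $A$. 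This costs $O(\log H_\infty(X))$, which is the source of the error term. The min-entropy enters because every atom has mass at most $2^{-H_\infty(X)}$. This yields $\log\abs A\ge H_\infty(X)$ for the relevant piece, while the surviving piece still carries a fraction of the Shannon entropy that I must track.

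The core step bounds the multiplicative collision probability. Let $X_1,\dots,X_4$ be independent copies. Since Shannon entropy dominates collision entropy,
\[
  H(XX')\ \ge\ -\log\P(X_1X_2=X_3X_4),
\]
and $\P(X_1X_2=X_3X_4)=\sum_\lambda q_\lambda^2$, where $q_\lambda=\P(X_1/X_3=\lambda)$. I will bound $\sum_\lambda q_\lambda^2$ as in Solymosi's proof of \cref{thm:solymosi}. After dyadically pigeonholing the values $q_\lambda$ (again losing $O(\log H_\infty)$), order the surviving slopes $\lambda_1<\dots<\lambda_k$. For consecutive slopes, vector sums of points on the lines $y=\lambda_i x$ and $y=\lambda_{i+1}x$ land in disjoint cones. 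They therefore yield distinct values of $(X_1+X_2,\,X_3+X_4)$, which gives a lower bound on the collision/support size of the pair of sums. Converting this back to Shannon entropy of $X+X'$ is done with the entropic Ruzsa-type inequalities such as \eqref{eq:entropic_Ruzsa_triangle} together with the near-uniformity from the first step.

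Combining the two steps gives a linear inequality of the form
\[
  \alpha H(X+X')+\beta H(XX')\ \ge\ 4H(X)+2H_\infty(X)-O(\log H_\infty(X)),\qquad \alpha+\beta=5.
\]
The mixture of $H$ and $H_\infty$ arises because the uniform piece only certifies $H_\infty$-many bits, while the remaining entropy is recovered through the trivial bounds $H(X+X'),H(XX')\ge H(X)$. Bounding the left side by $5\max\set{H(X+X'),H(XX')}$ finishes the proof.

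The main obstacle is the middle step. Solymosi's argument counts \emph{distinct} sums, but the entropy statement needs a Shannon (or at least collision) lower bound for $X+X'$. A weighted point set may concentrate the sums' mass even though many distinct sums exist. I would first try to handle this by working entirely inside a uniform dyadic slice, so that counting and entropy agree up to $O(\log H_\infty)$. If this is too lossy, I would pass to Rényi-$2$ entropies throughout and use the standard comparison between the Shannon entropy of a sum and the collision entropy of the summands on near-uniform pieces.
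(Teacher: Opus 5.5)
You should know first that the paper does not prove this statement; it quotes it from \cite{GGK}. \cref{rmk:erratum} indicates that there it comes from combining a lower bound with an upper bound of the same functional-submodularity type as \cref{prop:upper} (\cite[Lemmas 4.2, 4.3]{GGK}), the same template as \cref{sec:incidence,sec:upper}. The exceptional event $\set{X+Y=0}\cup\set{Z+W=0}$ costs only $p_{\max}H=O(1)$ when $H_\infty(X)>\frac12H$, the only regime where the theorem has content.

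Your Solymosi route is instead the route of \cref{sec:solymosi,sec:eer}, and it has a genuine gap exactly at the step you flag. Solymosi's sector argument bounds $E_\times(A)$ by the \emph{cardinality} $\abs{A+A}^2$. A small $H(X+X')$ does not make $\abs{A+A}$ small, even when $X$ is exactly uniform on $A$. Take $A=\set{1,\dots,n}\cup R$ with $R$ a set of about $\sqrt n$ generic reals. Then $H(U_A+U_A')=\log n+O(1)$, while $\abs{A+A}\ge n\abs R\approx n^{3/2}$. So Solymosi's bound applied to this slice gives nothing beyond the trivial $H(U_AU_A')\ge\log\abs A$. Hence ``counting and entropy agree up to $O(\log H_\infty)$ on a uniform slice'' is false.

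The R\'enyi-$2$ fallback does not help either. A small $H(X+X')$ only yields large additive energy, and passing from energy to a subset with small sumset is Balog--Szemer\'edi--Gowers. Its polynomial losses multiply $H(X+X')-H$ by a large constant. What is needed is a linear-loss way to discard the few bad points. The paper does this by splitting each layer into level sets of the pointwise deficits $d_\pm$ and bounding $\log\abs{D+D}$ via a Ruzsa-type witness and Gibbs' inequality (\cref{lem:gibbs}). That argument also consumes $H(X-X')$, and hence the entropic Ruzsa triangle inequality.

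Even so, the outcome is only $\widetilde m\ge2-6\widetilde u$ (\cref{thm:eer}), giving $\max\set{H(X+X'),H(XX')}\ge(\frac87-o(1))H$. This is below $\frac{4H+2H_\infty}5$ whenever $H_\infty>\frac67H$. From a bound of the form $\widetilde m\ge2-c\widetilde u$ alone you would need $c\le4$, and improving $c=6$ is listed as open in \cref{sec:open}.

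Relatedly, your final inequality with $\alpha+\beta=5$ is not produced by any step of your outline. For instance, it would follow from $2H(X+X')+H(XX')\ge2H+2H_\infty-O(\log H_\infty)$ plus the trivial bound $H(X+X')+H(XX')\ge2H$. But for uniform $X$ that first inequality is the entropic Solymosi inequality of \cref{conjecture:entropic_solymosi}, which is not available.

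Your preliminary reductions are also not harmless. With $w_i=\P(I=i)$ and $V_i$ the conditional law on piece $i$, conditioning on signs or dyadic labels gives only $H(X+X')\ge\sum_{i,j}w_iw_jH(V_i+V_j')$, and likewise for products. This is a mixture containing cross terms between different layers and between opposite signs. Restricting to a single piece costs a multiplicative factor $w_i^2$, not $O(1)$ or $O(\log H_\infty)$. Moreover, the number of dyadic classes of non-negligible mass can be of order $H$ even when $H_\infty>\frac12H$. What is $O(\log H)$ is only the entropy of the label (\cref{lem:layers}(b)). So the core inequality must be proved in an asymmetric form for pairs of pieces of different sizes, only one of them sign-pure, and then averaged with weights $w_iw_j$. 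This is what \cref{thm:asym-solymosi}, \cref{prop:h2-product} and \cref{lem:budgets} do.
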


Both bounds involve the min-entropy.
Since $H_\infty(X)$ may be $O(1)$ while $H(X)$ is arbitrarily large (a single atom of constant mass suffices, so the $\delta\leq\frac13$ example possesses this property), neither implies \cref{conjecture:main}.

Our main theorem answers \cref{conjecture:main} affirmatively with
the explicit constant $\delta=\frac17$.

\begin{theorem}[Entropic sum-product phenomenon]\label{thm:main}
Let $X,X'$ be independent and identically distributed, discrete,
real-valued random variables with finite entropy $H(X)<\infty$.
There are absolute constants $c_1,c_2$ and $H_0$ such that, if $H(X)\ge H_0$, then
\begin{equation}\label{eq:main-effective}
  \max\set{H(X+X'),\,H(XX')} \ge \frac87H(X)-c_1\log H(X)-c_2
  -\frac12\log C_R,
\end{equation}
where $C_R\ge1$ is the absolute constant of the point-plane incidence bound in \cref{def:KR}. In particular,
\begin{equation}\label{eq:main-asymptotic}
  \max\set{H(X+X'),\,H(XX')} \ge \paren{1+\frac17-o(1)}H(X).
\end{equation}
\end{theorem}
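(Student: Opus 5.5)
We plan to reduce to near-uniform pieces and then run an entropic version of Solymosi's multiplicative-energy argument, with a point-plane incidence bound (the constant $C_R$ of \cref{def:KR}) controlling the additive side.

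\textbf{Step 1: uniformization by dyadic level sets.} Write $p(x)=\P(X=x)$ and set $A_k=\set{x: 2^{-k-1}<p(x)\le 2^{-k}}$. Only $O(H(X))$ levels carry non-negligible mass; the rest can be discarded at the cost of a small loss. Let $K$ be the random level index of $X$. Then $H(K)=O(\log H(X))$, and
\[
  H(X)\le H(K)+H(X\mid K)\le O(\log H(X))+\sum_k \P(K=k)\log\abs{A_k}.
\]
Conditioned on $K=k$, $X$ is within a factor $2$ of uniform on $A_k$.

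For independent copies, condition on the pair $(K,K')$. Since $X+X'$ determines nothing about $(K,K')$, we have
\[
  H(X+X')\ge H(X+X'\mid K,K')-H(K,K').
\]
The same holds for $XX'$. This reduces the problem, up to $O(\log H(X))$, to sums and products of independent near-uniform variables $U_k,U_{k'}$ on sets $A_k,A_{k'}$ with weights $w_k=\P(K=k)$. We also split by sign and remove $0$; $\P(X=0)$ only costs a bounded amount, and negative parts are handled symmetrically.

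\textbf{Step 2: entropic Solymosi bound for near-uniform pairs.} For near-uniform independent $U$ on $A$ and $V$ on $B$, I would prove an entropic analog of \cref{thm:solymosi}. The first ingredient is a Cauchy--Schwarz/collision bound:
\[
  H(UV)\ge \log\abs A+\log\abs B-\log\bigl(E(A,B)/(\abs A\abs B)\bigr)-O(1),
\]
where $E$ is the multiplicative energy. Collision entropy is at most Shannon entropy, and near-uniformity costs only $O(1)$.

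The second ingredient is to bound $E(A,B)$ in terms of sums, with a $\log$ loss from dyadic pigeonholing of ratio classes. Solymosi's argument compares lines through the origin and takes vector sums of neighbouring slopes. For asymmetric $A\ne B$ this does not directly give a $|A+B|$ bound, which is why one needs the point-plane incidence bound. We would use it to bound the number of solutions of $a_1b_1=a_2b_2$ with additive structure, yielding
\[
  E(A,B)\lesssim C_R^{1/2}\abs{A+B}^{\alpha}\cdots.
\]
Here the exponents are chosen so that, combined with collision entropy, Shannon entropy $H(U+V)\le\log\abs{A+B}$ can be replaced by entropy. To make this work we would further restrict to subsets of $A+B$ where $U+V$ is near-uniform, again by dyadic splitting, at a cost of $O(\log H)$. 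This is the main obstacle: the combinatorial statement uses $\abs{A+B}$, but we need $H(U+V)$, which can be much smaller. I would try first to dyadically decompose the sum distribution and apply incidence bounds to popular sums only, an entropic "popular sums" version of Solymosi.

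\textbf{Step 3: recombination and optimization.} Summing over $(k,k')$ with weights $w_kw_{k'}$ and using concavity of entropy (mixtures) gives
\[
  2H(X+X')+H(XX')\gtrsim \sum_{k,k'} w_kw_{k'}\,\Phi(\log\abs{A_k},\log\abs{A_{k'}}),
\]
for some function $\Phi$. Because the pieces have unequal sizes, the bound is weaker than the symmetric $4\log\abs A$ one. Optimizing the worst case over the distribution of level sizes, for instance one heavy atom plus a large uniform part as in the $\frac43$ example, should yield $3\max\ge \frac{24}{7}H(X)$, i.e.\ the coefficient $\frac87$. As a warm-up, the cruder argument using \cref{thm:GGK_1_4} on each uniform piece (where $H_\infty\approx H$) gives $\frac{10}{9}$, which serves as a check on the recombination step.
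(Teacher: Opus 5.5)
Your outline gets several pieces right: dyadic flattening at $O(\log H)$ cost, sign splitting, and collision entropy against multiplicative energy as in \cref{prop:h2-product}. Your claim that the atom at $0$ is cheap is also right for a target of the form $2H(X+X')+H(XX')\ge cH$ with $c\le4$. However, the step that carries the theorem is missing.

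After conditioning on $(K,K')$, a small $H(X+X')$ only tells you that the weighted average of the Shannon entropies of \emph{cross} sums $H(U_k+U_{k'})$ is small. Solymosi's bound instead needs the \emph{cardinalities} $\abs{A_k+A_k}$ of self-sumsets; for unequal sets it holds as $E_\times(A,B)\le 8L\abs{A+A}\abs{B+B}$ with no incidence bound (\cref{thm:asym-solymosi}). The diagonal pairs carry total weight $\sum_k w_k^2$, which can be negligible. Moreover, a small $H(U_k+U_k')$ does not bound $\abs{A_k+A_k}$ at all: a few stray points inflate the sumset without moving the entropy.

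You name this yourself as the main obstacle, but offer only an unspecified ``popular sums'' idea with incidence bounds and undetermined exponents $\alpha$. Nothing in the proposal produces a numerical constant, so the $\frac{24}{7}$ of Step 3 is asserted rather than derived. The $\frac43$ example you invoke (zero atom plus an AP) has $2H(X+X')+H(XX')\approx4H$, so it cannot be what forces $\frac{24}{7}$.

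The warm-up is not a valid check either. \cref{thm:GGK_1_4} is about i.i.d.\ copies, so applied piecewise it sees only diagonal pairs. Even there, with $H_\infty\approx H$ it would give $\frac65$, not $\frac{10}9$.

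The paper crosses this gap with a hybrid Ruzsa triangle inequality, \cref{lem:gibbs}. Take $D$ inside a flat layer $A_i$ and a partner layer $j$. Let $S$ be uniform on $D+D$ with fixed representations $s=x_s+y_s$, and set $U=x_S+Z$, $V=y_S-Z$ with $Z\sim V_j$ independent. Then $\log\abs{D+D}+H(Z)=H(U,V)\le H(U)+H(V)$. Gibbs' inequality against the laws of $V_i\pm V_j'$ bounds $H(U)$ and $H(V)$ by the cross entropies $H(V_i\pm V_j')$, plus the suprema over $D$ of pointwise deficits $d_\pm$.

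Refining each layer into $\beta$-adic level sets of $d_\pm$ (\cref{lem:levelsets}) makes those suprema comparable to averages. The partner map of \cref{cor:partner} keeps the total excess at most $3\widetilde u\widetilde H+O(\log\widetilde H)$; the difference entropy is controlled by $H(\widetilde X-\widetilde X')\le2H(\widetilde X+\widetilde X')-\widetilde H$. Feeding the resulting doubling defects into \cref{prop:h2-product} gives $\widetilde m\ge2-6\widetilde u-o(1)$. Its crossing with $1+\widetilde u$ is exactly at $\widetilde u=\frac17$, giving $\frac87$.

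The paper then adds a separate bound $\widetilde m\ge\frac54-\frac34\widetilde u-o(1)$. This is where $C_R$ enters: through a lower bound on $H(\widetilde X(\widetilde Y+\widetilde Z))$ plus a submodularity upper bound, not through energy estimates. It reincorporates the zero atom via \cref{lem:transfer} and \cref{prop:master}.

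Without some mechanism that turns entropic sum information into doubling bounds for pieces of the layers, your plan does not yet give any coefficient above $1$.
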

As noted in \cref{rmk:characteristic}, some of our arguments continue to hold for discrete random variables over an arbitrary field of characteristic 0, and in particular for such fields, \eqref{eq:main-asymptotic} holds with $\delta=\frac19$ instead of $\delta=\frac17$.
In \cref{rmk:characteristic} we also discuss what must be changed to address fields of positive characteristic, and we believe that our framework should allow for some result of the form \eqref{eq:main-asymptotic} to hold for such fields (likely requiring $p$ being sufficiently large), perhaps with a worse $\delta<\frac19$.

\Cref{thm:main} is proved in two stages. The body of the paper treats finitely
supported $X$, the theorem being restated and proved in that case as
\cref{thm:main-body}, where one may take $H_0=5$, $c_1=18$ and $c_2=63$; the
general discrete case immediately follows from it by the simple truncation argument of
\cref{lem:truncation}, and holds with $H_0=6$, $c_1=18$ and $c_2=65$, restated as \cref{cor:main-general}.

The main technical steps are the following two results, which we call
\emph{linear entropic Elekes--Ruzsa theorems} as explained below, and which we believe will be of independent interest.
Write $\widetilde X$ for a discrete real-valued random variable with $\P(\widetilde X=0)=0$; we call such a law \emph{zero-free}.
\begin{theorem}[Small-doubling linear entropic Elekes--Ruzsa]\label{thm:eer}
Let $\widetilde X$ be zero-free with entropy $H(\widetilde X)\in[5,\infty)$, let $\widetilde X'$ be an independent copy
of $\widetilde X$, and define
\[
  \widetilde u=\frac{H(\widetilde X+\widetilde X')}{H(\widetilde X)}-1,
  \qquad
  \widetilde m=\frac{H(\widetilde X\widetilde X')}{H(\widetilde X)} .
\]
Then
\begin{equation*}
  \widetilde m \ge 2-6\widetilde u-\vartheta(H(\widetilde X)),
  \qquad
  \vartheta(x)=\frac{18\log x + 68}{x}.
\end{equation*}
\end{theorem}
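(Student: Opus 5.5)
We plan to prove \cref{thm:eer} in three steps: reduce to uniform pieces, apply an Elekes-type incidence argument to each piece, and transfer the resulting product-set bound back to entropy. Write $h=H(\widetilde X)$. The relation to prove is linear: $\widetilde m\ge 2-6\widetilde u$ up to the error $\vartheta(h)$. This is the form one expects from an Elekes--Ruzsa statement $|AA|\gtrsim |A|^{2}/K^{O(1)}$ for sets with $|A+A|=K|A|$.

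\textbf{Step 1: uniformization.} Split the support of $\widetilde X$ into dyadic level sets
\[
  A_j=\set{x:2^{-j-1}<\P(\widetilde X=x)\le 2^{-j}}.
\]
Only $O(h)$ levels matter. The remaining levels have negligible mass, and we truncate them first, at a cost of $O(1)$ entropy. With $J$ the index of the level, we have $H(J)\le\log O(h)$ and $H(\widetilde X)=H(\widetilde X\mid J)+H(J)$. Conditioned on $J=j$, the variable $\widetilde X$ is within a factor $2$ of uniform on $A_j$, so $H(\widetilde X\mid J=j)=\log|A_j|+O(1)$. Conditioning can only reduce entropy, so $H(\widetilde X+\widetilde X')\ge H(\widetilde X+\widetilde X'\mid J,J')$, and the same holds for the product. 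Hence it suffices to prove a version of the inequality that averages well over pairs $(j,k)$ of pieces. The natural statement to aim for concerns two near-uniform independent variables $U,V$ on sets $A,B$:
\[
  H(UV)\ \ge\ 2\cdot\tfrac{H(U)+H(V)}{2}-6\,\bigl(H(U+V)-\tfrac{H(U)+H(V)}{2}\bigr)-O(\log h).
\]
The coefficients must be chosen so that the inequality is linear in all entropy terms. Averaging it over $(J,J')$ with weights $p_jp_k$ then yields the theorem, and the $O(\log h)/h$ losses are absorbed into $\vartheta$.

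\textbf{Step 2: the uniform bipartite case.} For near-uniform $U,V$, define the additive energy-type quantity $\sigma=H(U+V)$. The entropic small doubling, via the AEP, gives a large "popular sums" set $S$ with $|S|\le 2^{\sigma+O(\log h)}$ carrying most of the mass of $U+V$. Now run the Elekes point--line incidence argument, or better a point--plane version using the constant $C_R$ from the Rudnev bound \cref{def:KR}, which is where $\frac12\log C_R$ enters the main theorem. The lines are $y=a(x-b)$ with $a\in A$ and $b\in B$. The point set is $S\times P$, where $P$ is a popular set of products for $UV$ of size about $2^{H(UV)}$. Each line contains many points, and Szemer\'edi--Trotter then bounds
\[
  |A||B|\cdot\text{(incidences per line)}\lesssim (|S||P|)^{2/3}(|A||B|)^{2/3}.
\]
Taking logarithms gives a linear inequality among $\log|A|$, $\log|B|$, $\sigma$ and $H(UV)$. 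We expect constants of the form above, with factor $6$ on the doubling, perhaps after symmetrization using the entropic Ruzsa triangle inequality \eqref{eq:entropic_Ruzsa_triangle} to pass between $U+V$ and $U+U'$.

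\textbf{Main obstacle.} The hard step is converting Shannon-entropy doubling of the cross sums $U+V$ into a genuine combinatorial "popular sum" structure, while losing only $O(\log h)$ rather than a constant fraction of $h$. The sum $U+V$ need not be near-uniform, so we would apply the dyadic splitting a second time, now to the law of $U+V$. Concretely, we would use the Cauchy--Schwarz/energy inequality $H(U+V)\ge \log\bigl(|A||B|^{\,}\bigr)^{2}/E_+(A,B)$-type bounds in the Rényi-$2$ direction, paired with a Balog--Szemer\'edi-free counting of incidences weighted by $\P(U+V=s)$. The cleanest route is to count incidences with multiplicity directly via a weighted Szemer\'edi--Trotter inequality: weight each point by its probability and apply Hölder. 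Doing this avoids passing to subsets, and the linear coefficients $2$ and $6$ should then fall out of the exponents $2/3$ and the Hölder step.
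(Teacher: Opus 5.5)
Your Step 1 reduction is sound. If the bipartite inequality $H(UV)\ge 4\paren{H(U)+H(V)}-6H(U+V)-O(\log h)$ held for the near-uniform layer pieces, you could average over $(J,J')$, using $H(\widetilde X\widetilde X')\ge H(\widetilde X\widetilde X'\mid J,J')$ and $H(\widetilde X+\widetilde X')\ge H(\widetilde X+\widetilde X'\mid J,J')$, and the theorem would follow. The gap is Step 2: the Elekes incidence argument cannot give an inequality with intercept $2$, even with no entropic losses at all. With all sets of size $N$, your display reads $N^3\ll(\abs S\abs P)^{2/3}N^{4/3}$, i.e.\ $\log\abs S+\log\abs P\ge\frac52\log N$. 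In normalized form this is $\widetilde m\ge\frac32-\widetilde u$, which is strictly weaker than $2-6\widetilde u$ for every $\widetilde u<\frac1{10}$ — exactly the small-doubling regime this theorem is for. No weighting or H\"older step turns the exponent $\frac23$ into the coefficients $2$ and $6$. The point--plane route does not help either. In the paper it yields \cref{thm:baseline}, whose intercept is $\frac54$, and that is where $C_R$ enters the main theorem; the $\vartheta$ of \cref{thm:eer} contains no $C_R$. What reaches intercept $2$ is a multiplicative-energy bound that is sharp at zero doubling. The paper uses Solymosi's slope-ordering argument, $E_\times(A,B)\le 8L_\times(A,B)\abs{A+A}\abs{B+B}$ for sign-pure $A$. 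Through $H(\widetilde X\widetilde X')\ge H(\widetilde X\widetilde X'\mid R,R')\ge\sum_{r,t}\P(R=r)\P(R'=t)H_2(V_rV_t')$ it obtains $H(\widetilde X\widetilde X')\ge 2\sum_r\P(\widetilde X\in P_r)\,g(P_r)-O(\log h)$, where $g(P)=2\log\abs P-\log\abs{P+P}$ (\cref{prop:h2-product}).

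The step you call your ``main obstacle'' is also a genuine gap, and your proposed fix does not close it. There is no AEP for a single draw. On flat layers the excess $-\log r_+(U+V)-(\max\set{a,b}-1)$ is nonnegative with mean $e^+$, but it need not concentrate. So a popular-sum set of size $2^{\sigma+O(\log h)}$ may carry only a small fraction of the mass, and any Markov-type truncation costs a constant factor in either mass or the doubling coefficient. This is why the paper lists varentropy control as an open problem. Moreover, once you switch to an energy bound, the quantity to control is a combinatorial doubling $\log\abs{D+D}$, which small entropic doubling does not bound directly. The paper never passes to popular sums. \cref{lem:gibbs} bounds $\log\abs{D+D}$ for any $D\subseteq A_i$ using the Ruzsa-triangle coupling $U=x_S+Z$, $V=y_S-Z$ (with $S\sim\Unif(D+D)$ and $Z\sim V_{j(i)}$) together with Gibbs' inequality against the laws of $V_i\pm V'_{j(i)}$. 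This gives $\log\abs{D+D}\le a_i+\abs{a_i-a_{j(i)}}-1+\sup_{D}d_++\sup_{D}d_-$. Splitting each layer into $\beta$-adic level sets of $(d_+,d_-)$ (\cref{lem:levelsets}) then uses only the expectations $e^\pm_{i,j(i)}$. Their weighted total is at most $3\widetilde u\widetilde H+O(\log h)$ once $H(\widetilde X-\widetilde X')\le(1+2\widetilde u)\widetilde H$ is invoked. Combined with the factor $2$ from the energy step, this is the origin of $6=2\times(1+2)$.
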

\begin{theorem}[Large-doubling linear entropic Elekes--Ruzsa]\label{thm:baseline}
With the same assumptions and notation as \cref{thm:eer},
\[ \widetilde m \ge \frac54-\frac34\widetilde u-\vartheta(H(\widetilde X)), \qquad \vartheta(x) = \frac{\frac34\log(1+x)+K_R+5}x, \]
where $K_R\geq2$ is an absolute constant defined in \cref{def:KR}.
\end{theorem}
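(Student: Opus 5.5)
The plan is to reduce \cref{thm:baseline} to a combinatorial multiplicative-energy estimate coming from the point-plane incidence bound of \cref{def:KR}, and then transfer it to entropies by uniformization. The target inequality $\widetilde m\ge \frac54-\frac34\widetilde u$ reads $H(\widetilde X\widetilde X')\ge 2H(\widetilde X)-\frac34H(\widetilde X+\widetilde X')$. In the uniform case it becomes $|A\cdot A|\ge |A|^2/|A+A|^{3/4}$. By Cauchy--Schwarz this would follow from an energy bound of the shape
\[
E^\times(A)\le C_R^{1/2}\,|A|^2\,|A+A|^{3/4}.
\]
I expect to prove such a bound via Rudnev-type point-plane incidences, counting solutions to $a_1a_2=a_3a_4$ after lifting to the sumset through the identities $a_1(b+c)=\dots$. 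Zero-freeness is what makes the ratio and product maps injective where needed.

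\textbf{Step 1 (entropy to energy).} Define $E=\sum_y \P(\widetilde X\widetilde X'=y)^2$. By Jensen, $2^{-H(\widetilde X\widetilde X')}\le E$. So it suffices to bound a weighted multiplicative energy
\[
\sum_{a_1a_2=a_3a_4}p(a_1)p(a_2)p(a_3)p(a_4)
\]
from above by $2^{-2H(\widetilde X)+\frac34H(\widetilde X+\widetilde X')}$, up to factors $2^{O(\log H)}$.

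\textbf{Step 2 (uniformization).} Partition the support into dyadic level sets $A_i=\{x: p(x)\in(2^{-i-1},2^{-i}]\}$. Truncating levels with negligible mass (this uses $H\ge5$ and the $\log(1+x)$ slack), there are $O(H)$ relevant levels. The weighted energy is then a sum of $O(H^4)$ cross energies $E^\times(A_i,A_j,A_k,A_l)$ with weights $\approx 2^{-i-j-k-l}$. By pigeonholing, the dominant quadruple costs only a factor $H^{O(1)}$, which is absorbed into $\vartheta$. For each piece we also need that $|A_i+A_j|$ is controlled by $H(\widetilde X+\widetilde X')$. I would argue that the sum $\widetilde X+\widetilde X'$ conditioned on $(\widetilde X,\widetilde X')\in A_i\times A_j$ is nearly uniform on $A_i\times A_j$. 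Its entropy is at most $H(\widetilde X+\widetilde X')+O(\log H)-\log\P(A_i)\P(A_j)$, and this yields an essential-sumset bound: a large subset of pairs has popular sums inside a set of size $2^{H(\widetilde X+\widetilde X')}\cdot(\text{mass correction})$.

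\textbf{Step 3 (incidence bound, the main obstacle).} I must prove the asymmetric, partial-sumset version of the energy bound,
\[
E^\times(A_i,A_j,A_k,A_l)\lesssim C_R^{1/2}\cdot(\text{sizes})\cdot|S|^{3/4},
\]
where $S$ is the set of popular sums rather than a full sumset. The first approach I would try is to encode $a_1/a_3=a_4/a_2=\lambda$ as incidences between points $(s,a,\cdot)$ built from sums $s=a+b\in S$ and planes indexed by triples. Then I would apply the point-plane theorem of \cref{def:KR}, which has constant $K_R$ and requires a collinearity check: at most $O(\sqrt{\#\text{points}})$ points on a line, guaranteed since fibres of sums are bounded by $|A|$. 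Afterwards I would optimize the dyadic weights so that the exponents combine to exactly $\frac54$ and $\frac34$.

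I expect the delicate part to be this step with non-uniform weights. The incidence theorem only sees the sets, not the masses, so the mass bookkeeping must be matched so that each level set contributes $\log|A_i|\approx i$ consistently. Once this is done, taking logarithms, dividing by $H(\widetilde X)$, and collecting the $\frac34\log(1+H)$, $K_R$, and constant losses gives the claimed $\vartheta$.
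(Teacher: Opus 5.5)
There is a genuine gap, and it is already in Step 1. Bounding $H(\widetilde X\widetilde X')$ below by $-\log E$, with $E$ the \emph{global} collision probability, brings back exactly the min-entropy obstruction the theorem is designed to avoid. If $a$ is the heaviest atom, then $E\ge\P(\widetilde X\widetilde X'=a^2)^2\ge\widetilde p_{\max}^4$, so $-\log E\le 4H_\infty(\widetilde X)$. Your sufficient condition, however, requires $-\log E\ge 2\widetilde H-\frac34H(\widetilde X+\widetilde X')-O(\log\widetilde H)\ge\frac12\widetilde H-O(\log\widetilde H)$, because $H(\widetilde X+\widetilde X')\le 2\widetilde H$. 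Take $\widetilde X$ equal to $1$ with probability $\frac12$ and uniform on $n$ generic positive reals otherwise. For large $n$ your reduction demands something false, although the theorem holds there. No pigeonholing over dyadic quadruples in Step 2 can help, since the quantity you are bounding really is at least $\frac1{16}$.

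To keep Shannon entropy you must condition on layer labels \emph{before} passing to collision entropy, i.e.\ use $H(\widetilde X\widetilde X')\ge\sum_{i,j}w_iw_jH_2(V_iV_j')$ as in \cref{prop:h2-product}. But then you need sumset information for every pair of layers, and $H(\widetilde X+\widetilde X')$ supplies it only on average: $\sum_{i,j}w_iw_jH(V_i+V_j')\le H(\widetilde X+\widetilde X')$. Your pointwise claim $H(\widetilde X+\widetilde X'\mid \widetilde X\in A_i,\widetilde X'\in A_j)\le H(\widetilde X+\widetilde X')+O(\log H)-\log(w_iw_j)$ is false in general. A layer of constant mass but much larger size and generic additive structure violates it, so the dominant quadruple you pigeonhole may be exactly a bad one. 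Step 3 is also unspecified: no point--plane configuration is given. Moreover, passing from entropy to a set of ``popular sums'' is a Balog--Szemer\'edi--Gowers-type step whose polynomial losses would become constant multiples of $\widetilde u\widetilde H$, which would destroy the exact coefficients $\frac54$ and $\frac34$.

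The missing idea is to avoid energy--sumset estimates altogether and route through $W=\widetilde X(\widetilde Y+\widetilde Z)$, with $\widetilde Y,\widetilde Z$ further independent copies. The paper bounds $H(W)$ from both sides:
\begin{itemize}
\item \emph{Lower bound.} Uniformize each of $\widetilde X,\widetilde Y,\widetilde Z$ separately (dyadic layers, then Vadhan's lemma, \cref{lem:unif-decomp}), at cost $O(\log\widetilde H)$ each. Condition on the three labels and apply the incidence bound to each triple of exactly uniform sets: $H(U_{G_1}(U_{G_2}+U_{G_3}))\ge\frac12\log(|G_1||G_2||G_3|)-K_R$ for arbitrary unbalanced sizes (\cref{thm:expander}). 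Only cardinalities enter, so collision entropy is used only where it is harmless. The degenerate triples $G_2=\{b\}$, $G_3=\{-b\}$ cost at most $\frac12\widetilde\pi\widetilde H$, where $\widetilde\pi=\P(\widetilde X+\widetilde X'=0)$. This gives $H(W)\ge\frac32\widetilde H-\frac12\widetilde\pi\widetilde H-O(\log\widetilde H)-K_R$.
\item \emph{Upper bound.} Apply data processing to $R=(\widetilde X\widetilde Y,\widetilde X\widetilde Z)$ and $S=(\widetilde X,\widetilde Y+\widetilde Z)$, which jointly determine $(\widetilde X,\widetilde Y,\widetilde Z)$ because $\widetilde X\ne0$. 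This yields $H(W)\le 2H(\widetilde X\widetilde X')+H(\widetilde X+\widetilde X')-2\widetilde H$ (\cref{cor:upper-zf}). The sum entropy enters only through this exact inequality, never through a sumset cardinality.
\end{itemize}
Comparing the two gives $\widetilde m\ge\frac54-\frac12\widetilde u-\frac14\widetilde\pi-O(\log\widetilde H/\widetilde H)$. The bound $\widetilde\pi\le\widetilde p_{\max}\le\widetilde u+1/\widetilde H$ (\cref{lemma:maxatom}) then turns $\frac12$ into $\frac34$. So a quarter of the $\frac34$ comes from the zero-sum degeneracy, which your plan never encounters. Finally, \cref{lem:truncation} removes the finite-support assumption.
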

We refer to the former result as small-doubling and the latter as large-doubling because the former yields a better bound for $\widetilde u\leq\frac17$ while the latter yields a better bound for $\widetilde u\geq\frac17$.
Note that as $H(\widetilde X)\to\infty$, either of these two results already proves that $\max\set{\widetilde u+1,\widetilde m}\geq\frac87-o(1)$, i.e., already proves the entropic sum-product phenomenon for $\delta=\frac17$ as in \cref{thm:main}, for zero-free laws.
Unfortunately, as discussed in \cref{sec:main}, reincorporating the possibility of an atom at zero has a significant detrimental effect on $\delta$, and it is in fact the \emph{combination} of these two results that miraculously yields $\delta=\frac17$ for general laws, i.e., \cref{thm:main}; for example, only using \cref{thm:baseline} will yield $\delta=\frac19$ after reincorporating zeros (see \cref{rmk:1_9}).

For finitely supported $\widetilde X$, \cref{thm:eer} is restated and proved as \cref{cor:eer-eff}, where $\vartheta(x)=\frac{18\log x+57}{x}$ for $x\ge4$, and then by the truncation argument of \cref{lem:truncation} extended to the general case.
Similarly, the finite support version of \cref{thm:baseline} is proved as \cref{prop:baseline}, and then extended to the general case by \cref{lem:truncation}.

The name refers to the following theorem of Elekes and Ruzsa \cite{ElekesRuzsa2003}, the combinatorial statement of the same shape: ``few sums force many products.''
\begin{theorem}[{\cite{ElekesRuzsa2003}}; see also \cite{Elekes1997}]
There is an absolute constant $c>0$ such that, for every finite
$A\subset\R$ with $\abs{A+A}\le K\abs A$,
\[
  \abs{A\cdot A} \ge \frac{c\,\abs A^2}{K^4\log\abs A} .
\]
\end{theorem}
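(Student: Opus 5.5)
The plan is to deduce this from Solymosi's bound (\cref{thm:solymosi}), which is stated earlier in the paper, and not to reproduce the original incidence argument of Elekes and Ruzsa. That theorem in fact gives the stronger dependence $K^{-2}$ in place of $K^{-4}$. The only obstacle is that \cref{thm:solymosi} requires a set of \emph{positive} reals, while $A$ may contain $0$ and negative numbers. We therefore first pass to a large subset of constant sign.

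\emph{Step 1 (reduction to one sign).} Write $A=A_{>0}\cup A_{<0}\cup(A\cap\set0)$. One of $A_{>0}$ and $A_{<0}$ has size at least $(\abs A-1)/2$. If it is $A_{<0}$, replace it by $-A_{<0}$. This set consists of positive reals, satisfies $(-A_{<0})+(-A_{<0})=-(A_{<0}+A_{<0})$, and satisfies $(-A_{<0})\cdot(-A_{<0})=A_{<0}\cdot A_{<0}$. In either case we obtain a set $B$ of positive reals with $\abs B\ge(\abs A-1)/2$ and with
\[
\abs{B+B}\le\abs{A+A}\le K\abs A
\qquad\text{and}\qquad
\abs{B\cdot B}\le\abs{A\cdot A},
\]
since $B$ or $-B$ is contained in $A$. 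If $\abs A\ge 3$, then $\abs B\ge\abs A/3$.

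\emph{Step 2 (apply Solymosi).} By \cref{thm:solymosi} applied to $B$,
\[
\abs{A\cdot A}\ \ge\ \abs{B\cdot B}\ \ge\ \frac{\abs B^4}{4\ceil{\log\abs B}\,\abs{B+B}^2}\ \ge\ \frac{(\abs A/3)^4}{4\ceil{\log\abs A}\,K^2\abs A^2}\ \ge\ \frac{c'\abs A^2}{K^2\log\abs A}.
\]
Here we use $\ceil{\log\abs B}\le\ceil{\log\abs A}\le2\log\abs A$ for $\abs A\ge2$. We may assume $K\ge1$, since $\abs{A+A}\ge\abs A$ always holds, so $K^{-2}\ge K^{-4}$ and the claimed bound follows.

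\emph{Step 3 (small cases).} When $\abs A\le 2$, the right-hand side is undefined (for $\abs A=1$) or bounded by a constant (for $\abs A=2$). There $\abs{A\cdot A}\ge1$, so the claim holds after shrinking $c$. The one place requiring care is $\abs B$ in Step 1: if $\abs B=1$ the logarithm degenerates, and this happens only for $\abs A\le 3$, which is absorbed into $c$ in the same way. No step is a genuine obstacle once \cref{thm:solymosi} is available; the sign reduction is the only nontrivial bookkeeping.
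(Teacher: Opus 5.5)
Your argument is correct. The sign reduction is handled properly, including $K\ge1$ and the degenerate case $\abs B=1$, where $\ceil{\log\abs B}=0$. For $\abs A\ge4$ your chain of inequalities gives $\abs{A\cdot A}\ge\abs A^2/(648K^2\log\abs A)$, which is stronger than required. The remaining small cases are absorbed by shrinking $c$.

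Note that the paper does not prove this statement at all. It quotes it from Elekes and Ruzsa only to motivate the names of \cref{thm:eer,thm:baseline}, so the relevant comparison is with the cited original argument.

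Elekes and Ruzsa prove $\abs{A+A}^4\abs{A\cdot A}\log\abs A\gg\abs A^6$ using the Szemer\'edi--Trotter incidence theorem. The fourth power of $\abs{A+A}$ there is exactly where the $K^4$ comes from. You instead derive the statement from Solymosi's later sector/slope argument, \cref{thm:solymosi}. That theorem implies the Elekes--Ruzsa inequality: multiply it by $\abs{A+A}^2\ge\abs A^2$. It also improves $K^4$ to $K^2$.

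Given what the paper already states, your route is the shorter one. However, it is a reduction rather than an independent proof: all the content sits in Solymosi's theorem. That theorem uses the order structure of $\R$ (ordering the slopes of rays and requiring sign-purity), which is exactly why you needed the positivity reduction. The incidence route uses only a Szemer\'edi--Trotter bound, so it transfers to any setting where such a bound is available, for example $\mathbb{C}$.
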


\begin{remark}\label{rmk:ggk16}
\Cref{thm:eer} is the entropic analog, but crucially linear in the relative additive doubling $\widetilde u$.
The existing literature from additive combinatorics, i.e., Freiman-type inverse sumset theory which studies the structure of $A$ given $\abs{A+A}\leq K\abs{A}$, considers $K$ as a constant, which translates to an \emph{additive} constant in the entropic setting, rather than a \emph{multiplicative} constant, which we need to prove $\delta>0$.
The quantitative dependence on $K$ makes these results unsuitable for our setting.
An entropic result of this few-sums-many-products form was previously proven in \cite[Theorem 1.6]{GGK}, but it is of this Freiman-type additive constant form, assuming $H(X+X')\leq H(X)+C$ and that $H(X)$ is sufficiently large, depending on $C$.
\end{remark}

Much of our analysis was inspired by Solymosi's result, \cref{thm:solymosi}, establishing the combinatorial result for $\eps=\frac13$.
We were not able to prove the entropic analog of \cref{thm:solymosi}, which would immediately imply \cref{conjecture:main} for $\delta=\frac13$, and would be tight by \cite[Example 3]{LGK} showing $\delta\leq\frac13$.
We conjecture that this entropic analog is true, and thus that the optimal $\delta$ is $\frac13$.
\begin{conjecture}\label{conjecture:entropic_solymosi}
    Let $X,X'$ be independent and identically distributed, discrete, real-valued random variables with finite entropy $H(X)<\infty$.
    Then
    \begin{equation}\label{eq:entropic_solymosi}
        2H(X+X') + H(XX') \geq (4-o(1))H(X),
    \end{equation}
    where $o(1)$ denotes a term that goes to 0 as $H(X)\to\infty$.
\end{conjecture}
\cref{rmk:conj-es-zf} reduces \cref{conjecture:entropic_solymosi} to the case of $X$ being zero-free.
\cref{sec:open} further discusses this conjecture and records two further problems that would each improve $\delta$ for the entropic sum-product phenomenon.
\subsection{Proof summary and section overview}
Gavalakis, Goh, and Kontoyiannis \cite{GGK} apply a uniformization lemma of Vadhan \cite{Vadhan2012} to split their distribution into a mixture of uniform distributions, the granularity of which is limited by the min-entropy.
Then they use a point-plane incidence bound of Rudnev \cite{Rudnev2018} to prove lower bounds on $H(A(B+C))$, where $A$, $B$, and $C$ are these uniform distributions.
Combined with upper bounds by M\'ath\'e and O'Regan \cite{MO2025} on $H(X(Y+Z))$ in terms of the sum and product entropies, they are able to establish their sum-product results. 

By first applying a dyadic decomposition to our random variable, which partitions its support into rough level sets of the probability mass function, and then applying the uniformization lemma, we avoid this limitation of the min-entropy, at a cost of $O(\log H(X))$, which is negligible for the purposes of the sum-product phenomenon.
This causes our analogous $A$, $B$, and $C$ to potentially have very different sizes, something the original lower bound on $H(A(B+C))$ is unable to handle, so we must generalize their arguments.
The previously stated upper bounds are actually incorrect, though the main results of Gavalakis, Goh, and Kontoyiannis \cite{GGK} are salvageable (see \cref{rmk:erratum}), so we also have to establish new, corrected upper bounds.
This adaptation by itself is able to prove the case $\delta=\frac19$ via establishing \cref{thm:baseline}.

In the hopes of improving this all the way to the tight $\delta=\frac13$, which was proven in the combinatorial setting by Solymosi \cite{Solymosi2009}, we studied their combinatorial argument bounding the multiplicative energy by sumset cardinalities.
While we were unsuccessful in fully transferring the argument to the entropic setting, this did enable us to prove \cref{thm:eer}, our small-doubling linear entropic Elekes--Ruzsa theorem.
This protects against the weakness of \cref{thm:baseline} at small doublings, improving $\delta$ to $\frac17$.

The dyadic uniformization technique is used in both parts of the argument, i.e., both small-doubling and large-doubling, and to do so we need the finite support assumption.
Fixing this costs essentially nothing at the level of our main theorems, as truncating our random variables to some large but finite support has negligible impact on their entropies, and we handle this at the end of our paper. 

\cref{sec:prelim} introduces basic entropy properties and notation, states the point-plane incidence bound, and establishes our combined uniformization technique.
\cref{sec:solymosi} establishes the entropic analog of Solymosi's \cite{Solymosi2009} bound in the clean case of near-uniform distributions, which \cref{sec:eer} uses to obtain the small-doubling linear entropic Elekes--Ruzsa theorem by tracking all of the (pairs of) pieces from the uniformization technique.
\cref{sec:incidence} proves the lower bounds on $H(A(B+C))$ via the incidence argument, while \cref{sec:upper} proves the upper bounds and combines the two to yield our large-doubling linear entropic Elekes--Ruzsa theorem.
\cref{sec:main} combines these two results, reintroduces the atom at zero, and removes the finite support hypothesis to prove \cref{thm:main}.
\cref{sec:open} discusses \cref{conjecture:entropic_solymosi} and poses some avenues for further research towards improving $\delta=\frac17$.

\section{Preliminaries}\label{sec:prelim}
\subsection{Entropy}
We refer readers to \cite{CoverThomas} for an overview of entropy and information theory, but provide a brief self-contained introduction here.
Throughout the paper, all random variables are discrete.
For a discrete $X$ with mass function $p$ we
write
\[
  H(X)=-\sum_x p(x)\log p(x),\qquad
  H_\infty(X)=-\log\max_x p(x),\qquad
  H_2(X)=-\log\sum_x p(x)^2,
\]
where $H_2(X)=-\log\P(X=X')$ for $X'$ an independent copy of $X$.
It is straightforward to see and well-known that
\[ H_\infty(X)\le H_2(X)\le H(X), \]
with equality throughout when $X$ is uniform on a finite set.
We write
\[ h(\theta)=-\theta\log\theta-(1-\theta)\log(1-\theta) \]
for the binary entropy function, with $0\log 0:=0$, so $0\le h\le1$ for $\theta\in[0,1]$.
For finite nonempty $S$, let $U_S$ denote the uniform random variable on $S$, and for $\P(E)>0$ we write $H(X\mid E)$ for the entropy of $X$ under $\P(\cdot\mid E)$.
For jointly distributed discrete random variables $X$ and $Y$, the conditional entropy $H(X|Y)$ is given by
\[ H(X|Y) = \sum_y \P(Y=y)H(X|Y=y). \]

Henceforth let $X$ be real-valued, $X'$ be an independent copy, and for brevity of notation let $H=H(X)$.
When $\P(X=0)<1$ we write $\widetilde X$ for the zero-free random variable whose law is that of $X$ conditioned on $\{X\ne0\}$.
Similarly denote $\widetilde H=H(\widetilde X)$.
The four ratios that organize the whole argument are
\[
  u=\frac{H(X+X')}{H}-1,\qquad m=\frac{H(XX')}{H},\qquad
  \widetilde u=\frac{H(\widetilde X+\widetilde X')}{\widetilde H}-1,\qquad
  \widetilde m=\frac{H(\widetilde X\widetilde X')}{\widetilde H},
\]
where $X'$ is an independent copy of $X$, and $\widetilde X'$ is an independent copy of $\widetilde X$, and these ratios are defined whenever the entropy in the denominator is positive.
For a finite subset $P$ of an abelian group we write
\[ g(P)=2\log\abs{P}-\log\abs{P+P} \]
for its \emph{doubling defect}, which in some sense quantifies the number of additive collisions in $P$, i.e., is large when $\abs{P+P}$ is small.
 
We collect the following standard entropy facts.
\begin{lemma}[Elementary entropy facts]\label{lem:entropy-facts}
Let $W,T,R,S$ be discrete random variables with finite entropies.
\begin{enumerate}[leftmargin=3.2em]
    \item[\textup{(E1)}] \emph{(Conditioning reduces entropy.)}
    \[ H(W)\ge H(W\mid T)=\sum_t\P(T=t)H(W\mid T=t), \]
    so the mutual information
    \[ I(W;T) := H(W)-H(W\mid T) \geq 0. \]
    \item[\textup{(E2)}] \emph{(Chain rule.)}
    \[ H(W,T)=H(T)+H(W\mid T). \]
    In particular, if $T=f(W)$ for some function $f$ then
    \[ H(W)=H(T)+H(W\mid T)=H(T)+\sum_t\P(T=t)H(W\mid T=t). \]
    \item[\textup{(E3)}] \emph{(Data processing and subadditivity.)}
    \[ I(f(R);g(S))\le I(R;S) \]
    for all functions $f,g$; consequently $H(f(W))\le H(W)$ and
    \[ H(W,T)=H(W)+H(T)-I(W;T)\le H(W)+H(T). \]
    \item[\textup{(E4)}] \emph{(Support bound.)} $H(W)\le\log\abs{\operatorname{supp}W}$, and
    $H(W)\ge H_2(W)\ge H_\infty(W)$.
    \item[\textup{(E5)}] \emph{(Monotonicity under independent sums.)} If $W,T$ are independent
    and take values in an abelian group, then
    \[
    H(W+T) \ge \max\{H(W),H(T)\};
    \]
    and if $W,W'$ are i.i.d., then
    \[ H(WW')\ge\P(W'\ne0)\,H(W). \]
    \item[\textup{(E5$'$)}] $H(X+X')\le H(X,X')=2H$, so that $u\in[0,1]$ whenever $H>0$;
    likewise $\widetilde u\in[0,1]$ whenever $\widetilde H>0$.
    \item[\textup{(E6)}] \emph{(Gibbs' inequality.)} Let $\nu$ be a sub-probability measure on
    the range of $W$ (that is, a measure with total measure at most 1) with $\nu(w)>0$ for all $w\in\operatorname{supp}W$. Then
    \[
    H(W) \le \E\bracket{\log\frac1{\nu(W)}}
     = -\sum_w\P(W=w)\log\nu(w).
    \]
    \item[\textup{(E7)}] \emph{(Maximum entropy of the geometric law.)} If $W$ takes values in
    $\Z_{\ge0}$ and $\E[W]=\mu<\infty$ then
    \[
    H(W) \le f(\mu):=(1+\mu)\log(1+\mu)-\mu\log\mu \le \log(1+\mu)+\log e,
    \]
    and $f$ is nondecreasing on $[0,\infty)$.
\end{enumerate}
\end{lemma}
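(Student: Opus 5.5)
These facts are standard, and I will verify each one directly from the definitions, in the order (E2), (E1), (E3), (E4), (E5), (E5$'$), (E6), (E7), since the later items reuse the earlier ones.

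\textbf{Chain rule, conditioning, data processing.} For (E2) I expand $-\log p(w,t)=-\log p(t)-\log p(w\mid t)$ and take expectations. For (E1) I write $H(W)-H(W\mid T)=D(p_{W,T}\,\|\,p_Wp_T)$, which is nonnegative by Jensen's inequality applied to the concave function $\log$. For (E3) I use the chain rule for mutual information. Since $f(R)$ is a function of $R$, we have $I(f(R);S)\le I(R;S)$, because $I(R;S)=I(f(R);S)+I(R;S\mid f(R))$. Applying the same argument in the second coordinate gives the bound for $g(S)$. Taking $S=R=W$ gives $H(f(W))\le H(W)$, and subadditivity then follows from the definition of $I$ together with (E1).

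\textbf{Support and min-entropy bounds, sums and products.} For (E4), the upper bound $H(W)\le\log\abs{\operatorname{supp}W}$ follows from Jensen's inequality (equivalently, from (E6) with $\nu$ uniform on the support). For the lower bounds, Jensen's inequality applied to $\E[-\log p(W)]$ gives $H\ge-\log\E[p(W)]=H_2$, and $\sum p^2\le\max p$ gives $H_2\ge H_\infty$.

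For the first part of (E5), I condition on $T$: $H(W+T)\ge H(W+T\mid T)=H(W\mid T)=H(W)$, using independence and the fact that translation by a fixed $t$ is a bijection. The same argument with the roles of $W$ and $T$ swapped gives $H(W+T)\ge H(T)$. The multiplicative part has the same shape: $H(WW')\ge H(WW'\mid W')$. On $\{W'=w'\}$ with $w'\neq0$, multiplication by $w'$ is injective, so the conditional entropy equals $H(W)$; on $\{W'=0\}$ it is nonnegative. This gives $H(WW')\ge\P(W'\ne0)H(W)$.

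(E5$'$) is the data processing inequality applied to $(X,X')\mapsto X+X'$, together with independence, $H(X,X')=2H$. Combining this with the first part of (E5) gives $u\in[0,1]$, and the argument for $\widetilde u$ is identical.

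\textbf{Gibbs and geometric maximum entropy.} For (E6), I write $\E\log\frac{p(W)}{\nu(W)}\ge-\log\sum_{w\in\operatorname{supp}W}\nu(w)\ge0$ by Jensen's inequality and the fact that $\nu$ has total mass at most $1$.

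For (E7), I apply (E6) with the geometric law $\nu(k)=\frac1{1+\mu}\paren{\frac{\mu}{1+\mu}}^k$, which is a probability measure on $\Z_{\ge0}$ with mean $\mu$. Then $\E[-\log\nu(W)]=\log(1+\mu)+\mu\log\frac{1+\mu}{\mu}=f(\mu)$. The case $\mu=0$ is trivial, since then $W=0$ almost surely. Monotonicity of $f$ follows from $f'(\mu)=\log\frac{1+\mu}{\mu}>0$. The final bound reduces to $\mu\log(1+1/\mu)\le\mu\cdot\frac{\log e}{\mu}=\log e$, using $\ln(1+x)\le x$.

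I do not expect any genuine obstacle here. The only points needing care are the degenerate cases, namely zero-probability atoms in (E6) and $\mu=0$ in (E7), and the use of independence in (E5). For (E5) in particular, one must check that conditioning on $T=t$ leaves the law of $W$ unchanged; this is exactly what independence provides.
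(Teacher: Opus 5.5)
Your proposal is correct and matches the paper's proof. The paper cites standard references for (E1)--(E6) except the product bound in (E5), which it proves exactly as you do: condition on $W'$, then use injectivity of multiplication by $w'\neq0$. For (E7) it applies Gibbs' inequality with the same geometric law $\nu(n)=(1-\theta)\theta^n$, $\theta=\mu/(1+\mu)$, followed by the same derivative and $\ln(1+x)\le x$ estimates; your separate handling of $\mu=0$ is a small improvement, since the paper's claim that $\nu$ is positive everywhere fails when $\theta=0$.
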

 
\begin{proof}
(E1)--(E6) are standard, except for possibly the second part of (E5); see \cite{CoverThomas}.
For the second part of (E5),
\[ H(WW') \geq H(WW'|W') = \sum_{w'}\P(W'=w')H(w'W) = \P(W'\neq 0) H(W). \]
For (E7), let $\theta=\frac{\mu}{1+\mu}\in[0,1)$ and let $\nu$ be the geometric law $\nu(n)=(1-\theta)\theta^n$ on $\Z_{\ge0}$, whose mean is $\frac{1}{1-\theta}-1=\mu$.
It is positive everywhere on $\Z_{\ge0}$, so (E6) applies and gives
\[
  H(W) \le -\log(1-\theta)-\E[W]\log\theta
  =\log(1+\mu)+\mu\log\frac{1+\mu}{\mu}
  =(1+\mu)\log(1+\mu)-\mu\log\mu=f(\mu).
\]
Moreover
\[ \mu\log\paren{1+\frac{1}{\mu}}=\mu\log e\cdot\ln\paren{1+\frac1\mu}\le\log e \]
by
$\ln(1+x)\le x$, so
\[ f(\mu)\le\log(1+\mu)+\log e; \]
and
\[ f'(\mu)=\log\paren{\frac{1+\mu}{\mu}}>0, \]
so $f$ is increasing.
\end{proof}

Recall the entropic Ruzsa triangle inequality, \eqref{eq:entropic_Ruzsa_triangle}, which implies
\begin{equation}\label{eq:difference-from-sum}
  H(\widetilde X-\widetilde X') \le 2H(\widetilde X+\widetilde X')-\widetilde H
   = (1+2\widetilde u)\widetilde H .
\end{equation}

Lastly, the following result, referred to as \emph{functional submodularity}, is classical.
\begin{lemma}[Functional submodularity; see, e.g., \protect{\cite{Tao2010}}]\label{lem:submod}
Let $R,S,W,T$ be jointly distributed discrete random variables with finite entropies, and
suppose that $W=f(R)=g(S)$ almost surely and $T=\varphi(R,S)$ almost surely, for some
functions $f,g,\varphi$. Then
\[ H(W)+H(T) \le H(R)+H(S). \]
\end{lemma}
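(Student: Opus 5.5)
The plan is to reduce the inequality to the chain rule together with data processing, applied to the pair $(R,S)$. First, since $W=f(R)$ and $T=\varphi(R,S)$, the pair $(W,T)$ is a function of $(R,S)$. By (E3) this gives
\[ H(R,S)\ge H(W,T)=H(W)+H(T)-I(W;T). \]
That alone is not enough, because we need $H(R)+H(S)$ on the other side rather than $H(R,S)$.

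The key step is to condition on $W$. Because $W$ is a function of $R$ and also a function of $S$, the chain rule (E2) lets us write
\[ H(R)=H(W)+H(R\mid W),\qquad H(S)=H(W)+H(S\mid W). \]
Adding these gives
\[ H(R)+H(S)=2H(W)+H(R\mid W)+H(S\mid W). \]

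Next we bound the conditional terms from below. Applying subadditivity (E3) conditionally on each event $\{W=w\}$ and averaging over $w$ gives
\[ H(R\mid W)+H(S\mid W)\ge H(R,S\mid W). \]
Since $W$ is a function of $(R,S)$, we have $H(R,S\mid W)=H(R,S)-H(W)$. Because $T$ is a function of $(R,S)$, data processing (E3) then gives $H(R,S)\ge H(T)$.

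Combining these steps yields
\[ H(R)+H(S)\ge 2H(W)+H(R,S)-H(W)=H(W)+H(R,S)\ge H(W)+H(T), \]
which is the claim. Finite entropy of all variables involved ensures every subtraction is legitimate.

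The only step needing care is the conditional subadditivity $H(R\mid W)+H(S\mid W)\ge H(R,S\mid W)$. It is standard and follows from applying the unconditional inequality (E3) under each conditional law $\P(\cdot\mid W=w)$ and averaging, so there is no real obstacle.
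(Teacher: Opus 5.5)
Your proof is correct and is essentially the standard argument: the paper gives no proof of its own and defers to Tao, where the lemma follows from the submodularity inequality $H(R,S,W)+H(W)\le H(R,W)+H(S,W)$, followed by data processing $H(T)\le H(R,S)$. Your conditional subadditivity step is exactly that inequality, rewritten using $H(R,W)=H(R)$, $H(S,W)=H(S)$ and $H(R,S,W)=H(R,S)$; your opening paragraph bounding $H(W,T)$ is never used and can be dropped.
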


\subsection{Rudnev's incidence bound}
The proofs for the combinatorial sum-product phenomenon often rely on incidence bounds, so it comes as no surprise that we find them helpful for the entropic analog.
For an arbitrary field $F$, let $P$ be a set of points and let $Q$ be a set of planes in $F^3$; we write
\[ I(P,Q)=\#\{(x,\ell)\in P\times Q:\ x\in\ell\} \]
for the number of incidences.
The following result, originally due to Koll\'ar \cite{Kollar2015}, is the prime-field analog of Rudnev's \cite{Rudnev2018} point-plane incidence bound; the version quoted here is a slight strengthening by de Zeeuw \cite{deZeeuw2016}, to which we refer for a quick proof.
This is the same result used by Gavalakis, Goh, and Kontoyiannis \cite[Lemma 3.1]{GGK}, which is how we became aware of the result.
For the classical Euclidean antecedents see \cite{Szekely1997,SzemerediTrotter1983}; for the current state of the art, and the natural place to look for an explicit constant, see \cite{RudnevStevens2022}.
Here, the notation $X\ll Y$ means there exists some absolute constant $C>0$ such that $X \leq CY$.
\begin{theorem}[{\cite[Theorem~4.1]{deZeeuw2016}}]\label{thm:incidence}
Let $F$ be a field of characteristic $p$, let $P$ be a set of points in $F^3$ and
let $Q$ be a set of planes in $F^3$. Suppose that
\begin{enumerate}[label=(\roman*)]
    \item $\abs{P}\le\abs{Q}$;
    \item either $p=0$, or $\abs{P}\ll p^2$; and
    \item no line of $F^3$ contains $k$ points of $P$ and is contained in $\ell$ planes of $Q$.
\end{enumerate}
Then
\begin{equation}\label{eq:incidence}
  I(P,Q) \ll \abs{P}^{1/2}\abs{Q}+\ell\abs{P}+k\abs{Q} .
\end{equation}
\end{theorem}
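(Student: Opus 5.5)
The plan is to deduce \cref{thm:incidence} from the Guth--Katz/Koll\'ar bound for pairwise intersections among lines in $F^3$, going through the Klein correspondence; this is the route of Rudnev \cite{Rudnev2018} and of de Zeeuw \cite{deZeeuw2016}. The line-intersection input, used as a black box, is the following. If $L$ is a set of $n$ lines in $F^3$ with $n\ll p^2$ when $p>0$, and at most $s$ lines of $L$ lie in any plane or regulus, then the number of intersecting pairs in $L$ is $\ll n^{3/2}+sn$. In characteristic $0$ this is Guth--Katz. In characteristic $p$ it comes from Koll\'ar's \cite{Kollar2015} analysis of ruled surfaces via the flecnode polynomial. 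Everything below reduces point--plane incidences to this statement.

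\emph{Step 1: reduce to $\abs P=\abs Q$.} First I would arrange $\abs{P}=\abs{Q}$. If $\abs Q>\abs P$, partition $Q$ into about $\abs Q/\abs P$ blocks of size at most $\abs P$ and sum the bounds. Each block contributes $\abs P^{3/2}+\ell\abs P+k\abs P$, and summing over the blocks gives exactly $\abs P^{1/2}\abs Q+\ell\abs Q+k\abs Q$. The $\ell$ term then needs to be sharpened to $\ell\abs P$. For that I would count incidences along each line directly: a line carrying at least $k$ points lies in fewer than $\ell$ planes of $Q$, so these contribute at most $\ell\abs P$ in total. This is where the asymmetry of (iii) enters.

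\emph{Step 2: pass to lines.} Work projectively and normalise so that incidence reads $p\cdot q=1$ for $p\in P$ and $q\in Q$. I will associate a line $\lambda_p$ to each point and a line $\mu_q$ to each plane, so that $\lambda_p$ meets $\mu_q$ exactly when $p\in q$. In Klein-quadric language, a point of $\mathbb P^3$ corresponds to an $\alpha$-plane and a plane of $\mathbb P^3$ to a $\beta$-plane, and incidence means the two meet in a line. Intersecting with a generic hyperplane section, that is, a generic linear complex, turns this into incidence between lines in a $3$-space. A bilinear incidence equation is exactly what makes such a parametrisation possible. Then $I(P,Q)$ is at most the number of intersecting pairs in $L=\set{\lambda_p}\cup\set{\mu_q}$, a set of $n\le2\abs P$ lines. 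In characteristic $p$ the condition $\abs P\ll p^2$ carries over to $n\ll p^2$, so the black box applies.

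\emph{Step 3: degenerate configurations (the main obstacle).} The serious step is controlling the parameter $s$. I need to show that a plane or regulus containing many lines of $L$ corresponds, back in $F^3$, to a line containing many points of $P$ that also lies in many planes of $Q$. Concretely: the $\lambda_p$ lying in one plane (or one ruling of a regulus) should be points $p$ on a common line $m$, and the $\mu_q$ there should be planes $q$ containing $m$. The two rulings of a regulus need a separate case check, since one ruling could hold point-lines and the other plane-lines. Hypothesis (iii) then caps how many of each type can concentrate. Mixed concentrations are handled by removing the incidences along such rich lines directly, which costs $\ell\abs P+k\abs Q$ as in Step 1. After this removal the remaining configuration has $s\ll \abs P^{1/2}$, and the black box yields $\ll \abs P^{3/2}$. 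I expect the hardest part to be checking these degenerate cases carefully, together with making the Klein correspondence explicit enough that a generic choice of linear complex does not collapse distinct points or planes onto the same line.
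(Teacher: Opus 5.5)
The paper gives no proof of this theorem; it quotes it from de Zeeuw. As far as I know, the cited arguments (Rudnev's, and de Zeeuw's shortening of it) do follow your overall route: convert incidences into intersections between two families of lines, then apply the Guth--Katz--Koll\'ar bound. Your outline, however, has genuine gaps at exactly the steps where the work lies.

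First, your black box is false as stated. Take $n$ lines through one point with no three coplanar. They have $\binom n2$ intersecting pairs, yet every plane or regulus contains at most two of them. A usable version must also bound the number of lines through any point. Concurrency in line space is then one more degenerate configuration that your Step~3 has to trace back to collinear points and to planes through a common line.

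Second, a hyperplane section of the Klein quadric is a quadric threefold in projective $4$-space. So Step~2 yields lines in a $4$-space, not a $3$-space. You need a further projection or an explicit model, and the dictionary you assert in Step~3 depends on that choice. For example, map $(a,b,c)\mapsto\{(t,a,c-ab+bt)\}$ and the plane $z=dx+ey+f$ to $\{(t,e+t,f+de+dt)\}$. This turns incidence into intersection exactly, yet it sends every point with first coordinate $a$ into the single plane of line space with second coordinate $a$. Coplanar lines there come from coplanar points, not collinear ones.

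The more serious gap is the mechanism of Steps~1 and~3.
\begin{itemize}
\item The parameter $s$ counts lines of both types. Hypothesis (iii) allows one-sided concentrations that carry no incidences along the line: all of $P$ on a line lying in no plane of $Q$, or all of $Q$ a pencil through a line missing $P$. There is nothing to ``remove'' along such a line, yet $s$ is of order $|P|$ or $|Q|$. You must delete points and planes, not incidences.
\item Your count in Step~1 also fails. Summing over $k$-rich lines $m$ gives $\ell\sum_m|m\cap P|$, which is not $\ll\ell|P|$, because a point can lie on many rich lines.
\end{itemize}

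What works is to prune globally, before any splitting.
\begin{enumerate}
\item Put $t=\max\{|P|^{1/2},k\}$ and repeatedly delete the points of $P$ on a line $m$ containing at least $t$ of them. By (iii), fewer than $\ell$ planes of $Q$ contain $m$, and every other plane meets $m$ at most once. So each round costs at most $\ell|m\cap P|+|Q|$ incidences. There are at most $|P|^{1/2}$ rounds and each point is charged once, so the total is $\ll\ell|P|+|P|^{1/2}|Q|$. This is where $\ell|P|$, rather than $\ell|Q|$, comes from.
\item Then repeatedly delete a pencil of at least $|P|^{1/2}$ planes through a common line $m$. Each surviving point off $m$ lies in at most one of these planes, and fewer than $t$ surviving points lie on $m$. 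There are at most $|Q|/|P|^{1/2}$ rounds, so the total cost is $\ll|P|^{1/2}|Q|+k|Q|$.
\item Only now split the surviving planes into at most $2|Q|/|P|$ blocks of size at most $|P|$. Each block gives $\ll|P|^{3/2}+k|P|$ from the corrected black box with $s\ll|P|^{1/2}+k$. This requires that you verify your line model sends coplanar, concurrent and regulus families back to collinear points and planes through a common line. Summing over blocks gives the stated bound.
\end{enumerate}
If instead you prune inside each block, as your Step~1 does, the $\ell$-cost is multiplied by the number of blocks and you only obtain $\ell|Q|$.
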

This formulation, where hypothesis (iii) is only violated by a line simultaneously meeting $P$ many times and contained in many planes of $Q$ (the latter condition not being present in Rudnev's original result), will be useful for us in \cref{sec:incidence}.
 
\begin{definition}\label{def:KR}
We let $C_R\ge1$ denote the absolute constant implied by the symbol $\ll$ in
\eqref{eq:incidence}; that is, under the hypotheses of \cref{thm:incidence},
\[
  I(P,Q) \le C_R\paren{\abs{P}^{1/2}\abs{Q}+\ell\abs{P}+k\abs{Q}}.
\]
Throughout this paper $F=\R$, so that hypothesis (ii) of \cref{thm:incidence} is vacuous. We
set
\[
  K_R = 2+\log C_R .
\]
\end{definition}
No explicit admissible value of $C_R$ appears in the literature, so $K_R$ is the only
unpinned absolute constant in this paper.

\subsection{A uniformization lemma}
The following result shows that a law with min-entropy at least $m$ is an exact convex combination of uniform laws on sets of size $2^m$.
This was also used by Gavalakis, Goh, and Kontoyiannis \cite[Lemma 3.4]{GGK}, which is how we became aware of this result.
\begin{lemma}[{\cite[Lemma~6.10]{Vadhan2012}}]\label{lem:vadhan}
Let $W$ be a random variable with finite support $A$ and $H_\infty(W)\ge m$, where $M=2^m$ is a positive integer, though $m$ need not be.
Then there exist $s\in\N$, reals $p_1,\dots,p_s\ge0$ with $\sum_{i=1}^sp_i=1$, and sets $A_1,\dots,A_s\subseteq A$ with $\abs{A_i}=M$ for every $i$, such that
\[
  \P(W=a) = \sum_{i=1}^sp_i\,\P(U_{A_i}=a)\qquad\text{for all }a\in A .
\]
\end{lemma}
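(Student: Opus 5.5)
The plan is to reduce \cref{lem:vadhan} to the geometry of a polytope and apply Carath\'eodory/Krein--Milman in finite dimensions. Write $p(a)=\P(W=a)$ for $a\in A$. The hypothesis $H_\infty(W)\ge m$ says exactly that $p(a)\le 1/M$ for all $a$, with $\sum_a p(a)=1$; in particular $\abs{A}\ge M$. So $p$ lies in the polytope
\[
  \mathcal P=\set{q\in\R^A:\ 0\le q(a)\le \tfrac1M\ \text{for all }a,\ \textstyle\sum_a q(a)=1}.
\]
The law of $U_{A_i}$ for $\abs{A_i}=M$ is precisely the point $\frac1M\mathbf 1_{A_i}$. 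Hence it suffices to show that every vertex of $\mathcal P$ has this form. Since $\mathcal P$ is a nonempty compact convex polytope in finite dimension, $p$ is then a finite convex combination of its vertices, which gives the $s$, $p_i$ and $A_i$ required.

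To identify the vertices, take a vertex $q$ and let $F=\set{a: 0<q(a)<1/M}$ be its set of fractional coordinates. A vertex is determined by $\abs{A}$ linearly independent tight constraints. Only one equality constraint, $\sum q=1$, is available beyond the box constraints, so at most one coordinate can be non-tight, i.e.\ $\abs F\le1$.

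It remains to rule out $\abs F=1$. If $\abs F=1$, then $1=\sum_a q(a)=k/M+q(a_0)$ with $k$ an integer and $0<q(a_0)<1/M$. This is impossible because $M$ is an integer, so $M=k+Mq(a_0)$ would have to be an integer strictly between $k$ and $k+1$.

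Therefore $F=\emptyset$, every coordinate of $q$ lies in $\set{0,1/M}$, and $\sum q=1$ forces exactly $M$ coordinates to equal $1/M$. Thus $q=\frac1M\mathbf 1_{S}$ with $\abs S=M$ and $S\subseteq A$, which is the form we wanted.

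The only delicate point is the vertex characterization, and specifically this integrality step. Its use of $M\in\N$ explains why the lemma requires $2^m$ to be an integer while $m$ itself need not be. Everything else is finite-dimensional convexity.

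If one prefers a constructive route, a greedy peeling gives the same conclusion. Repeatedly take $S$ to be the $M$ atoms of largest current mass and subtract $\lambda\,U_S$ with $\lambda$ maximal subject to preserving the invariant $q(a)\le \lambda_{\rm rem}/M$. Each step either zeroes an atom or makes a new atom tight, so the process terminates in at most $2\abs A$ steps. I would keep this only as a remark and present the polytope argument as the proof.
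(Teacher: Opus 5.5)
Your proof is correct and complete. The constraint count leaves at most one fractional coordinate at a vertex of $\set{q:\ 0\le q(a)\le 1/M,\ \sum_a q(a)=1}$, and integrality of $M$ rules that one out. So the vertices are exactly the laws $\frac1M\mathbf{1}_S$ with $S\subseteq A$ and $\abs S=M$.

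The paper gives no proof of its own here; it imports the lemma from Vadhan's monograph. To my recollection the argument there is a different, more hands-on one. Lay the atoms of $W$ out as consecutive half-open arcs of lengths $\P(W=a)$ on a circle of circumference $1$, drop $M$ equally spaced points with a uniformly random common offset, and let $T$ be the random set of arcs that are hit. Each arc has length at most $1/M$, so it catches at most one point. Hence exactly $M$ arcs are hit, arc $a$ is hit with probability $M\,\P(W=a)$, and averaging $U_T$ over the offset puts mass exactly $\P(W=a)$ on $a$.

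That route is explicit and needs no convexity theory; the mixing weights are just lengths of offset intervals. Yours is more structural: it identifies the uniform laws on $M$-subsets as precisely the extreme points of the set of laws on $A$ with $H_\infty\ge m$, and it isolates the single place where $M\in\N$ enters. Both give finitely many pieces (indeed at most $\abs A$), which is all that \cref{lem:unif-decomp} uses.
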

The following lemma, while elementary albeit somewhat nonstandard, is crucial in enabling us to use Shannon entropy in place of min-entropy in the work of Gavalakis, Goh, and Kontoyiannis \cite{GGK}.
It cuts an arbitrary finitely supported random variable $X$ into flat, i.e., roughly uniform, pieces via dyadic intervals at a total cost of $O(\log H(X))$ bits.
It bears similarity to (prefix-free) source coding (see \cite{CoverThomas} for an overview), e.g., $J$ is roughly analogous to the optimal code length.
The idea behind this lemma is certainly not new, and we do not claim originality of it; see, e.g., \cite{CTLi2025}.
\begin{lemma}\label{lem:layers}
Let $X$ be finitely supported with mass function $p$ and entropy $H(X)>0$.
For $j\ge0$ define
\[
  A_j=\{x: 2^{-j-1}<p(x)\le2^{-j}\},
\]
let $J$ be the random index such that $X\in A_J$, and let $X_j=\operatorname{Law}(X\mid J=j)$ with mass function $p_j$, for every $j$ with $\P(J=j)>0$.
Then:
\begin{enumerate}[label=(\alph*)]
    \item the sets $A_j$ partition $\operatorname{supp}X$ and $\abs{A_j}<2^{j+1}$; moreover, for every $j$ with $\P(J=j)>0$ one has
    $\max p_j<2\min p_j$ on $A_j$, hence
    \[
    \frac1{2\abs{A_j}} \le p_j(x) \le \frac2{\abs{A_j}}\quad\text{for all }x\in A_j,
    \qquad\text{and}\qquad H_\infty(X\mid J=j)>\log\abs{A_j}-1;
    \]
    \item $\E[J]\le H(X)$ and $H(J)\le\log(1+H(X))+\log e$;
    \item
    \[ \displaystyle H(X)-H(J) \le \sum_j \P(J=j)\log\abs{A_j} \le H(X)+1; \]
    \item if $\emptyset\ne D\subseteq A_j$ then the conditional mass function
    $p_D:=p_j/p_j(D)$ on $D$ satisfies
    \[ \max p_D<2\min p_D\le2/\abs{D}; \]
    \item \emph{(sign refinement)} assume in addition that $\P(X=0)=0$, and refine
    each $A_j$ according to the sign of its elements, so that
    \[ A_{(j,\sigma)}:=\{x\in A_j:\operatorname{sgn}(x)=\sigma\} \]
    for $\sigma\in\{+,-\}$.
    Write $\{A_i\}_{i\in\mathcal I}$, $\mathcal I\subseteq\Z_{\ge0}\times\{+,-\}$, for the resulting
    partition into sign-pure sets, and let $I$ be the index with $X\in A_I$, with $X_i=\operatorname{Law}(X\mid I=i)$.
    Then (a), (c) and (d) hold verbatim for $\{A_i\}_{i\in\mathcal I}$ in place of $\{A_j\}_{j\ge0}$, and
    \[
    H(I) \le \log(1+H(X))+\log e+1 .
    \]
\end{enumerate}
\end{lemma}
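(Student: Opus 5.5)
We prove the parts of \cref{lem:layers} in order, since each is a direct computation from the definition of the dyadic level sets $A_j$.

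\textbf{Part (a).} Every $x\in\operatorname{supp}X$ has $0<p(x)\le1$. So there is a unique $j\ge0$ with $2^{-j-1}<p(x)\le2^{-j}$, and the $A_j$ partition the support. Summing $p(x)>2^{-j-1}$ over $A_j$ gives $1\ge\P(A_j)>\abs{A_j}2^{-j-1}$, hence $\abs{A_j}<2^{j+1}$. On $A_j$ we have $p_j(x)=p(x)/\P(J=j)$, and the ratio of any two values of $p$ on $A_j$ is strictly less than $2$, so $\max p_j<2\min p_j$. Combining this with $\abs{A_j}\min p_j\le1\le\abs{A_j}\max p_j$ gives $\frac1{2\abs{A_j}}\le p_j\le\frac2{\abs{A_j}}$. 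Since $\max p_j<2\min p_j\le 2/\abs{A_j}$, we get $H_\infty(X\mid J=j)>\log\abs{A_j}-1$.

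\textbf{Part (d).} This is identical: the factor-$2$ ratio bound is inherited by any subset $D\subseteq A_j$, and $\min p_D\le1/\abs D$.

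\textbf{Part (b).} For $x\in A_j$ we have $\log\frac1{p(x)}\ge j$, so $\E[J]\le\E\log\frac1{p(X)}=H(X)$. Since $J$ takes values in $\Z_{\ge0}$, (E7) with the monotonicity of $f$ gives $H(J)\le f(\E J)\le f(H(X))\le\log(1+H(X))+\log e$.

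\textbf{Part (c).} Since $J$ is a function of $X$, the chain rule (E2) gives $H(X)=H(J)+\sum_j\P(J=j)H(X\mid J=j)$. The lower bound follows from $H(X\mid J=j)\le\log\abs{A_j}$ by (E4). For the upper bound, use $H(X\mid J=j)\ge H_\infty(X\mid J=j)>\log\abs{A_j}-1$ from (a). This yields $\sum_j\P(J=j)\log\abs{A_j}<H(X)-H(J)+1\le H(X)+1$.

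\textbf{Part (e).} The refined sets $A_{(j,\sigma)}$ are subsets of the $A_j$, so (a) and (d) transfer verbatim. The ratio bound was the only ingredient, and $\abs{A_{(j,\sigma)}}\le\abs{A_j}<2^{j+1}$. Part (c) transfers by the same argument with $I$ in place of $J$, since $I$ is still a function of $X$ and the conditional laws still satisfy $H_\infty>\log\abs{A_i}-1$. Finally, $I=(J,\operatorname{sgn}X)$, so subadditivity (E3) gives $H(I)\le H(J)+H(\operatorname{sgn}X)\le\log(1+H(X))+\log e+1$, because the sign takes only two values once $\P(X=0)=0$.

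\textbf{Main obstacle.} Nothing here is deep. The only step needing care is the upper bound in (c), which requires the strict min-entropy bound from (a) to turn $H(X\mid J=j)$ back into $\log\abs{A_j}$ at a cost of at most one bit. Applying (E7) in (b) also requires checking that $f$ is nondecreasing, which (E7) already records.
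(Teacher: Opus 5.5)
Your proof is correct. Parts (a), (b), (d) and the bound on $H(I)$ in (e) follow the paper's argument essentially line for line.

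The one genuine difference is the upper bound in (c), and its transfer in (e).
\begin{itemize}
\item The paper combines the cardinality bound $\abs{A_j}<2^{j+1}$ with $\E[J]\le H(X)$ from (b), obtaining $\sum_j\P(J=j)\log\abs{A_j}\le\E[J]+1\le H(X)+1$. In (e) it must then track the parent dyadic index $j(i)$ of each refined piece to write $\log\abs{A_i}\le j(i)+1$.
\item You instead use the within-layer flatness from (a), namely $H(X\mid J=j)\ge H_\infty(X\mid J=j)>\log\abs{A_j}-1$, together with the chain rule.
\end{itemize}

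Your route has three consequences. It yields the sharper two-sided estimate $H(X)-H(J)\le\sum_j\P(J=j)\log\abs{A_j}<H(X)-H(J)+1$. It makes (c) logically independent of (b). And it transfers with no index bookkeeping to any partition whose pieces satisfy the factor-$2$ ratio bound, in particular the sign refinement.

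The paper's route needs only the crude size bound and the expectation estimate it already proves for $H(J)$. Yours gives a slightly stronger statement, though the extra $-H(J)$ is not needed anywhere later in the paper.
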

\begin{proof}
(a) Clearly the $A_j$ partition $\operatorname{supp} X$, and as $p(x)>2^{-j-1}$ for any $x\in A_j$, we immediately have $\abs{A_j}<2^{j+1}$.
Fix $j$ with $\P(J=j)>0$.
The definition of $A_j$ ensures $\max p_j < 2\min p_j$ on $A_j$.
As $\sum_{x\in A_j}p_j(x)=1$ gives
$\min p_j\le\frac{1}{\abs{A_j}}\le\max p_j$, we have
\[
  p_j(x) \le \max p_j < 2\min p_j \le \frac2{\abs{A_j}},
  \qquad
  p_j(x) \ge \min p_j > \frac12\max p_j \ge \frac1{2\abs{A_j}}.
\]
The first chain gives
\[ H_\infty(X\mid J=j)=-\log\max p_j>-\log\frac{2}{\abs{A_j}}=\log\abs{A_j}-1. \]
 
(b) Pointwise,
\[ J(x)=\left\lfloor\log\frac{1}{p(x)}\right\rfloor\le-\log p(x), \] 
so $\E[J]\le H(X)$; as $J$ takes
values in $\Z_{\ge0}$, (E7) from \cref{lem:entropy-facts} gives
\[ H(J)\le f(\E[J])\le f(H(X))\le\log(1+H(X))+\log e. \]
 
(c) As $J$ is a function of $X$, by the chain rule and the support bound,
\[ H(X) = H(X,J) = H(J) + H(X|J) \leq H(J) + \sum_j \P(J=j)\log\abs{A_j}; \]
for the upper bound, $\abs{A_j}<2^{j+1}$ gives
\[ \sum_j \P(J=j)\log\abs{A_j} \le \E[J]+1\le H(X)+1 \]
by (b).

(d) This follows from the same argument as in (a).
 
(e) The proofs of (a) and (d) apply verbatim.
For (c), the proof of the lower bound applies verbatim, and for the upper bound, as $\log\abs{A_i}\leq j(i)+1$ where $j(i)$ is the dyadic index such that $A_i\subseteq A_{j(i)}$, analogous to before we have
\[ \sum_i \P(I=i)\log\abs{A_i} \leq 1 + \sum_i \P(I=i) j(i) = \E[J]+1 \leq H(X) + 1. \]
As $I$ is a function of $(J,\operatorname{sgn}(X))$, by (b) we have
\[ H(I) \leq H(J) + H(\operatorname{sgn}(X)) \leq \log(1+H(X))+\log e + 1. \qedhere \]
\end{proof}
We concatenate \cref{lem:layers} and \cref{lem:vadhan} into the following result.
\begin{lemma}\label{lem:unif-decomp}
Let $X$ be finitely supported with $H=H(X)>0$.
Then we can augment our probability space to include a random variable $L$ taking values $\ell$ in a finite set, together with sets $G_\ell\subseteq\operatorname{supp}X$ and integers $n_\ell\ge0$, such
that:
\begin{enumerate}
\item $|G_\ell|=2^{n_\ell}$;
\item $X|L=\ell\sim\Unif(G_\ell)$;
\item $H(X|L) = \E[n_L] \ge H-\Delta(H)$, where $\Delta(H)=\log(1+H)+\log e+2$.
\end{enumerate}
\end{lemma}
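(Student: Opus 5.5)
We obtain the decomposition in two stages: first cut $X$ into dyadic layers with \cref{lem:layers}, then split each layer into exactly uniform pieces with \cref{lem:vadhan}. The label $L$ records the layer index together with the Vadhan piece.

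\textbf{Step 1 (layering).} Apply \cref{lem:layers} to obtain the partition $\{A_j\}$, the index $J$, and the conditional laws $X_j$. For each $j$ with $\P(J=j)>0$, part (a) gives $H_\infty(X_j)>\log\abs{A_j}-1$.

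\textbf{Step 2 (choosing the piece size).} Set $n_j=\lfloor \log\abs{A_j}-1\rfloor$ when $\abs{A_j}\ge2$, and $n_j=0$ otherwise. Then $M_j=2^{n_j}$ is a positive integer, $n_j\ge0$, and $H_\infty(X_j)\ge n_j$. When $\abs{A_j}=1$, the law $X_j$ is a point mass, so $H_\infty(X_j)=0=n_j$ and the uniform piece is trivial. Note that $n_j>\log\abs{A_j}-2$ in every case.

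\textbf{Step 3 (splitting each layer).} Apply \cref{lem:vadhan} to $X_j$ with $m=n_j$. This yields weights $p^{(j)}_i$ and sets $A^{(j)}_i\subseteq A_j$ of size $2^{n_j}$ with
\[
\operatorname{Law}(X_j)=\sum_i p^{(j)}_i\,\Unif\bigl(A^{(j)}_i\bigr).
\]

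\textbf{Step 4 (augmenting the probability space).} Set $L=(J,K)$, where, conditionally on $X=x$ and $J=j$, the variable $K$ takes the value $i$ with probability
\[
\frac{p^{(j)}_i\,\P\bigl(U_{A^{(j)}_i}=x\bigr)}{\P(X_j=x)}.
\]
This is a valid conditional law precisely because of the mixture identity in Step 3. The joint law of $(J,K)$ then has $\P(K=i\mid J=j)=p^{(j)}_i$, and $X\mid L=(j,i)$ is uniform on $G_{(j,i)}:=A^{(j)}_i$, which has size $2^{n_j}$. Since only finitely many $j$ have $\P(J=j)>0$, $L$ takes values in a finite set.

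\textbf{Step 5 (entropy bound).} By (2), $H(X\mid L)=\E[n_L]=\E[n_J]$. Using $n_j>\log\abs{A_j}-2$ and \cref{lem:layers}(c),
\[
\E[n_J]\ \ge\ \sum_j\P(J=j)\log\abs{A_j}-2\ \ge\ H-H(J)-2.
\]
By \cref{lem:layers}(b), $H(J)\le\log(1+H)+\log e$, so $\E[n_J]\ge H-\Delta(H)$.

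The main point needing care is integrality: \cref{lem:vadhan} requires $M=2^m$ to be an integer, which forces rounding $\log\abs{A_j}-1$ down to an integer. This rounding is exactly what costs the additive $2$ in $\Delta(H)$ rather than $1$. The degenerate singleton layers also need the separate check made in Step 2.
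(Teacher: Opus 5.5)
Your proof is correct and matches the paper's argument step for step: the same choice $n_j=\lfloor\log\abs{A_j}\rfloor-1$ (with $n_j=0$ on singleton layers), \cref{lem:vadhan} applied to each layer, the label $L=(J,\text{piece})$, and the bound $\E[n_J]>\E[\log\abs{A_J}]-2\ge H-H(J)-2\ge H-\Delta(H)$ from \cref{lem:layers}(b),(c). Your explicit conditional law for the piece index given $(X,J)$ is a slightly more careful version of the paper's generative description of the augmented probability space, and it is valid.
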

When we say we can augment our probability space, we simply mean that $L$ is not necessarily $X$-measurable; $X$ still retains its original distribution.
\begin{proof}
Let $A_j$, $J$, $w_j=\P(J=j)$, $a_j=\log\abs{A_j}$ be the \emph{unrefined} dyadic layer data of
\cref{lem:layers} for $X$, and let
\[
  n_j=\begin{cases}\floor{a_j}-1,&\abs{A_j}\ge2,\\[2pt] 0,&\abs{A_j}=1.\end{cases}
\]
Then $n_j\in\Z_{\ge0}$ and $2^{n_j}$ is a positive integer with $2^{n_j}\le\abs{A_j}$. For
$\abs{A_j}\ge2$, \cref{lem:layers}(a) gives
\[ H_\infty(X\mid J=j)>a_j-1\ge\floor{a_j}-1=n_j; \]
for $\abs{A_j}=1$, we have $H_\infty(X\mid J=j)=0=n_j$.
So \cref{lem:vadhan}, applied to $\mathrm{Law}(X\mid J=j)$, which
is supported in $A_j$, with parameter $n_j$, yields weights $p_{j,i}\ge0$ summing to $1$ and sets
$G_{j,i}\subseteq A_j$ with $|G_{j,i}|=2^{n_j}$ whose $p_{j,i}$-weighted mixture of uniforms on $G_{j,i}$ is $\mathrm{Law}(X\mid J=j)$.
Let $L=(J,I)$ take value $(j,i)$ accordingly, where one can consider selecting $J=j$ with probability $w_j$, then selecting $I=i$ conditioned on $J=j$ with probability $p_{j,i}$, and then sampling $X$ uniformly from $G_{j,i}$.
Then the notation $n_L$ from the lemma statement corresponds to $n_J$, and (1) and (2) are immediate.

For (3), note that $n_j>a_j-2$ in both cases.
Hence, by (b) and (c) from \cref{lem:layers},
\[
  \E[n_L]=\sum_jw_jn_j > \E[a_J]-2 \ge H-H(J)-2
   \ge H-\log(1+H)-\log e-2 = H-\Delta(H). \qedhere
\]
\end{proof}
\section{An asymmetric Solymosi inequality}\label{sec:solymosi}
For finite sets $A,B\subset\R\setminus\set{0}$ we define their \emph{multiplicative cross-energy}
\begin{equation}\label{eq:mult-energy}
  E_\times(A,B) = \#\{(a,b,a',b')\in A\times B\times A\times B:\ ab=a'b'\}.
\end{equation}
This section records Solymosi's \cite{Solymosi2009} sector-decomposition bound
\[
  E_\times(A,B) \ll \log\min\set{\abs A,\abs B}\,\abs{A+A}\,\abs{B+B}
\]
with explicit constants and assuming $A$, rather than both $A$ and $B$, is constrained to be of a single sign, i.e., \emph{sign-pure}.
 
This energy bound is essentially known, and we claim no novelty for it.
It appears, without effective constants, as \cite[Theorem 6]{KonyaginShkredov2015}, which holds without any sign constraints on $A$.
We reproduce a proof, following Solymosi \cite{Solymosi2009}, to provide explicit constants.
 
\begin{theorem}[see \protect{\cite[Theorem 6]{KonyaginShkredov2015}}]\label{thm:asym-solymosi}
Let $A\subset\R_{>0}$ and $B\subset\R\setminus\{0\}$ be finite and nonempty, and let
\[
  L_\times(A,B)=\floor{\log\min\set{\abs A,\abs B}}+1 .
\]
Then
\begin{equation}\label{eq:asym-solymosi}
  E_\times(A,B) \le 8L_\times(A,B)\,\abs{A+A}\,\abs{B+B} .
\end{equation}
Consequently, since all relevant cardinalities are unchanged when $A$ is
replaced by $-A$, the bound \eqref{eq:asym-solymosi} holds whenever $A$ is finite, nonempty
and sign-pure and $B$ is finite, nonempty and zero-free.
\end{theorem}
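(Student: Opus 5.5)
The plan is to follow Solymosi's sector argument directly. The first step is to rewrite the energy as a sum over lines through the origin. Observe that $ab=a'b'$ if and only if $b'/a=b/a'$. Since $A\subset\R_{>0}$, these ratios are well defined. For $\lambda\in\R\setminus\{0\}$ put
\[
  n_\lambda=\#\{(a,b)\in A\times B:\ b=\lambda a\},
\]
the number of points of the grid $A\times B\subset\R^2$ on the line $y=\lambda x$. A quadruple with $ab=a'b'$ corresponds exactly to an ordered pair of grid points $(a,b')$ and $(a',b)$ on a common line $y=\lambda x$. Hence
\[
  E_\times(A,B)=\sum_\lambda n_\lambda^2 .
\]
Each such line meets $A\times B$ in at most $\min\set{\abs A,\abs B}$ points, because distinct points on it have distinct first and distinct second coordinates. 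Therefore $1\le n_\lambda\le\min\set{\abs A,\abs B}$ whenever $n_\lambda>0$.

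Next I would split dyadically. For $0\le k\le\floor{\log\min\set{\abs A,\abs B}}$, let $\Lambda_k=\{\lambda: 2^k\le n_\lambda<2^{k+1}\}$. This gives exactly $L_\times(A,B)$ classes, so it suffices to prove
\[
  \sum_{\lambda\in\Lambda_k}n_\lambda^2\le 8\abs{A+A}\abs{B+B}
\]
for each $k$. Fix $k$, write $N=2^k$ and $s=\abs{\Lambda_k}$, and order the slopes $\lambda_1<\dots<\lambda_s$. Then $\sum_{\Lambda_k}n_\lambda^2<4N^2s$.

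The key step is the sumset count. All grid points lie in the open right half-plane $\{x>0\}$, which is why sign-purity of $A$ is needed. Consequently the rays $\{y=\lambda_i x,\ x>0\}$ are linearly ordered by angle, and the open sectors strictly between consecutive rays $i,i+1$ are pairwise disjoint. For $p$ on line $i$ and $q$ on line $i+1$, the sum $p+q$ lies in $(A+A)\times(B+B)$ and strictly inside the $i$-th sector. The map $(p,q)\mapsto p+q$ is injective for fixed $i$, since the two direction vectors are linearly independent. So each consecutive pair contributes at least $N^2$ distinct points of $(A+A)\times(B+B)$, and different pairs contribute disjoint sets. This gives
\[
  N^2(s-1)\le\abs{A+A}\abs{B+B}.
\]

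To finish, if $s\ge2$ then $4N^2s\le8N^2(s-1)\le8\abs{A+A}\abs{B+B}$. If $s=1$, then $N\le\min\set{\abs A,\abs B}$ gives $4N^2\le4\abs{A}\abs{B}\le4\abs{A+A}\abs{B+B}$. Summing over the $L_\times(A,B)$ classes yields \eqref{eq:asym-solymosi}. The final claim follows because replacing $A$ by $-A$ preserves $E_\times$ (via $ab=a'b'\iff(-a)b=(-a')b$) and preserves $\abs{A+A}$.

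The step needing the most care is the disjointness of the sectors. It relies only on all $x$-coordinates being positive, and it must allow $B$ to contain both signs, so that some slopes are negative. Working with rays in the right half-plane rather than full lines handles this.
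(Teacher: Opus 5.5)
Your proposal is correct and follows the paper's proof essentially step for step: the ray decomposition $E_\times(A,B)=\sum_\lambda n_\lambda^2$, the dyadic classes by $n_\lambda$, the consecutive-ray sector argument giving $N^2(s-1)\le\abs{A+A}\abs{B+B}$, and the $s\ge2$ versus $s=1$ split all match; the only cosmetic difference is that you bound every dyadic class and sum over the $L_\times(A,B)$ classes, whereas the paper pigeonholes to one class carrying at least a $1/L_\times(A,B)$ fraction of the energy. Two small slips: the sign-flip equivalence should read $ab=a'b'\iff(-a)b=(-a')b'$, and an empty class ($s=0$) needs $\le$ rather than $<$ but is trivially fine.
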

\begin{proof}
We follow Solymosi \cite{Solymosi2009}, tracking the constants.
For $s\neq0$, let
\[ n_s=\abs{A\cap s^{-1}B}=\#\set{(a,b)\in A\times B:\ b=sa} \]
be the number of points of the grid $A\times B$ on the ray $\{y=sx,\ x>0\}$.
Since $0\notin B$ and every element of $A$ is strictly positive, each $(a,b)\in A\times B$ lies on exactly one such ray, that of slope $s=\frac{b}{a}\ne0$, where $s$ may be negative.
Consequently $\sum_sn_s=\abs A\abs B$ and
\[
  \sum_sn_s^2=\#\set{(a,b,a',b')\in(A\times B)^2:\ \frac ba=\frac{b'}{a'}}
  =E_\times(A,B).
\]
Let $N_0=\min\set{\abs A,\abs B}$, so that $n_s\le N_0$ for every $s$ and $L_\times(A,B)=\floor{\log N_0}+1$.
Define the dyadic classes
\[
  A_j=\{s:\ 2^j\le n_s<2^{j+1}\},\qquad
  j=0,1,\dots,\floor{\log N_0},
\]
which partition the $s$ with $n_s\ge1$, so as there are $L_\times(A,B)$ possible values for $j$, there exists $j$ with
\[ \sum_{s\in A_j}n_s^2\ge \frac{E_\times(A,B)}{L_\times(A,B)}. \]
Fix such a $j$, enumerate $A_j=\{s_1<s_2<\dots<s_k\}$, and set
\[ \Pi_\ell=(A\times B)\cap\{(x,y):\ y=s_\ell x\},\qquad1\le\ell\le k, \]
so that $2^j\le\abs{\Pi_\ell}=n_{s_\ell}<2^{j+1}$ for every $\ell$. Then
\begin{equation*}
  E_\times(A,B) \le L_\times(A,B)\sum_{s\in A_j}n_s^2 < L_\times(A,B)\,k\,2^{2j+2}
   = 4L_\times\,k\,2^{2j} ,
\end{equation*}
so it suffices to prove
\begin{equation}\label{eq:solymosi-target}
  k\,2^{2j} \le 2\abs{A+A}\abs{B+B} .
\end{equation}
 
For each $1\le\ell\le k-1$, note that the addition map $\Pi_\ell\times\Pi_{\ell+1}\to\R^2$ given by $(p,q)\mapsto p+q$ is injective as $s_\ell < s_{\ell+1}$.
Therefore
\begin{equation}\label{eq:solymosi-injective}
  \abs{\Pi_\ell+\Pi_{\ell+1}}=\abs{\Pi_\ell}\abs{\Pi_{\ell+1}} \ge 2^{2j}.
\end{equation}
Similarly, as $A\subset\R_{>0}$, the slopes of the points in $\Pi_\ell+\Pi_{\ell+1}$ are all in $(s_\ell,s_{\ell+1})$, so the $\Pi_{\ell}+\Pi_{\ell+1}$ for $1\leq \ell \leq k-1$ are pairwise disjoint.
Finally, $\Pi_\ell+\Pi_{\ell+1}\subseteq(A+A)\times(B+B)$, so disjointness and
\eqref{eq:solymosi-injective} give
\[
  (k-1)2^{2j} \le \sum_{\ell=1}^{k-1}\abs{\Pi_\ell+\Pi_{\ell+1}} = \abs{\bigcup_{\ell=1}^{k-1}\Pi_\ell+\Pi_{\ell+1}}
   \le \abs{A+A}\abs{B+B}.
\]
If $k\ge2$ then $k\le2(k-1)$ and \eqref{eq:solymosi-target} follows.
If $k=1$ then $2^j\le\abs{\Pi_1} = n_{s_1} \le N_0$, so
\[ k2^{2j}=2^{2j}\le\abs{\Pi_1}^2 \leq N_0^2 \leq \abs A\abs B\le\abs{A+A}\abs{B+B} \]
and \eqref{eq:solymosi-target} holds again.
\end{proof}
 
We now convert \cref{thm:asym-solymosi} into a lower bound on the entropy of a product, in terms of the doubling defects $g(P_r)=2\log\abs{P_r}-\log\abs{P_r+P_r}$ of the pieces $P_r$ of the sign-refined dyadic layers of \cref{lem:layers}.
We state the following result in slightly larger generality.
\begin{proposition}\label{prop:h2-product}
Let $\widetilde X$ be zero-free and finitely supported, and let $\{P_r\}_r$ be any
partition of $\operatorname{supp}\widetilde X$ into sign-pure parts whose conditional mass
functions satisfy
\begin{equation}\label{eq:part-flatness}
  p_{P_r}(x) \le \frac2{\abs{P_r}}\qquad\text{for all }x\in P_r,
  \qquad\text{where } p_{P_r}=\operatorname{Law}(\widetilde X\mid\widetilde X\in P_r).
\end{equation}
Then, with
$g(P)=2\log\abs P-\log\abs{P+P}$ and $\widetilde X'$ an independent copy of $\widetilde X$,
\begin{equation}\label{eq:h2-product}
  H(\widetilde X\widetilde X') \ge 2\sum_r\P(\widetilde X\in P_r)\,g(P_r)
  -\log\paren{8\paren{\sum_r \P(\widetilde X\in P_r)\log\abs{P_r}+1}}-4 .
\end{equation}
\end{proposition}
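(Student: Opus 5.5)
Since the product $\widetilde X\widetilde X'$ has collision probability at least that of any conditioned piece, the natural route is to lower bound $H(\widetilde X\widetilde X')$ by its collision entropy $H_2$ (via (E4)), and then upper bound the collision probability $\P(\widetilde X\widetilde X'=\widetilde Y\widetilde Y')$, where $(\widetilde Y,\widetilde Y')$ is an independent copy of $(\widetilde X,\widetilde X')$, in terms of the multiplicative cross-energies of the parts. Write $w_r=\P(\widetilde X\in P_r)$. Conditioning on which parts the four variables land in, the collision probability equals
\[
\sum_{r,t,r',t'} w_rw_tw_{r'}w_{t'}\,\P\bigl(X_rX_t=X_{r'}X_{t'}\bigr),
\]
with $X_r\sim p_{P_r}$. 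The flatness hypothesis \eqref{eq:part-flatness} gives
\[
\P(X_rX_t=X_{r'}X_{t'})\le \frac{16\,\#\{ab=a'b'\}}{\abs{P_r}\abs{P_t}\abs{P_{r'}}\abs{P_{t'}}},
\]
where the count is over quadruples in $P_r\times P_t\times P_{r'}\times P_{t'}$. This mixed count I would bound by Cauchy--Schwarz on the representation functions $\rho_{r,t}(x)=\#\{(a,b)\in P_r\times P_t: ab=x\}$, giving $\sum_x\rho_{r,t}\rho_{r',t'}\le E_\times(P_r,P_t)^{1/2}E_\times(P_{r'},P_{t'})^{1/2}$. The four-fold sum then factors as the square of $\sum_{r,t}w_rw_t\,4E_\times(P_r,P_t)^{1/2}/(\abs{P_r}\abs{P_t})$.

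Next I apply \cref{thm:asym-solymosi}, which is legitimate because each $P_r$ is sign-pure and $P_t$ is zero-free. It gives $E_\times(P_r,P_t)\le 8L\abs{P_r+P_r}\abs{P_t+P_t}$ with $L\le \log\min(\abs{P_r},\abs{P_t})+1$. Writing $\abs{P+P}=\abs P^2 2^{-g(P)}$, each term becomes
\[
4\sqrt{8L_{r,t}}\,2^{-g(P_r)/2}2^{-g(P_t)/2},
\]
so the collision probability is at most $128\bigl(\sum_{r,t}w_rw_t\sqrt{L_{r,t}}\,2^{-(g(P_r)+g(P_t))/2}\bigr)^2$.

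Taking $-\log$, I need to pass from this log of an average to an average of logs, namely $\sum w_r g(P_r)$. This is the main obstacle, because Jensen goes the wrong way for $-\log\E[2^{-\cdot}]$. I expect to fix this by a different route: rather than bounding $H$ by $H_2$ globally, lower bound $H(\widetilde X\widetilde X')\ge H(\widetilde X\widetilde X'\mid R,T)$, where $R,T$ are the part indices of $\widetilde X,\widetilde X'$ (by (E1)). For each fixed pair $(r,t)$, apply $H\ge H_2$ to $X_rX_t$ separately. Then the bound is
\[
\sum_{r,t} w_rw_t\bigl[g(P_r)+g(P_t)-\log(128 L_{r,t})\bigr],
\]
where the collision computation now uses only $E_\times(P_r,P_t)$ itself, with the factor $16$ from flatness and $8L$ from Solymosi. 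The $g$-terms average to $2\sum_r w_rg(P_r)$ exactly. For the logarithmic term, concavity of $\log$ (Jensen, now in the right direction) together with $L_{r,t}\le\log\abs{P_r}+1$ gives
\[
\sum w_rw_t\log L_{r,t}\le \log\Bigl(\sum_r w_r\log\abs{P_r}+1\Bigr).
\]
The constant is $\log 128 = \log 8 + 4$, matching \eqref{eq:h2-product}. The remaining detail is the edge cases, where $g$ may be negative or $\abs{P_r}=1$; these are harmless since all the bounds used hold for arbitrary finite nonempty sets.
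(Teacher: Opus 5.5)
Your final argument is correct and is essentially the paper's proof: condition on the part labels $(R,T)$, apply $H\ge H_2$ cell by cell with the factor $16$ from flatness and $8L_\times$ from \cref{thm:asym-solymosi}, then handle the $L_\times$ term by concavity of $\log$. The only cosmetic difference is that you bound $L_\times(P_r,P_t)\le\log\abs{P_r}+1$ instead of by the paper's symmetric average $1+\frac12(\log\abs{P_r}+\log\abs{P_t})$; both give the same expression after summing over $t$.
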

 
\begin{proof}
Let $R,R'$ be the part labels of $\widetilde X,\widetilde X'$, e.g., $\widetilde X \in P_R$.
Write $V_r=\operatorname{Law}(\widetilde X\mid R=r)$, with mass function
$p_{P_r}$ supported on $P_r$, and let $V_t'$ be an independent copy of $V_t$.
Then
\begin{align}
    \notag H(\widetilde X\widetilde X')
    &\ge H(\widetilde X\widetilde X'\mid R,R')
    \\ &=\sum_{r,t}\P(R=r)\P(R'=t) H(\widetilde X\widetilde X'\mid R=r,R'=t) \notag
    \\ &=\sum_{r,t}\P(R=r)\P(R'=t) H(V_r V_t') \notag
    \\ &\geq\sum_{r,t}\P(R=r)\P(R'=t) H_2(V_rV_t'). \label{eq:h2-cells}
\end{align}
To bound $H_2(V_rV_t')$, fix $r,t$ and let
\[ n_\times(z)=\#\{(x,y)\in P_r\times P_t:xy=z\} \qquad\text{for }z\in\R. \]
By \eqref{eq:part-flatness} on both factors,
\[
  \P(V_rV_t'=z)=\sum_{\substack{(x,y)\in P_r\times P_t\\ xy=z}}p_{P_r}(x)p_{P_t}(y)
   \le \frac{4}{\abs{P_r}\abs{P_t}}\,n_\times(z),
\]
and therefore, since $\sum_zn_\times(z)^2=E_\times(P_r,P_t)$ by \eqref{eq:mult-energy}, 
\[ \sum_z\P(V_rV_t'=z)^2 \le \frac{16}{\paren{\abs{P_r}\abs{P_t}}^2}\,E_\times(P_r,P_t), \]
that is,
\begin{align}
    H_2(V_rV_t')
    &= -\log\P(V_rV_t' = V_r'V_t) \notag
    \\ &= -\log\paren{\sum_z\P(V_rV_t'=z)^2} \notag
    \\ &\ge 2\log\abs{P_r}+2\log\abs{P_t}-\log E_\times(P_r,P_t)-4. \label{eq:h2-cell-bound}
\end{align}

By hypothesis, the parts are sign-pure and so $0\notin P_t$.
Then \cref{thm:asym-solymosi} yields
\[
  \log E_\times(P_r,P_t) \le \log\paren{8L_\times(P_r,P_t)}
  +\log\abs{P_r+P_r}+\log\abs{P_t+P_t},
\]
where
\[ L_\times(P_r,P_t)=\floor{\log\min\set{\abs{P_r},\abs{P_t}}}+1. \]
Substituting this into \eqref{eq:h2-cell-bound} and recalling the definition of $g$,
\[ H_2(V_rV_t') \ge g(P_r)+g(P_t)-\log\paren{8L_\times(P_r,P_t)}-4. \]
Returning to \eqref{eq:h2-cells}, this implies
\begin{align}
    H(\widetilde X \widetilde X')
    &\geq -4 + \sum_{r} \P(R=r)g(P_r) + \sum_t\P(R'=t)g(P_t) - \sum_{r,t}\P(R=r)\P(R'=t)\log(8L_\times(P_r,P_t)) \notag
    \\ &= -4 + 2\sum_r \P(R=r)g(P_r) - \sum_{r,t} \P(R=r)\P(R'=t)\log(8L_\times(P_r,P_t)). \label{eq:h2-double}
\end{align}
To bound the double sum,
\[
  L_\times(P_r,P_t) \le 1+\min\set{\log\abs{P_r},\log\abs{P_t}}
   \le 1+\frac12\paren{\log\abs{P_r}+\log\abs{P_t}},
\]
so that by Jensen's inequality,
\begin{align*}
    &\sum_{r,t}\P(R=r)\P(R'=t)\log\paren{8L_\times(P_r,P_t)}
    \\ &\le \log\paren{\sum_{r,t}\P(R=r)\P(R'=t)\, 8\paren{1+\frac12\log\abs{P_r}+\frac12\log\abs{P_t}}}
    \\ &= \log\paren{8\paren{1+\sum_r \P(R=r)\log\abs{P_r}}}.
\end{align*}
Substituting this into \eqref{eq:h2-double} yields \eqref{eq:h2-product}.
\end{proof}
The same argument bounds the entropy of a quotient, which will also be needed in \cref{sec:eer}.
\begin{proposition}\label{prop:h2-ratio}
Under the same assumptions and notation as \cref{prop:h2-product},
\[ H(\widetilde X/\widetilde X') \ge 2\sum_r\P(\widetilde X\in P_r)\,g(P_r)-\log\paren{8\paren{\sum_r \P(\widetilde X\in P_r)\log\abs{P_r}+1}}-4 .\]
\end{proposition}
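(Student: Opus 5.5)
We will repeat the proof of \cref{prop:h2-product} almost verbatim, replacing products by quotients. The only place the algebraic structure enters is the collision count, and for quotients it is the same multiplicative cross-energy after inverting the second factor.

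First, as in \eqref{eq:h2-cells}, we condition on the part labels $R,R'$ and pass to collision entropy:
\[
H(\widetilde X/\widetilde X')\ge\sum_{r,t}\P(R=r)\P(R'=t)\,H_2(V_r/V_t').
\]
For fixed $r,t$, let $n_{/}(z)=\#\{(x,y)\in P_r\times P_t: x/y=z\}$. The flatness hypothesis \eqref{eq:part-flatness} gives $\P(V_r/V_t'=z)\le \frac{4}{\abs{P_r}\abs{P_t}}n_{/}(z)$. Moreover,
\[
\sum_z n_{/}(z)^2=\#\{x/y=x'/y'\}=\#\{xy'=x'y\}=E_\times(P_r,P_t),
\]
where the middle equality uses that all elements are nonzero. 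Hence the analogue of \eqref{eq:h2-cell-bound} holds unchanged:
\[
H_2(V_r/V_t')\ge 2\log\abs{P_r}+2\log\abs{P_t}-\log E_\times(P_r,P_t)-4.
\]
Equivalently, one can observe that $n_{/}(z)=\#\{(x,w)\in P_r\times P_t^{-1}: xw=z\}$ and that the energy identity $\sum_z n_{/}(z)^2=E_\times(P_r,P_t)$ holds directly. This avoids any need to control $\abs{P_t^{-1}+P_t^{-1}}$, which would be the wrong quantity.

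From here the argument is identical to the proof of \cref{prop:h2-product}. Since the parts are sign-pure and zero-free, \cref{thm:asym-solymosi} applies to $(P_r,P_t)$ and bounds $E_\times(P_r,P_t)$ by $8L_\times(P_r,P_t)\abs{P_r+P_r}\abs{P_t+P_t}$, which yields $g(P_r)+g(P_t)-\log(8L_\times)-4$. We then average over $r,t$ and apply the same Jensen step to the $\log(8L_\times)$ term.

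The only point needing care is the identification of the quotient collision count with $E_\times(P_r,P_t)$, which uses zero-freeness to cross-multiply. We expect no real obstacle beyond this.
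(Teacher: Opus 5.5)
Your proposal is correct and matches the paper's proof. The paper likewise reruns the argument of \cref{prop:h2-product} with the quotient count $n(z)=\#\{(x,y)\in P_r\times P_t: x/y=z\}$, notes that $\sum_z n(z)^2=\#\{xy'=x'y\}=E_\times(P_r,P_t)$ by cross-multiplying nonzero denominators, and leaves every other step unchanged.
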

 
\begin{proof}
The quotient is well-defined as $\widetilde X'\ne0$.
Fix parts $P_r,P_t$ and, for $z\ne0$, similar to before let
\[ n(z)=\#\{(x,y)\in P_r\times P_t:\ x/y=z\}. \]
Then the proof proceeds identically with $n(z)$ in place of $n_\times(z)$, where in particular
\[ \sum_z n(z)^2 = \#\set{(x,y,x',y')\in(P_r\times P_t)^2:\ \frac xy=\frac{x'}{y'}}
  =\#\{(x,y,x',y'):\ xy'=x'y\}=E_\times(P_r,P_t). \qedhere \]
\end{proof}

\section{Small-doubling linear entropic Elekes--Ruzsa theorem}\label{sec:eer} 
In this section we prove the finite-support special case of \cref{thm:eer} in \cref{cor:eer-eff}, which we extend to the general case in \cref{sec:main}.
Given \cref{prop:h2-product}, which lower bounds the product entropy in terms of the doublings of the sign-refined dyadic layers of \cref{lem:layers}, our remaining task is to bound these doublings via the entropic doubling of the overall random variable.
The main technical lemma is \cref{lem:gibbs}, a doubling estimate for an arbitrary subset of a dyadic layer.
Throughout this section, $\widetilde X$ will denote a zero-free finitely supported real-valued random variable with $\widetilde H=H(\widetilde X)\ge2$, and
$\widetilde X'$ denotes an independent copy; recall
\[
  \widetilde u=\frac{H(\widetilde X+\widetilde X')}{\widetilde H}-1\in[0,1],
  \qquad
  \widetilde m=\frac{H(\widetilde X\widetilde X')}{\widetilde H} .
\]
We fix the sign-refined dyadic layers of \cref{lem:layers}(e): a
partition $\set{A_i}_{i\in\mc I}$ of $\operatorname{supp}\widetilde X$ into sign-pure sets,
indexed by a finite set $\mc I$, with $I$ the ($\widetilde X$-measurable) random variable determined by $\widetilde X\in A_I$, and for brevity let
\[
  w_i=\P(I=i),\qquad a_i=\log\abs{A_i},\qquad
  V_i=\mathrm{Law}(\widetilde X\mid I=i)\ \text{with probability mass function }p_i .
\]
Layers of zero weight are discarded from $\mc I$, so $w_i>0$ for all $i\in\mc I$.
Only four consequences of \cref{lem:layers} are used below: the two-sided flatness
\begin{equation}\label{eq:flat}
  \frac1{2\abs{A_i}} \le p_i(x) \le \frac2{\abs{A_i}}\qquad\text{for all }x\in A_i,
\end{equation}
of part (a); the inherited flatness
\[ \max p_D<2\min p_D\le\frac{2}{\abs D} \]
for the conditional probability mass function of any
nonempty $D\subseteq A_i$, from part (d); the label-entropy bound
\begin{equation}\label{eq:hI}
  H(I) \le \log(1+\widetilde H)+\log e+1
\end{equation}
of part (e); and the two-sided estimate
\begin{equation}\label{eq:abar}
  \widetilde H-\log(1+\widetilde H)-\log e-1 \le \widetilde H-H(I) \le \overline a:=\sum_{i\in\mc I}w_ia_i \le \widetilde H+1
\end{equation}
of part (c). Note that \eqref{eq:flat} gives $H_\infty(V_i)\ge a_i-1$.
 
\subsection{Bounding excess entropy}\label{ss:budgets}
In this subsection, as the layers of the dyadic decomposition are flat, the entropy of a sum of two of them will be measured against the reference $\max\set{a_i,a_j}$, a heuristic proxy for its entropy, and the total excess over this reference, summed across all pairs of dyadic layers, will be bounded by $\widetilde u\widetilde H$.
 
\begin{definition}\label{def:budgets}
For an ordered pair $(i,j)\in\mc I\times\mc I$, with $V_j'$ an independent copy of $V_j$, let
\[
  e^\pm_{ij}=H(V_i\pm V_j')-\max\set{a_i,a_j}+1,
\]
by which we mean
\[ e_{ij}^+ = H(V_i+V_j') - \max\set{a_i,a_j} + 1 \qquad\text{and}\qquad e_{ij}^- = H(V_i-V_j') - \max\set{a_i,a_j}+1, \]
with all statements involving $e_{ij}^\pm$ holding for both cases.
The $e^\pm_{ij}$ are nonnegative by (E5) of \cref{lem:entropy-facts} along with
$H(V_i)\ge H_\infty(V_i)\ge a_i-1$ and likewise for $j$:
\[ e_{ij}^\pm \geq \max\set{H(V_i),H(V_j)}-\max\set{a_i,a_j}+1 \geq \max\set{H_\infty(V_i),H_\infty(V_j)}-\max\set{a_i,a_j}+1 \geq 0. \]
\end{definition}
 
\begin{lemma}\label{lem:budgets}
Let $\beta\ge1$. Then
\begin{equation}\label{eq:budgets}
  \sum_{i,j}w_iw_j\paren{\beta(e^+_{ij}+e^-_{ij})+\abs{a_i-a_j}}
   \le \beta\paren{3\widetilde u\widetilde H+2H(I)+2} .
\end{equation}
\end{lemma}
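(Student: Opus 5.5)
The plan is to compare the averaged pairwise sum and difference entropies with the global ones by conditioning on the layer labels, and then to rewrite $\max\set{a_i,a_j}$ as an average plus half a gap. The $\abs{a_i-a_j}$ terms then appear automatically from the $\max$ in the definition of $e^\pm_{ij}$.

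\textbf{Step 1 (conditioning on labels).} Let $I,I'$ be the labels of $\widetilde X,\widetilde X'$. They are independent, and given $I=i,I'=j$ the pair $(\widetilde X,\widetilde X')$ has law $V_i\otimes V_j'$. By (E1),
\[
H(\widetilde X\pm\widetilde X') \ge H(\widetilde X\pm\widetilde X'\mid I,I') = \sum_{i,j}w_iw_j H(V_i\pm V_j').
\]
By definition $H(\widetilde X+\widetilde X')=(1+\widetilde u)\widetilde H$, and \eqref{eq:difference-from-sum} gives $H(\widetilde X-\widetilde X')\le(1+2\widetilde u)\widetilde H$. Adding the two,
\[
\sum_{i,j}w_iw_j\bigl(H(V_i+V_j')+H(V_i-V_j')\bigr)\le (2+3\widetilde u)\widetilde H .
\]

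\textbf{Step 2 (unpacking the budgets).} By \cref{def:budgets}, $e^+_{ij}+e^-_{ij}=H(V_i+V_j')+H(V_i-V_j')-2\max\set{a_i,a_j}+2$. Use the identity $2\max\set{a_i,a_j}=a_i+a_j+\abs{a_i-a_j}$. Summing against $w_iw_j$, and using $\sum_{i,j}w_iw_j(a_i+a_j)=2\overline a$, gives
\[
\sum_{i,j}w_iw_j\bigl(e^+_{ij}+e^-_{ij}+\abs{a_i-a_j}\bigr)\le (2+3\widetilde u)\widetilde H-2\overline a+2 .
\]

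\textbf{Step 3 (the weight $\beta$ and the label entropy).} Multiply the last display by $\beta$. Since $\beta\ge1$ and $\abs{a_i-a_j}\ge0$, we have $\abs{a_i-a_j}\le\beta\abs{a_i-a_j}$, so the left side of \eqref{eq:budgets} is at most $\beta\bigl((2+3\widetilde u)\widetilde H-2\overline a+2\bigr)$. Finally, the lower bound $\overline a\ge\widetilde H-H(I)$ from \eqref{eq:abar} gives $(2+3\widetilde u)\widetilde H-2\overline a+2\le 3\widetilde u\widetilde H+2H(I)+2$, which is \eqref{eq:budgets}.

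There is no real obstacle; the argument is short once the pieces are in place. The points to check are the following. Conditioning on $(I,I')$ is legitimate because $I$ is $\widetilde X$-measurable and the labels are independent. The difference entropy is controlled only through the Ruzsa triangle bound, and this is the source of the coefficient $3\widetilde u$ rather than $2\widetilde u$. The role of $\beta\ge1$ is simply to absorb the unweighted gap terms. The sums are over the finite index set $\mc I$, so no convergence issues arise.
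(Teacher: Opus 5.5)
Your proposal is correct and follows the paper's proof essentially step for step. Both condition on $(I,I')$, use $\max\set{a_i,a_j}=\frac12(a_i+a_j)+\frac12\abs{a_i-a_j}$, bound the difference entropy via \eqref{eq:difference-from-sum}, apply $\overline a\ge\widetilde H-H(I)$ from \eqref{eq:abar}, and use $\beta\ge1$ to absorb the gap terms; the only cosmetic difference is that you add the sum and difference bounds before unpacking $e^\pm_{ij}$, whereas the paper handles each sign separately and adds at the end.
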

 
\begin{proof}
Let $I$ and $I'$ be the layer labels of $\widetilde X$ and $\widetilde X'$; they are
independent with the same law, and conditionally on $\set{I=i,I'=j}$ the pair
$(\widetilde X,\widetilde X')$ is distributed as $(V_i,V_j')$. Hence,
\[
  H(\widetilde X\pm\widetilde X') \ge H(\widetilde X\pm\widetilde X'\mid I,I')
   = \sum_{i,j}w_iw_jH(V_i\pm V_j') .
\]
Since
\[ \max\set{a_i,a_j}=\frac12(a_i+a_j)+\frac12\abs{a_i-a_j}, \]
averaging gives
\[
  \sum_{i,j}w_iw_j\max\set{a_i,a_j}=\overline a+\frac12\sum_{i,j}w_iw_j\abs{a_i-a_j} ,
\]
so, by the definition of $e^\pm_{ij}$ and then \eqref{eq:abar},
\[
  \sum_{i,j}w_iw_je^\pm_{ij}+\frac12\sum_{i,j}w_iw_j\abs{a_i-a_j}
   \le H(\widetilde X\pm\widetilde X')-\overline a+1
   \le H(\widetilde X\pm\widetilde X')-\widetilde H+H(I)+1 .
\]
Now $H(\widetilde X+\widetilde X')=(1+\widetilde u)\widetilde H$ by definition, while
$H(\widetilde X-\widetilde X')\le2H(\widetilde X+\widetilde X')-\widetilde H
=(1+2\widetilde u)\widetilde H$ by \eqref{eq:difference-from-sum}.
Adding the two instances of the last display, one for each sign,
\[
  \sum_{i,j}w_iw_j\bracket{e^+_{ij}+e^-_{ij}+\abs{a_i-a_j}}
   \le 3\widetilde u\widetilde H+2H(I)+2 .
\]
Since $\beta\ge1$ and $\abs{a_i-a_j}\ge0$, the left-hand side of \eqref{eq:budgets} is at most
$\beta$ times the left-hand side of the last display.
\end{proof}
For each $i\in\mc I$, picking $j\in\mc I$ to minimize the summand immediately implies the following, which will be useful for our main technical lemma, \cref{lem:gibbs}.
\begin{corollary}\label{cor:partner}
Let $\beta\ge1$. There is a map $j(\cdot):\mc I\to\mc I$ with
\begin{equation}\label{eq:partner}
  \sum_{i\in\mc I}w_i\bracket{\beta(e^+_{i,j(i)}+e^-_{i,j(i)})+\abs{a_i-a_{j(i)}}}
   \le \beta\paren{3\widetilde u\widetilde H+2H(I)+2} .
\end{equation}
\end{corollary}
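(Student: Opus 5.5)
The plan is to derive \cref{cor:partner} from \cref{lem:budgets} by choosing, for each $i$, the best partner $j$, and comparing with an average over $j$. For fixed $\beta\ge1$, set
\[
  S_{ij}=\beta\paren{e^+_{ij}+e^-_{ij}}+\abs{a_i-a_j}\ \ge 0 ,
\]
which is a well-defined finite nonnegative real for each ordered pair $(i,j)\in\mc I\times\mc I$. Nonnegativity comes from \cref{def:budgets}. Finiteness holds because $\widetilde X$ is finitely supported, so every $V_i\pm V_j'$ has finite support and hence finite entropy.

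Since $\mc I$ is finite and nonempty, for each $i\in\mc I$ we can define $j(i)\in\mc I$ to be any index attaining $\min_{j\in\mc I}S_{ij}$, breaking ties arbitrarily; this defines the map $j(\cdot):\mc I\to\mc I$.

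The key comparison is that a minimum is at most any weighted average. The weights $(w_j)_{j\in\mc I}$ are nonnegative, sum to $1$, and are all positive because zero-weight layers were discarded. Hence for each $i$,
\[
  S_{i,j(i)}=\min_j S_{ij}\le\sum_{j\in\mc I}w_jS_{ij}.
\]
Multiplying by $w_i\ge0$ and summing over $i$ gives
\[
  \sum_i w_iS_{i,j(i)}\le\sum_{i,j}w_iw_jS_{ij}.
\]
By \eqref{eq:budgets} of \cref{lem:budgets}, the right-hand side is at most $\beta\paren{3\widetilde u\widetilde H+2H(I)+2}$, which is exactly \eqref{eq:partner}.

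There is no real obstacle here. The only points to check are that $\mc I$ is finite and nonempty, so the minimum is attained, and that the $w_j$ form a probability vector, so the averaging step is valid. Both follow from $\widetilde X$ being finitely supported and from the convention that layers with $w_i=0$ are discarded.
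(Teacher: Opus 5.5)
Your proposal is correct and follows the paper's own argument: for each $i$ choose $j(i)$ minimizing the summand, note that a minimum is at most its $w_j$-weighted average, and apply \cref{lem:budgets}. Your finiteness and nonemptiness checks are fine. Nonnegativity of $S_{ij}$ and strict positivity of the $w_j$ are not actually needed; it suffices that the $w_j$ are nonnegative and sum to $1$.
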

For the rest of \cref{sec:eer} we fix the partner map $j(\cdot)$ of \cref{cor:partner}. 
\begin{definition}\label{def:deficits}
Given a layer $i$ with partner $j=j(i)$, write
\[ r_+(s)=\P\paren{V_i+V_j'=s},\qquad r_-(t)=\P\paren{V_i-V_j'=t}; \]
both depend on the pair $(i,j)$, which is suppressed from the notation for simplicity.
Thus $r_+$ is a probability measure on $A_i+A_j$ and $r_-$ is a probability measure on $A_i-A_j$, with full support on those sets.

For $x\in A_i$, recalling $p_j$ denotes the probability mass function of $\widetilde X|I=j$, define
\[
  d_\pm(x)=\sum_{z\in A_j}p_j(z)\log\frac1{r_\pm(x\pm z)}-\max\set{a_i,a_j}+1.
\]
Equivalently, for $Z\sim V_j$,
\begin{equation}\label{eq:defdef}
  \E\bracket{\log\frac1{r_\pm(x\pm Z)}} = \max\set{a_i,a_j}-1+d_\pm(x).
\end{equation}
In essence, as opposed to $e_{ij}^\pm$ measuring the excess entropy of $V_i\pm V_j'$, the function $d_\pm(x)$ for $x\in A_i=\operatorname{supp} V_i$ measures the contribution to the excess entropy from the case $V_i=x$.
\end{definition}
The following result provides pointwise control on $d_\pm$, and its point (b) makes rigorous the previous observation.
\begin{lemma}\label{lem:deficit-props}
Let $i\in\mc I$ and $j=j(i)$. Then:
\begin{enumerate}
\item[(a)] $r_\pm\le2^{1-\max\set{a_i,a_j}}$ everywhere; consequently $d_\pm(x)\ge0$
pointwise for every $x\in A_i$;
\item[(b)] $\E[d_\pm(V_i)]=e^\pm_{i,j(i)}$;
\item[(c)] $0\le d_\pm(x)\le3+\min\set{a_i,a_j}$ for every $x\in A_i$.
\end{enumerate}
\end{lemma}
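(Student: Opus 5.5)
Throughout, write $M=\max\set{a_i,a_j}$ and $m_0=\min\set{a_i,a_j}$. The proof will rest on the two-sided flatness \eqref{eq:flat} of $p_i$ and $p_j$.

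\emph{Part (a).} Fix $s$. Then
\[
r_+(s)=\sum_{x\in A_i}p_i(x)\,p_j(s-x).
\]
Bound this sum in two ways. First use $p_j\le 2/\abs{A_j}$ together with $\sum_x p_i(x)=1$. Second, reindex by $z=s-x\in A_j$ and use $p_i\le 2/\abs{A_i}$ together with $\sum_z p_j(z)=1$. This gives $r_+(s)\le 2\min\set{\abs{A_i}^{-1},\abs{A_j}^{-1}}=2^{1-M}$. The same argument applies verbatim to $r_-$. Consequently $\log\frac1{r_\pm}\ge M-1$ pointwise. Averaging over $z$ against $p_j$ in the definition of $d_\pm(x)$ then yields $d_\pm(x)\ge0$.

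\emph{Part (b).} Average \eqref{eq:defdef} over $x\sim V_i$, with $Z\sim V_j$ independent. This gives
\[
\E\bracket{\log\frac1{r_\pm(V_i\pm V_j')}}=M-1+\E[d_\pm(V_i)].
\]
Since $r_\pm$ is exactly the law of $V_i\pm V_j'$, the left side equals $H(V_i\pm V_j')$. Comparing with the definition of $e^\pm_{ij}$ gives $\E[d_\pm(V_i)]=e^\pm_{i,j(i)}$.

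\emph{Part (c).} The lower bound is (a), so only the upper bound needs work. Fix $x\in A_i$. For each $z\in A_j$, the point $x\pm z$ receives mass from the single pair $(x,z)$. Hence
\[
r_\pm(x\pm z)\ge p_i(x)\,p_j(z)\ge \frac{1}{4\abs{A_i}\abs{A_j}}
\]
by \eqref{eq:flat}. Therefore
\[
d_\pm(x)\le a_i+a_j+2-M+1=m_0+3,
\]
using $a_i+a_j-M=m_0$.

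All three parts are routine. The only step needing care is the two-way bound in (a): it is what produces the $\max\set{a_i,a_j}$ rather than just $a_i$ or $a_j$, and it is equally what makes the pointwise lower bound in (c) combine into $\min\set{a_i,a_j}$.
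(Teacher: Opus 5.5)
Your proof is correct and follows the paper's argument essentially line for line: the two-way bound $r_\pm(s)\le 2/\abs{A_i}$ and $r_\pm(s)\le 2/\abs{A_j}$ in (a), averaging \eqref{eq:defdef} over $V_i$ to recover $H(V_i\pm V_j')$ in (b), and the single-pair bound $r_\pm(x\pm z)\ge p_i(x)p_j(z)\ge 1/(4\abs{A_i}\abs{A_j})$ in (c). One small correction to your closing remark: the $\min\set{a_i,a_j}$ in (c) comes from subtracting the $\max\set{a_i,a_j}$ built into the definition of $d_\pm$, not from the bound in (a), which plays no role in the upper bound of (c).
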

 
\begin{proof}
(a) By \eqref{eq:flat}, for any $s\in A_i+A_j$,
\[ r_+(s)=\sum_xp_i(x)p_j(s-x)\le\max p_i\le\frac{2}{\abs{A_i}} \]
and similarly $r_+(s)\leq\frac{2}{\abs{A_j}}$, so
\[ r_+\le\frac{2}{\max\set{\abs{A_i},\abs{A_j}}}=2^{1-\max\set{a_i,a_j}}; \]
an essentially identical computation gives the same bound for $r_-$.
Hence
\[ \log\frac{1}{r_\pm}\ge\max\set{a_i,a_j}-1 \]
pointwise, from which we find $d_\pm\ge0$ pointwise.
 
(b) We directly compute
\[
  \E\bracket{\sum_{z\in A_j}p_j(z)\log\frac1{r_+(V_i+z)}}
  =\sum_{x,z}p_i(x)p_j(z)\log\frac1{r_+(x+z)}
  =\sum_{s}r_+(s)\log\frac1{r_+(s)}=H(V_i+V_j') ,
\]
so
\[ \E[d_+(V_i)]=H(V_i+V_j')-\max\set{a_i,a_j}+1=e^+_{i,j(i)}; \]
the proof for $d_-$ is the same.
 
(c) The lower bound is (a). For the upper bound, the lower half of \eqref{eq:flat}
gives
\[ r_+(x+z)\ge p_i(x)p_j(z)\ge\frac1{4\abs{A_i}\abs{A_j}} \]
for $x\in A_i$, $z\in A_j$, so $\log(1/r_+(x+z))\le2+a_i+a_j$ and
\[ d_+(x) \le 2+a_i+a_j-\max\set{a_i,a_j}+1 = 3+\min\set{a_i,a_j}. \]
The proof for $d_-$ is the same.
\end{proof}
 
\subsection{Level sets, and localized doubling}\label{ss:levelsets}
This subsection proves the main technical lemma, \cref{lem:gibbs}, which bounds the doubling of an arbitrary subset of a layer $A_i$ in terms of the maximum values of $d_\pm$ on this subset.
As the previous subsection bounds the expected value of $d_\pm$ across $A_i$ via $e_{ij}^\pm$, whose expected value across $i\in\mc I$ is bounded by the overall entropic doubling $\widetilde u$ by \cref{cor:partner}, one can imagine further partitioning $A_i$ into $\beta$-adic intervals for some $\beta>1$ based on the values of $d_\pm$, bounding the doubling terms for each interval via \cref{lem:gibbs}, and recombining, where the contributions of the more dangerous terms, i.e., with high $d_\pm$ values, are controlled by a tail bound on the probability of such an event occurring.
This is essentially what we do, where this $\beta$ is the same $\beta$ as we added without explanation to \cref{lem:budgets}, we control the loss across the $\beta$-adic intervals in \cref{lem:levelsets}, and then recombine in \cref{prop:perlayer}.

\begin{definition}\label{def:levels}
Fix $\beta>1$, a layer $i$ and its partner $j=j(i)$. The \emph{level-set partition} $\mc D_i$ of $A_i$ is the partition in which each part consists of the $x\in A_i$ that share the same value of the ordered pair
\[
  \paren{\ceil{\log_\beta\max\set{d_+(x),1}},\ceil{\log_\beta\max\set{d_-(x),1}}}.
\]
\end{definition}
 
\begin{lemma}\label{lem:levelsets}
Fix $\beta>1$ and a layer $i$, with partner $j=j(i)$. Then $\mc D_i$ is a partition of $A_i$ into
at most $\paren{2+\log_\beta(3+a_i)}^2$ nonempty parts, and
\[
  \sum_{D\in\mc D_i}p_i(D)\,\sup_{x\in D}d_\pm(x) \le 1+\beta e^\pm_{i,j(i)} .
\]
\end{lemma}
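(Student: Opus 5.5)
Two separate claims need proof: the count of parts, and the weighted sup bound. Both come from the pointwise bounds of \cref{lem:deficit-props}.

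\emph{Counting parts.} By \cref{lem:deficit-props}(c), $0\le d_\pm(x)\le 3+\min\set{a_i,a_j}\le 3+a_i$ for every $x\in A_i$. So $\max\set{d_\pm(x),1}\in[1,3+a_i]$, and each coordinate $\ceil{\log_\beta\max\set{d_\pm(x),1}}$ is an integer in $\set{0,1,\dots,\ceil{\log_\beta(3+a_i)}}$. That set has at most $1+\ceil{\log_\beta(3+a_i)}\le 2+\log_\beta(3+a_i)$ elements. The partition records an ordered pair of such labels, so it has at most $(2+\log_\beta(3+a_i))^2$ nonempty parts (we discard empty label classes).

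\emph{Weighted sup bound.} Fix the sign $\pm$ and a part $D$, with labels $(k_+,k_-)$. Every $x\in D$ has $\ceil{\log_\beta\max\set{d_\pm(x),1}}=k_\pm$, so $\sup_{x\in D}d_\pm(x)\le\beta^{k_\pm}$.

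\begin{itemize}
\item If $k_\pm=0$, then $d_\pm\le1$ on $D$, so $\sup_D d_\pm\le 1$.
\item If $k_\pm\ge1$, then every $x\in D$ has $d_\pm(x)>\beta^{k_\pm-1}$. Hence $\sup_D d_\pm\le\beta^{k_\pm}<\beta\, d_\pm(x)$ for every $x\in D$, and averaging over $D$ under $p_i$ gives $\sup_D d_\pm\le \beta\,\E[d_\pm(V_i)\mid V_i\in D]$.
\end{itemize}

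Summing over parts, weighting the first kind by $p_i(D)$ (total weight at most $1$) and the second kind by $p_i(D)$ times the conditional expectation, we get
\[
\sum_{D\in\mc D_i}p_i(D)\sup_D d_\pm \le 1+\beta\sum_{D:\,k_\pm\ge1}\sum_{x\in D}p_i(x)d_\pm(x)\le 1+\beta\,\E[d_\pm(V_i)].
\]
The last step uses $d_\pm\ge0$ pointwise, from \cref{lem:deficit-props}(a). By \cref{lem:deficit-props}(b), $\E[d_\pm(V_i)]=e^\pm_{i,j(i)}$, which gives the claimed bound.

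There is no real obstacle here. The only points needing care are the off-by-one in the part count, handled by $\ceil{y}\le y+1$, and the fact that the bound must hold for each sign separately even though the parts are defined jointly by both labels. The argument above handles each sign using only its own coordinate, so this is not a problem.
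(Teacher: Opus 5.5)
Your proof is correct and follows the paper's argument: the part count comes from \cref{lem:deficit-props}(c) and the ceiling estimate, and the weighted bound uses $\sup_D d_\pm\le\beta^{k}$ together with $d_\pm>\beta^{k-1}$ on parts with $k\ge1$, then averages via \cref{lem:deficit-props}(b). The only difference is that the paper merges your two cases into the single pointwise inequality $\beta^{k}\le 1+\beta d_\pm(x)$ before averaging.
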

 
\begin{proof}
Clearly $\mc D_i$ partitions $A_i$, so $\sum_{D\in\mc D_i}p_i(D)=1$.
By
\cref{lem:deficit-props}(c) the quantity $\max\set{d_\pm(x),1}$ lies in
$[1,3+\min\set{a_i,a_j}]$, so each coordinate of the map takes at most
\[ 1+\ceil{\log_\beta(3+\min\set{a_i,a_j})}\le2+\log_\beta(3+a_i) \]
values; squaring bounds the number of nonempty parts.
 
Fix a part $D\in\mc D_i$ and let $a$ be the common value of $\ceil{\log_\beta\max\set{d_+,1}}$ on it, so
\[ \sup_{x\in D}d_+(x)\le\beta^a. \]
For a lower bound, notice that $\beta^a\le1+\beta d_+(x)$ pointwise on $D$, where the 1 term ensures the case $a=0$ holds, and otherwise for $a\geq 1$ we have $d_+(x)>\beta^{a-1}$, which implies the bound.
Averaging over the parts,
\begin{align*}
  \sum_{D\in\mc D_i}p_i(D)\,\sup_{x\in D}d_+(x)
   &\le \sum_{D\in\mc D_i}\sum_{x\in D}p_i(x)\paren{1+\beta d_+(x)}
   \\ &= 1+\beta\,\E[d_+(V_i)]
   \\ &= 1+\beta e^+_{i,j(i)}
\end{align*}
by \cref{lem:deficit-props}(b).
The proof is identical for $d_-$.
\end{proof}
 
As promised, we now bound $\log\abs{D+D}$ for an arbitrary nonempty
$D\subseteq A_i$, in terms of the flat reference level of the layer and the suprema over $D$ of $d_\pm$.
 
\begin{lemma}\label{lem:gibbs}
Let $i\in\mc I$, let $j=j(i)$ be its partner, and let $\emptyset\ne D\subseteq A_i$. Then
\begin{equation}\label{eq:gibbs}
  \log\abs{D+D} \le a_i+\abs{a_i-a_j}-1+\sup_{x\in D}d_+(x)+\sup_{y\in D}d_-(y) .
\end{equation}
\end{lemma}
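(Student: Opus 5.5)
The plan is a Ruzsa-covering-type encoding argument run through Gibbs' inequality (E6), using the partner layer $A_j$ as the auxiliary set. For every $s\in D+D$, fix once and for all a representation $s=x_s+y_s$ with $x_s,y_s\in D$. The key identity is
\[
s=(x_s+z)+(y_s-z)\qquad\text{for every } z\in A_j .
\]
So the pair $(x_s+z,\,y_s-z)$ determines $s$ (as its sum), and then determines $z=(x_s+z)-x_s$.

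To turn this into an entropy bound, let $S$ be uniform on $D+D$ and let $Z\sim V_j$ be independent of $S$. Put $W=(S,Z)$, so that
\[
H(W)=\log\abs{D+D}+H(V_j).
\]
By the identity, $W$ is a function of the pair $(x_S+Z,\,y_S-Z)$. Data processing and subadditivity (E3) then give
\[
H(W)\le H(x_S+Z)+H(y_S-Z).
\]

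Next apply Gibbs' inequality (E6) to each term, with reference measures $r_+$ and $r_-$ from \cref{def:deficits}. These are legitimate choices for the following reasons.
\begin{itemize}
\item $r_\pm$ are probability measures.
\item $x_S+Z\in A_i+A_j$, where $r_+$ has full support.
\item $y_S-Z\in A_i-A_j$, since $y_S\in D\subseteq A_i$; this is exactly where $r_-$ has full support.
\end{itemize}
Gibbs gives
\[
H(x_S+Z)\le\E\bracket{\log\frac1{r_+(x_S+Z)}}.
\]
Conditioning on $S$ and using \eqref{eq:defdef} with $x=x_S\in D$, the right-hand side equals
\[
\max\set{a_i,a_j}-1+\E[d_+(x_S)]\le\max\set{a_i,a_j}-1+\sup_{x\in D}d_+(x).
\]
The same argument with $r_-$ and $d_-(y_S)$ bounds $H(y_S-Z)$ by
\[
\max\set{a_i,a_j}-1+\sup_{y\in D}d_-(y).
\]

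Finally, lower-bound $H(V_j)$ using \eqref{eq:flat}:
\[
H(V_j)\ge H_\infty(V_j)\ge a_j-1.
\]
Combining the three estimates yields
\[
\log\abs{D+D}\le 2\max\set{a_i,a_j}-a_j-1+\sup_D d_++\sup_D d_-.
\]
Since $2\max\set{a_i,a_j}-a_j=a_i+\abs{a_i-a_j}$, this is exactly \eqref{eq:gibbs}.

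The only point needing care is the support check for the Gibbs step, namely that $d_-$ is evaluated at points of $A_i$ via $r_-(y-z)$ with $y\in A_i$. This is guaranteed because $D\subseteq A_i$, so both $x_S$ and $y_S$ lie in the domain where $d_\pm$ are defined. Beyond that, the argument is bookkeeping.
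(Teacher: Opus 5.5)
Your proposal is correct and follows essentially the same route as the paper: it uses the Ruzsa-triangle-style witnesses $U=x_S+Z$, $V=y_S-Z$ with $Z\sim V_j$ independent of $S\sim\Unif(D+D)$, applies Gibbs' inequality against $r_\pm$ via \eqref{eq:defdef}, then subadditivity, and finally $H(V_j)\ge a_j-1$. The only cosmetic difference is that you get $H(S,Z)\le H(U,V)$ by noting $(S,Z)$ is a function of $(U,V)$, whereas the paper computes $H(U,V)=H(S)+H(U,V\mid S)=\log\abs{D+D}+H(Z)$ exactly, using that $Z\mapsto(U,V)$ is a bijection given $S$.
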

 
\begin{proof}
Let $S\sim\Unif(D+D)$ and for each $s\in D+D$ fix one representation $s=x_s+y_s$ with $x_s,y_s\in D$.
Let $Z\sim V_j$ be independent of $S$, and set $U=x_S+Z$ and $V=y_S-Z$.
Given $S$, and thus the fixed pair $(x_S,y_S)$, the map $Z\mapsto(U,V)$ is bijective, so
\[ H\paren{U,V\mid S}=H\paren{Z\mid S}=H(Z) \]
by independence of $Z$ and $S$.
Since $U+V=S$, note that $S$ is a function of $(U,V)$, so
\begin{equation}\label{eq:witness}
  H(U,V) = H(S)+H\paren{U,V\mid S} = \log\abs{D+D}+H(Z).
\end{equation}
We note that this first step of the proof, i.e., the definitions of $U$ and $V$, is inspired by the standard proof of the Ruzsa triangle inequality \cite{Ruzsa1978}.
 
The variable $U$ takes values in $D+A_j\subseteq A_i+A_j$, on which $r_+$, which recall is the probability mass function of $V_i+V_j'$, is a strictly positive probability measure; its restriction to the range of $U$ is therefore a sub-probability measure, strictly positive on
$\operatorname{supp}U$, so Gibbs' inequality ((E6) from \cref{lem:entropy-facts}) and
\eqref{eq:defdef} imply
\begin{align*}
    H(U) &\le \E\bracket{\log\frac1{r_+(U)}}
    \\ &= \E_S\,\E_Z\bracket{\log\frac1{r_+(x_S+Z)}}
    \\ &= \E_S\bracket{\max\set{a_i,a_j}-1+d_+(x_S)}
    \\ &\le \max\set{a_i,a_j}-1+\sup_{x\in D}d_+(x).
\end{align*}
Identically, $V$ takes values in $D-A_j\subseteq A_i-A_j$, on which $r_-$ is a strictly
positive probability measure, so
\[ H(V)\le\max\set{a_i,a_j}-1+\sup_{y\in D}d_-(y). \]

Thus, by \eqref{eq:witness},
\[
  \log\abs{D+D}+H(Z) = H(U,V) \le H(U)+H(V)
   \le 2\max\set{a_i,a_j}-2+\sup_{x\in D}d_+(x)+\sup_{y\in D}d_-(y) .
\]
Recall by \eqref{eq:flat} we have $H(Z)\ge H_\infty(V_j)\ge a_j-1$.
Therefore
\[ \log\abs{D+D} \le 2\max\set{a_i,a_j}-a_j-1+\sup_{x\in D}d_+(x)+\sup_{y\in D}d_-(y). \]
Observing that
\[ 2\max\set{a_i,a_j}-a_j-1=a_i+\abs{a_i-a_j}-1 \]
completes the proof.
\end{proof}
We now translate this to a lower bound on the doubling defect $g(D)$.
\begin{corollary}\label{cor:perpart}
Let $i\in\mc I$ and let $\emptyset\ne D\subseteq A_i$.
Then, with $g(P)=2\log\abs{P}-\log\abs{P+P}$,
\[
    g(D) \ge a_i-\abs{a_i-a_{j(i)}}-\sup_{x\in D}d_+(x)-\sup_{y\in D}d_-(y)
    -2\log\frac1{p_i(D)}-1 .
\]
\end{corollary}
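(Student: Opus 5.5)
We combine \cref{lem:gibbs} with a lower bound on $\log\abs{D}$ in terms of $a_i$ and $p_i(D)$. By definition $g(D)=2\log\abs D-\log\abs{D+D}$. The upper bound on $\log\abs{D+D}$ comes directly from \eqref{eq:gibbs}. It remains to show
\[
  2\log\abs D \ge 2a_i-2\log\frac1{p_i(D)}-2 .
\]
Subtracting \eqref{eq:gibbs} from this, the constants combine as $-2-(-1)=-1$. We obtain
\[
  g(D)\ge 2a_i-2\log\frac1{p_i(D)}-2-a_i-\abs{a_i-a_{j(i)}}+1-\sup_D d_+-\sup_D d_-,
\]
which is exactly the claimed bound.

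For the lower bound on $\abs D$ we use the flatness \eqref{eq:flat}. Every $x\in A_i$ satisfies $p_i(x)\le 2/\abs{A_i}=2^{1-a_i}$. Summing over $x\in D$ gives
\[
  p_i(D)\le \abs D\,2^{1-a_i},
\]
so $\log\abs D\ge a_i-1-\log\frac1{p_i(D)}$. Doubling this yields the required inequality.

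The argument is routine. The only care needed is the constant bookkeeping: we must use the upper half of \eqref{eq:flat}, which costs a factor $2$ per copy of $\log\abs D$, and combine it with the $-1$ in \cref{lem:gibbs} so that the net constant is $-1$. No further entropy input is needed beyond \cref{lem:gibbs}.
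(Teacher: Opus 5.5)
Your proof is correct and follows the same route as the paper. The paper likewise uses the upper half of \eqref{eq:flat} to get $2\log\abs D\ge 2a_i+2\log p_i(D)-2$, and then subtracts the bound on $\log\abs{D+D}$ from \cref{lem:gibbs}.
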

\begin{proof}
By \eqref{eq:flat}, $p_i(D)\le2\abs D/\abs{A_i}$, so
$2\log\abs D\ge2a_i+2\log p_i(D)-2$.
\end{proof}
We now recombine across the level-set partition to return to the entire dyadic layer $A_i$.
\begin{proposition}\label{prop:perlayer}
Let $\beta>1$ and let $\mc D_i$ be the level-set partition of the layer $A_i$. Then
\[
  \sum_{D\in\mc D_i}p_i(D)\,g(D)
   \ge a_i-\beta\paren{e^+_{i,j(i)}+e^-_{i,j(i)}}-\abs{a_i-a_{j(i)}}
   -4\log\paren{2+\log_\beta(3+a_i)}-3 .
\]
\end{proposition}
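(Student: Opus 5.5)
The plan is to apply \cref{cor:perpart} to each part $D\in\mc D_i$, multiply by $p_i(D)$, and sum over the level-set partition. Since $\sum_{D}p_i(D)=1$, the terms $a_i$, $-\abs{a_i-a_{j(i)}}$ and $-1$ pass through the average unchanged. This leaves three averaged error terms to control:
\[
\sum_D p_i(D)\sup_D d_+,\qquad \sum_D p_i(D)\sup_D d_-,\qquad 2\sum_D p_i(D)\log\frac1{p_i(D)}.
\]

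For the first two, I invoke \cref{lem:levelsets}. It bounds each of them by $1+\beta e^\pm_{i,j(i)}$, so together they contribute at most
\[
2+\beta\paren{e^+_{i,j(i)}+e^-_{i,j(i)}}.
\]

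The third term is twice the entropy of the label of the part of $\mc D_i$ containing $V_i$. By the support bound (E4), that entropy is at most the logarithm of the number of nonempty parts. By \cref{lem:levelsets} there are at most $\paren{2+\log_\beta(3+a_i)}^2$ such parts. Hence the third term is at most
\[
2\cdot 2\log\paren{2+\log_\beta(3+a_i)}=4\log\paren{2+\log_\beta(3+a_i)}.
\]

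Collecting constants, we get the $-1$ from \cref{cor:perpart} and the $-2$ from the two applications of \cref{lem:levelsets}, for a total of $-3$. This matches the claimed bound exactly.

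There is no real obstacle here; the only step needing care is recognizing the $\log(1/p_i(D))$ average as an entropy of a finitely-valued label. That identification is what produces the factor $4$ on the logarithm, as two times the log of a squared count.
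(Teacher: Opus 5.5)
Your proposal is correct and follows the paper's proof essentially line by line. Both average \cref{cor:perpart} over the level-set partition, bound the two supremum averages by $2+\beta\bigl(e^+_{i,j(i)}+e^-_{i,j(i)}\bigr)$ via \cref{lem:levelsets}, and bound $\sum_D p_i(D)\log\frac{1}{p_i(D)}$ by $\log\abs{\mathcal D_i}\le 2\log\bigl(2+\log_\beta(3+a_i)\bigr)$ using (E4) and the part count from \cref{lem:levelsets}.
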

\begin{proof}
Apply \cref{cor:perpart} to each part $D\in\mc D_i$ and average against $p_i(D)$:
\[
  \sum_{D\in\mc D_i}p_i(D)\,g(D)
   \ge a_i-\abs{a_i-a_{j(i)}}
   -\sum_{D\in\mc D_i}p_i(D)\paren{\sup_Dd_++\sup_Dd_-}
   -2\sum_{D\in\mc D_i}p_i(D)\log\frac1{p_i(D)}-1 .
\]
By \cref{lem:levelsets} the first of the two sums is at most
$2+\beta\paren{e^+_{i,j(i)}+e^-_{i,j(i)}}$, while by (E4) of \cref{lem:entropy-facts} the second sum is bounded by
\[ \sum_{D\in\mc D_i}p_i(D)\log\frac{1}{p_i(D)} \leq \log\abs{\mc D_i}\leq2\log\paren{2+\log_\beta(3+a_i)}, \]
with the second inequality following from \cref{lem:levelsets}.
\end{proof}
\begin{remark}\label{rmk:shape}
By \cref{prop:perlayer} the whole cost of the layer $i$ is
$\beta(e^+_{i,j(i)}+e^-_{i,j(i)})$ plus terms of
$O(\log\log)$ type in entropy.
\cref{cor:partner} turns $e^+_{i,j(i)}+e^-_{i,j(i)}$ into $3\widetilde u\widetilde H$, where from the proof of \cref{lem:budgets}, controlling the sum and difference entropies contribute 1 and 2, respectively, to this coefficient of 3.
\cref{prop:h2-product} then doubles it, yielding a coefficient of 6 decomposed as
\[
  C = \underbrace{2}_{\text{Solymosi}}\times(\underbrace{1}_{\text{sum budget}}
  +\underbrace{2}_{\text{difference budget}}) = 6.
\]
This will impact the coefficient of 6 in \cref{thm:eer}, and thus the value $\delta=\frac17$; see \cref{prop:master}.
\end{remark}
 
\subsection{Assembling together}\label{ss:eer-thm}
We now assemble our previous work to prove the following technical version of the linear entropic Elekes--Ruzsa theorem (still for finite support).
\begin{theorem}\label{thm:eer-full}
Let $\widetilde X$ be zero-free and finitely supported with $\widetilde H=H(\widetilde X)>0$, and let $\beta>1$.
Then, using the notation of $\widetilde u$ and $\widetilde m$ from \cref{thm:eer},
\begin{equation}\label{eq:eer-full}
  \widetilde m \ge 2-6\beta\,\widetilde u-\frac{\mathrm{Err}(\beta,\widetilde H)}{\widetilde H},
\end{equation}
where
\begin{align}\label{eq:Err}
  \mathrm{Err}(\beta,\widetilde H) &= (4\beta+2)\paren{\log(1+\widetilde H)+\log e+1}+4\beta
  \\ &\quad +8\log\paren{2+\log_\beta(4+\widetilde H)} + \log\paren{8(\widetilde H+2)}+10. \notag
\end{align}
\end{theorem}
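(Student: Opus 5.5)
The plan is to feed a single refined partition into \cref{prop:h2-product} and then bound the resulting doubling defects layer by layer. The partition consists of all level-set parts: $\{D\}_{D\in\mc D_i,\ i\in\mc I}$, where $\mc D_i$ is the level-set partition of \cref{def:levels} built from the partner map $j(\cdot)$ of \cref{cor:partner} with the given $\beta$.

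First I check the hypotheses of \cref{prop:h2-product}.
\begin{itemize}
\item Each $D\subseteq A_i$ is sign-pure, because $A_i$ is.
\item $\P(\widetilde X\in D)=w_i\,p_i(D)$.
\item The conditional law on $D$ is $p_i/p_i(D)$, which by \cref{lem:layers}(d) satisfies $\max p_D<2\min p_D\le 2/\abs D$. This is exactly \eqref{eq:part-flatness}.
\end{itemize}
Hence
\[
H(\widetilde X\widetilde X')\ge 2\sum_i w_i\sum_{D\in\mc D_i}p_i(D)g(D)-\log\Bigl(8\Bigl(\sum_i w_i\sum_D p_i(D)\log\abs D+1\Bigr)\Bigr)-4 .
\]
For the logarithmic term, $\log\abs D\le a_i$, so the inner average is at most $\overline a\le\widetilde H+1$ by \eqref{eq:abar}. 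This produces the $\log(8(\widetilde H+2))$ term.

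Next, for each $i$ I apply \cref{prop:perlayer} and average against $w_i$. This gives
\[
\sum_i w_i\sum_D p_i(D)g(D)\ \ge\ \overline a-\sum_i w_i\Bigl[\beta\bigl(e^+_{i,j(i)}+e^-_{i,j(i)}\bigr)+\abs{a_i-a_{j(i)}}\Bigr]-4\sum_i w_i\log\bigl(2+\log_\beta(3+a_i)\bigr)-3 .
\]
I then bound the three pieces on the right as follows.
\begin{itemize}
\item The bracketed sum is at most $\beta(3\widetilde u\widetilde H+2H(I)+2)$ by \cref{cor:partner}.
\item $\overline a\ge\widetilde H-H(I)$ by \eqref{eq:abar}.
\item The function $t\mapsto\log(2+\log_\beta(3+t))$ is concave and nondecreasing on $[0,\infty)$, being a composition of concave nondecreasing functions. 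Jensen's inequality and $\overline a\le\widetilde H+1$ then give $\sum_i w_i\log(2+\log_\beta(3+a_i))\le\log(2+\log_\beta(4+\widetilde H))$.
\end{itemize}

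Multiplying by $2$ and collecting terms yields
\[
H(\widetilde X\widetilde X')\ge 2\widetilde H-6\beta\widetilde u\widetilde H-(4\beta+2)H(I)-4\beta-8\log\bigl(2+\log_\beta(4+\widetilde H)\bigr)-6-\log\bigl(8(\widetilde H+2)\bigr)-4 .
\]
Inserting $H(I)\le\log(1+\widetilde H)+\log e+1$ from \eqref{eq:hI} and dividing by $\widetilde H$ gives exactly \eqref{eq:eer-full} with $\mathrm{Err}$ as in \eqref{eq:Err}.

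The only step needing real care is the bookkeeping of error terms. The main point to check is the Jensen step for the $\log\log$ term: one must verify concavity and monotonicity, and use the upper bound on $\overline a$ rather than on individual $a_i$, which could be larger than $\widetilde H$. One must also confirm that only parts of positive weight enter the partition, so that the conditional laws in \cref{prop:h2-product} are well-defined. Everything else is a direct chaining of \cref{prop:h2-product}, \cref{prop:perlayer}, \cref{cor:partner}, \eqref{eq:abar} and \eqref{eq:hI}.
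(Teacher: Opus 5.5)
Your proposal is correct and matches the paper's proof essentially step for step. You apply \cref{prop:h2-product} to the union of the level-set partitions $\mc D_i$, bound each layer by \cref{prop:perlayer}, and then use \cref{cor:partner}, \eqref{eq:abar}, Jensen's inequality and \eqref{eq:hI}; your bookkeeping of the constants (the $(4\beta+2)H(I)$ term and the additive $-6-4=-10$) agrees exactly with \eqref{eq:Err}.
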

 
\begin{proof}
Use the sign-refined layers $\set{A_i}_{i\in\mc I}$ of the standing setup, the partner map $j(\cdot)$ of \cref{cor:partner} for this $\beta$, and for each layer the level-set partition $\mc D_i$ of \cref{def:levels} for this $\beta$.
 
The resulting collection $\bigcup_{i\in\mc I}\mc D_i$ of nonempty parts partitions
$\operatorname{supp}\widetilde X$: each part lies in a single layer $A_i$, hence is sign-pure, and its conditional probability mass function obeys $p_D\le2/\abs D$ by \cref{lem:layers}.
Thus \cref{prop:h2-product} applies, with the weight $w_ip_i(D)$ attached to $D\in\mc D_i$, and with
\[
    \sum_{i\in\mc I}w_i\sum_{D\in\mc D_i}p_i(D)\log\abs D
    \le \sum_{i\in\mc I}w_ia_i=\overline a \le \widetilde H+1
\]
by \eqref{eq:abar}, since $\abs D\le\abs{A_i}$ and $\sum_{D\in\mc D_i}p_i(D)=1$.
It yields
\begin{equation}\label{eq:apply-sol}
  H(\widetilde X\widetilde X') \ge
  2\sum_{i\in\mc I}w_i\sum_{D\in\mc D_i}p_i(D)\,g(D)
  -\log\paren{8(\widetilde H+2)}-4 .
\end{equation}
By \cref{prop:perlayer}, the inner double sum obeys
\[
  \sum_{i\in\mc I}w_i\sum_{D\in\mc D_i}p_i(D)\,g(D)
   \ge \overline a-\sum_{i\in\mc I}w_i\bracket{\beta\paren{e^+_{i,j(i)}+e^-_{i,j(i)}}
   +\abs{a_i-a_{j(i)}}+4\log\paren{2+\log_\beta(3+a_i)}+3}
\]
and $\overline a\ge\widetilde H-H(I)$ by \eqref{eq:abar}.
Here \cref{cor:partner} bounds the first two terms of the bracket: 
\[ \sum_{i\in\mc I}w_i\bracket{\beta\paren{e^+_{i,j(i)}+e^-_{i,j(i)}}+\abs{a_i-a_{j(i)}}} \leq \beta\paren{3\widetilde u\widetilde H+2H(I)+2}. \]
As $a\mapsto\log\paren{2+\log_\beta(3+a)}$ is increasing and concave on $[0,\infty)$, by Jensen's inequality and recalling $\overline a\le\widetilde H+1$,
\[ \sum_iw_i\,4\log\paren{2+\log_\beta(3+a_i)} \le 4\log\paren{2+\log_\beta(4+\widetilde H)}. \]
Substituting into \eqref{eq:apply-sol},
\[
  H(\widetilde X\widetilde X') \ge 2\widetilde H-6\beta\widetilde u\widetilde H
  -\bracket{(4\beta+2)H(I)+4\beta
  +8\log\paren{2+\log_\beta(4+\widetilde H)}
  +\log\paren{8(\widetilde H+2)}+10}.
\]
Since $4\beta+2>0$, the bound \eqref{eq:hI} on $H(I)$ may now
be inserted, and dividing by $\widetilde H$ gives the result.
\end{proof}
\begin{remark}
    The intercept $2$ in \eqref{eq:eer-full} is optimal: by subadditivity,
    $H(\widetilde X\widetilde X')\le H(\widetilde X,\widetilde X')=2\widetilde H$, so
    $\widetilde m\le2$ always, and \cref{thm:eer-full} is asymptotically exact as
    $\widetilde u\to0$ and $\widetilde H\to\infty$.
\end{remark}
We now optimize $\beta$ at $\beta=1+1/\widetilde H$, which balances the inflation $6\beta$ of the coefficient against the term $8\log(2+\log_\beta(4+\widetilde H))$, and yields a total error of order $\log\widetilde H/\widetilde H$.
\begin{corollary}\label{cor:eer-eff}
Let $\widetilde X$ be zero-free and finitely supported with $\widetilde H\ge4$, and take
$\beta=1+1/\widetilde H$ in \cref{thm:eer-full}. Then, using the notation of $\widetilde u$ and $\widetilde m$ from \cref{thm:eer},
\begin{equation}\label{eq:eer-beta}
  \widetilde m \ge 2-\paren{6+\frac6{\widetilde H}}\widetilde u
  -\frac{18\log\widetilde H+51}{\widetilde H} .
\end{equation}
Consequently, with $F_2(\widetilde u)=2-6\widetilde u$ and
\begin{equation}\label{eq:vartheta}
  \vartheta(\eta)=\begin{cases}\frac{18\log\eta+57}{\eta}&\text{if }\eta\ge4, \\ 
  \vartheta(4)=\frac{93}4 & \text{if }0<\eta<4, \end{cases}
\end{equation}
one has
\begin{equation}\label{eq:eer-clean}
  \widetilde m \ge F_2(\widetilde u)-\vartheta(\widetilde H)
\end{equation}
for every zero-free finitely supported $\widetilde X$ with $\widetilde H>0$.
$\vartheta:(0,\infty)\to(0,\infty)$ is a nonincreasing function.
\end{corollary}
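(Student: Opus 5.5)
The plan is to substitute $\beta=1+1/\widetilde H$ into \eqref{eq:eer-full} and bound $\mathrm{Err}(\beta,\widetilde H)$ term by term for $\widetilde H\ge4$. The linear term becomes $6\beta\widetilde u=(6+6/\widetilde H)\widetilde u$, which is exactly the coefficient in \eqref{eq:eer-beta}. The work is to show $\mathrm{Err}(1+1/\widetilde H,\widetilde H)\le 18\log\widetilde H+51$.

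First, $4\beta+2=6+4/\widetilde H\le 7$ and $4\beta\le5$. For $\widetilde H\ge4$ we have $\log(1+\widetilde H)\le\log\widetilde H+\log\frac54$, so the first line of \eqref{eq:Err} is at most
\[
(6+4/\widetilde H)\log\widetilde H+7(\log\tfrac54+\log e+1)+5 .
\]
The term $(4/\widetilde H)\log\widetilde H$ is at most $2$ for $\widetilde H\ge4$.

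The main obstacle is the $\log_\beta$ term. Since $\ln(1+1/\widetilde H)\ge 1/(\widetilde H+1)$, we get
\[
\log_\beta(4+\widetilde H)=\frac{\ln(4+\widetilde H)}{\ln(1+1/\widetilde H)}\le(\widetilde H+1)\ln(4+\widetilde H)\le 2\widetilde H^2
\]
for $\widetilde H\ge4$, a crude polynomial bound checkable at $\widetilde H=4$ and by growth rates. Hence
\[
8\log\bigl(2+\log_\beta(4+\widetilde H)\bigr)\le 8\log(2+2\widetilde H^2)\le 16\log\widetilde H+8\log\tfrac{18}{16}\cdot\text{(const)} .
\]
This gives a coefficient of $16$ on $\log\widetilde H$. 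Combined with roughly $6\log\widetilde H$ from the first line and $\log(8(\widetilde H+2))\le\log\widetilde H+3+\log\frac32$, the total is about $23\log\widetilde H$, which exceeds $18$. So a sharper treatment is needed. I would bound $\log_\beta(4+\widetilde H)\le(\widetilde H+1)\ln(4+\widetilde H)$ directly. Then $8\log$ of this gives $8\log\widetilde H+8\log\log\widetilde H+O(1)$, and one absorbs $8\log\log\widetilde H\le 3\log\widetilde H+c$. The coefficients then total $6+8+1+3=18$, with the constants collected into $51$. I expect the delicate part to be verifying that the numerical constants fit within $51$ at the boundary $\widetilde H=4$. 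I would check monotonicity of the difference $18\log\widetilde H+51-\mathrm{Err}$ on $[4,\infty)$ and evaluate it at $4$.

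For the consequence \eqref{eq:eer-clean}, split into two cases.

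\textbf{Case $\widetilde H\ge4$.} Since $\widetilde u\le1$ by (E5$'$), we have $(6/\widetilde H)\widetilde u\le 6/\widetilde H$. Adding this to $(18\log\widetilde H+51)/\widetilde H$ gives $\vartheta(\widetilde H)$.

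\textbf{Case $0<\widetilde H<4$.} Here $\widetilde m\ge0$ while $F_2(\widetilde u)\le2<93/4$, so \eqref{eq:eer-clean} is trivial.

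Finally, $\vartheta$ is nonincreasing. It is constant on $(0,4)$, and on $[4,\infty)$ the derivative of $(18\log\eta+57)/\eta$ has numerator proportional to $18\log e-18\log\eta-57<0$. Continuity at $\eta=4$ gives monotonicity across the junction.
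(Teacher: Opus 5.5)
Your proposal is correct and is essentially the paper's argument. You substitute $\beta=1+1/\widetilde H$, absorb $6\widetilde u/\widetilde H\le 6/\widetilde H$ into the constant, and bound $\mathrm{Err}$ term by term with $\log_\beta(4+\widetilde H)=O(\widetilde H\log\widetilde H)$ and a linear absorption of the resulting $\log\log$ term. You then finish with the same two-case argument and derivative check for $\vartheta$.

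The differences are cosmetic. You use $\ln(1+1/\widetilde H)\ge 1/(\widetilde H+1)$ and split $18=6+8+3+1$, where the paper uses $\log(1+t)\ge t$ and $18=7+8+2+1$. The numerics you deferred do close: with your bounds the additive constants total about $47.3<51$.
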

\begin{proof}
Write $\eta=\widetilde H\ge4$ and $t=1/\eta\in(0,\frac14]$, so $\beta=1+t$.
Since $\widetilde u\le1$ by (E5$'$) of \cref{lem:entropy-facts},
\[ 6\beta\widetilde u=6\widetilde u+6t\widetilde u\le6\widetilde u+6/\eta, \]
so \eqref{eq:eer-full} implies \eqref{eq:eer-clean} with $\vartheta$ replaced by
$\paren{6+\mathrm{Err}(1+t,\eta)}/\eta$.
Thus, \eqref{eq:eer-beta} and the case $\eta\geq4$ of \eqref{eq:eer-clean} follow from
\begin{equation}\label{eq:maj-goal}
  \mathrm{Err}(1+t,\eta) \le 18\log\eta+51,
\end{equation}
which we now confirm.
Note the case $0<\eta<4$ of \eqref{eq:eer-clean} is trivial, since $\vartheta(\eta)=\frac{93}4>2\ge F_2(\widetilde u)$ for every
$\widetilde u\ge0$ while $\widetilde m\ge0$.
 
As $\log(1+t)\ge t$ for $0\le t\le1$, recalling $\log$ denotes the base-2 logarithm, $\log\beta\ge t=1/\eta$ so
\begin{equation}\label{eq:logbeta}
  \log_\beta(4+\eta) \le \eta\log(4+\eta).
\end{equation}
As $\eta\ge4$ and thus $1+\eta\le\frac54\eta$,
\begin{equation}\label{eq:log1eta}
  \log(1+\eta) \le \log\eta+\log\frac54.
\end{equation}
Since $\beta=1+1/\eta\le\frac54$, 
\[
  (4\beta+2)\paren{\log(1+\eta)+\log e+1}+4\beta
   \le 7\paren{\log\eta+\log\paren{\frac54e}+1}+5 \approx 7\log\eta+24.3525 .
\]
By \eqref{eq:logbeta} and $4+\eta\le2\eta$, we have $\log_\beta(4+\eta)\le\eta(1+\log\eta)$.
Since $\eta(1+\log\eta)\geq12$, we absorb the additive constant of 2 via 
\[ 2+\log_\beta(4+\eta) \le \frac76\,\eta\,(1+\log\eta), \]
so
\[
    8\log\paren{2+\log_\beta(4+\eta)} \le 8\log\frac76+8\log\eta+8\log(1+\log\eta) \approx 8\log\eta+8\log(1+\log\eta)+1.7792.
\]
The function $\phi(L)=8\log(1+L)-2L$ has $\phi'(L)=8\log e/(1+L)-2$, so on $[2,\infty)$ it is
maximized at $1+L=4\log e\approx5.77078$, where
$\phi\le8\log(4\log e)-2(4\log e-1)\approx10.6886<10.69$; hence
$8\log(1+L)\le2L+10.69$ and
\[ 8\log\paren{2+\log_\beta(4+\eta)} \le 10\log\eta+12.47. \]
Lastly, as $\eta+2\leq\frac32\eta$,
\[ \log\paren{8(\eta+2)}\le\log\eta+3+\log\frac32\approx\log\eta+3.5850. \]
Combining these bounds yields
\begin{align*}
  \mathrm{Err}(1+t,\eta)
  & \le 18\log\eta
  +\paren{24.3525+12.47+3.5850+10}\\
  & < 18\log\eta+51,
\end{align*}
which is \eqref{eq:maj-goal}.

Lastly, on $[4,\infty)$,
\[
  \frac{d}{d\eta}\,\frac{18\log\eta+57}{\eta}
   = \frac{18\log e-\paren{18\log\eta+57}}{\eta^2} < 0,
\]
so $\vartheta$ is nonincreasing.
\end{proof}
\begin{remark}\label{rmk:quotient}
    \Cref{thm:eer-full} and \cref{cor:eer-eff} hold verbatim with
    $H(\widetilde X\widetilde X')$ replaced by $H(\widetilde X/\widetilde X')$, hence with
    $\widetilde m$ replaced by $H(\widetilde X/\widetilde X')/\widetilde H$: the product enters the proof only through the application of \cref{prop:h2-product} in \eqref{eq:apply-sol}, and \cref{prop:h2-ratio} supplies the same inequality for the quotient with the same constants; every other step bounds $\log\abs{D+D}$ and is unchanged.
    This may be of independent interest.
\end{remark}
The extension of \cref{cor:eer-eff} to $\widetilde H < 4$ is trivial, but we extend $\vartheta$ as such for convenience later in \cref{prop:master}.
\section{The incidence-side lower bound}\label{sec:incidence}
\cref{thm:GGK_1_3} was unable to avoid the obstacle of examples where $H_\infty(X)=O(1)$ but $H(X)$ is arbitrarily large.
In short, this section uses the dyadic decomposition of \cref{lem:layers} so that each piece $V_i$ of the partition is flat, so that $H_\infty(V_i)\approx H(V_i)$.
This causes a number of complications compared to the proof of \cref{thm:GGK_1_3}, by Gavalakis, Goh, and Kontoyiannis \cite{GGK} (one may note that simply replacing $H_\infty(X)$ with $H(X)$ in \cref{thm:GGK_1_3} would have proven \cref{conjecture:main} with $\delta=\frac16$, yet our result only establishes $\delta=\frac17$) and so, similar to \cref{sec:eer}, great care must be taken to control the resulting errors caused by this decomposition process.
In both \cref{sec:incidence,sec:upper}, we observed some minor errors in the arguments of Gavalakis, Goh, and Kontoyiannis \cite{GGK}, though we believe their conclusions remain intact; we remark upon these when our arguments reach the analogous points, and in an abundance of caution our proofs are provided completely, i.e., do not rely on any of their results.

\Cref{subsec:expander} proves a lower bound
\[ H(U_{G_1}(U_{G_2}+U_{G_3}))\geq\frac12\log\paren{\abs{G_1}\abs{G_2}\abs{G_3}}-K_R \] 
for arbitrary subsets $G_1,G_2,G_3\subset\R$, with some additional conditions, but in particular with no relation imposed on the three sizes $\abs{G_1},\abs{G_2},\abs{G_3}$.
The balanced case $\abs{G_1}=\abs{G_2}=\abs{G_3}$ was previously established by Gavalakis, Goh, and Kontoyiannis \cite[Lemma 3.2]{GGK}, but this generalized version, whose proof is slightly more involved, is needed for \cref{subsec:layered}.
\cref{subsec:layered} extends this lower bound to general random variables via the dyadic decomposition of \cref{lem:layers} (exact uniformity will be obtained by applying the uniformization lemma, \cref{lem:vadhan}, on the dyadic layers), and because the dyadic layers can have very different sizes, we thus need the lower bound in the general, unbalanced case.
\subsection{The product-sum lower bound}\label{subsec:expander}
\begin{theorem}\label{thm:expander}
Let $G_1\subset\R\setminus\{0\}$ and $G_2,G_3\subset\R$ be nonempty finite
sets, and let $U_{G_1},U_{G_2},U_{G_3}$ be independent uniform random variables on
$G_1,G_2,G_3$. We assume
\begin{equation}\label{eq:nondeg}
  \P\paren{U_{G_2}+U_{G_3}=0}<1,
\end{equation}
or equivalently exclude the degenerate case that $G_2=\set{b}$ and $G_3=\set{-b}$ for some $b\in\R$.
Then, with $K_R=2+\log C_{R}$ as in \cref{def:KR},
\[
  H\paren{U_{G_1}(U_{G_2}+U_{G_3})} \ge
  \frac12\log\paren{\abs{G_1}\,\abs{G_2}\,\abs{G_3}}-K_R.
\]
\end{theorem}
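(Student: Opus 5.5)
The plan is to pass from Shannon entropy to collision entropy, bound $H_2$ by an energy count, and bound that count with the point-plane incidence bound, \cref{thm:incidence}.

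\textbf{Step 1: reduction to an energy count.} Write $W=U_{G_1}(U_{G_2}+U_{G_3})$. By (E4), $H(W)\ge H_2(W)=-\log\P(W=W^*)$, where $W^*$ is an independent copy. The event $W=W^*$ is the set of solutions
\[
  a(b+c)=a'(b'+c'),\qquad a,a'\in G_1,\ b,b'\in G_2,\ c,c'\in G_3,
\]
weighted by $(\abs{G_1}\abs{G_2}\abs{G_3})^{-2}$. Call the number of solutions $E$. It then suffices to show
\[
  E\le 2^{K_R}\paren{\abs{G_1}\abs{G_2}\abs{G_3}}^{3/2}=4C_R\paren{\abs{G_1}\abs{G_2}\abs{G_3}}^{3/2},
\]
where the exponent $3/2$ accounts for the $\frac12\log$ in the claimed bound.

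\textbf{Step 2: the solutions with $b+c=0$.} If $b+c=0$ then $b'+c'=0$, since $a'\ne0$. These solutions number at most $\abs{G_1}^2 N_0^2$, where $N_0=\#\set{(b,c):b+c=0}\le\min\set{\abs{G_2},\abs{G_3}}$. Their probability is $\P(U_{G_2}+U_{G_3}=0)^2$, which is strictly less than $1$ by \eqref{eq:nondeg}. Hypothesis \eqref{eq:nondeg} is needed exactly to keep this term from dominating, for instance when $\abs{G_2}=\abs{G_3}=1$. I would handle this part separately and bound it by
\[
  \abs{G_1}^2\min\set{\abs{G_2},\abs{G_3}}^2\le\abs{G_1}^2\abs{G_2}\abs{G_3}.
\]
This is $\le(\abs{G_1}\abs{G_2}\abs{G_3})^{3/2}$ only when $\abs{G_1}\le\abs{G_2}\abs{G_3}$. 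I expect the unbalanced regime, where $\abs{G_1}$ is much larger than $\abs{G_2}\abs{G_3}$, to need a separate trivial argument. In that regime I would use $H(W)\ge H(W\mid U_{G_2},U_{G_3})\ge\P(U_{G_2}+U_{G_3}\ne0)\log\abs{G_1}$. When $\abs{G_1}\ge\abs{G_2}\abs{G_3}$, the quantity $\log\abs{G_1}$ already dominates $\frac12\log(\abs{G_1}\abs{G_2}\abs{G_3})$. The worry is that $\P(\ne0)$ could be as small as $1/\min\set{\abs{G_2},\abs{G_3}}$. Balancing this case carefully against the constant $K_R$ is one delicate point.

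\textbf{Step 3: incidences for the nondegenerate solutions.} For solutions with $a'\ne0$ and $b+c\ne0$, rewrite the equation as $ab+ac-a'b'=a'c'$ and encode it as a point-plane incidence in $\R^3$. Take the points $(a,a',\,ac)$ or a similar coordinate choice: points $q=(b,c,a'c')$ indexed by $(b,c,\cdot)$, and planes $\set{(x,y,z):ax+ay-z=a'b'}$ indexed by $(a,a',b')$. The choice should be made so that the counts are $\abs{P}=\abs{G_2}\abs{G_3}\abs{G_1}$ and $\abs{Q}=\abs{G_1}^2\abs{G_2}$, or a symmetric variant. Swap the roles of the parameters as needed so that $\abs{P}\le\abs{Q}$ holds in each size regime; this is where the unbalanced sizes force case analysis. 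Then \cref{thm:incidence} gives
\[
  I\ll\abs{P}^{1/2}\abs{Q}+\ell\abs{P}+k\abs{Q},
\]
and with sizes $\abs{P}\approx N^{3/2}$ the main term is $(\abs{G_1}\abs{G_2}\abs{G_3})^{3/2}$. The remaining work is to bound the collinearity parameters $k$ and $\ell$ by the minimum of the relevant set sizes. I would do this by identifying the lines containing many points (for instance, fixing two coordinates) and checking that the planes containing such a line are few. This uses the refined hypothesis (iii), which only requires controlling lines that are rich in both points and planes.

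\textbf{Main obstacle.} I expect the main obstacle to be Step 3 together with the size cases. Choosing the parametrization so that (i) holds and the error terms $\ell\abs{P}+k\abs{Q}$ stay below the main term across all relative sizes of $\abs{G_1},\abs{G_2},\abs{G_3}$ is the hard part. Where they do not stay below it, I would fall back on the trivial conditioning bound of Step 2. The various factors are then absorbed into the constant $4C_R=2^{K_R}$.
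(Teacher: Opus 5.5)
Your architecture matches the paper's: collision entropy plus the point--plane bound when $\abs{G_1}\le\abs{G_2}\abs{G_3}$, and conditioning on $(U_{G_2},U_{G_3})$ when $\abs{G_1}>\abs{G_2}\abs{G_3}$. However, Step 3, where all the content lies, is left undone, and the parametrization you sketch is invalid.

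\textbf{The encoding in Step 3.} As written, the points $(b,c,a'c')$ and the planes $ax+ay-z=a'b'$ both involve $a'$. So $P\times Q$ is not in bijection with the $6$-tuples: a point built with $a_1'$ and a plane built with $a_2'$ are incident when $a(b+c)=a_1'c'+a_2'b'$, which is not a collision. Consistently, with $\mathcal N=\abs{G_1}\abs{G_2}\abs{G_3}$, your sizes $\abs{P}=\abs{G_1}\abs{G_2}\abs{G_3}$ and $\abs{Q}=\abs{G_1}^2\abs{G_2}$ do not multiply to $\mathcal N^2$. Also, the main term $\abs{P}^{1/2}\abs{Q}$ differs from $\mathcal N^{3/2}$ by the factor $\abs{G_1}/\abs{G_3}$.

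Your rewriting $ab+ac-a'b'=a'c'$ is the right one. What is missing is to split the six variables into complementary triples: points $(a,b',ac)$ indexed by $(a,b',c)$, and planes $bx-a'y+z=a'c'$ indexed by $(a',b,c')$. Since $0\notin G_1$, both indexings are injective, so $\abs{P}=\abs{Q}=\mathcal N$ and hypothesis (i) holds with no swapping. The incidences are then exactly all collisions, including those with $b+c=0$, so your separate count of those in Step 2 is unnecessary.

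\textbf{The collinearity check.} This must give $k\lesssim\mathcal N^{1/2}$, not a bound by ``the minimum of the relevant set sizes'', because the error term is $k\abs{Q}=k\mathcal N$. The argument runs as follows.
\begin{enumerate}
\item Each plane of $Q$ has normal $(b,-a',1)$, hence contains no vertical line.
\item A non-vertical line projects injectively into the $xy$-plane, where $P$ projects into $G_1\times G_2$. So it meets $P$ in at most $\max\set{\abs{G_1},\abs{G_2}}$ points.
\item Assume without loss of generality $\abs{G_2}\le\abs{G_3}$. Then the case hypothesis $\abs{G_1}\le\abs{G_2}\abs{G_3}$ gives $\max\set{\abs{G_1},\abs{G_2}}^2\le\mathcal N$.
\end{enumerate}
Now \cref{thm:incidence} with $\ell=1$ and $k=\max\set{\abs{G_1},\abs{G_2}}+1\le\mathcal N^{1/2}+1$ gives $E\le C_R(2\mathcal N^{3/2}+2\mathcal N)\le4C_R\mathcal N^{3/2}$. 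Vertical lines can carry $\abs{G_3}>\mathcal N^{1/2}$ points of $P$ (take $\abs{G_1}=\abs{G_2}=1$), so the fact that they lie in no plane of $Q$ is genuinely used.

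\textbf{The unbalanced case.} Here your bound $H(W)\ge(1-\theta)\log\abs{G_1}$, with $\theta=\P(U_{G_2}+U_{G_3}=0)$, already suffices, and your worry about it is unfounded. Your own estimate $N_0\le\min\set{\abs{G_2},\abs{G_3}}$ gives $\theta\le1/M$ with $M=\max\set{\abs{G_2},\abs{G_3}}$. Hence $\P(U_{G_2}+U_{G_3}\ne0)\ge\frac12$ (if $M=1$ then \eqref{eq:nondeg} forces $\theta=0$), and $\abs{G_2}\abs{G_3}\le M^2$. So, when $\abs{G_1}>\abs{G_2}\abs{G_3}$,
\[
  (1-\theta)\log\abs{G_1}-\frac12\log\mathcal N
  \ge-\theta\log\paren{\abs{G_2}\abs{G_3}}
  \ge-\frac{2\log M}{M}\ge-1.07>-K_R .
\]
The paper also keeps the term $h(\theta)$ from $H(W)=h(\theta)+(1-\theta)H(W\mid W\ne0)$ to get the sharper constant $-0.212$, but that is not needed. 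This case therefore closes easily. The genuine gap is Step 3, which is the heart of the theorem.
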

Throughout this subsection we assume the setting of \cref{thm:expander}, and define
\[
  \mathcal N=\abs{G_1}\abs{G_2}\abs{G_3},\qquad S=U_{G_2}+U_{G_3},\qquad W=U_{G_1}S,
  \qquad\text{and}\qquad \theta=\P(S=0)<1.
\]
Also let
\[ n_0=\#\{(b,c)\in G_2\times G_3:b+c=0\}, \]
so that $\theta=\frac{n_0}{\abs{G_2}\abs{G_3}}$.
The proof splits into casework on $\abs{G_1}>\abs{G_2}\abs{G_3}$ versus $\abs{G_1}\le\abs{G_2}\abs{G_3}$, the former being resolved via an elementary argument and the latter relying on Rudnev's incidence bound, \cref{thm:incidence}.

We first record some basic observations about $\theta$.
\begin{lemma}\label{lem:theta}
With the notation above,
\begin{enumerate}
\item[\textup{(P1)}] $\theta\le1/\max\set{\abs{G_2},\abs{G_3}}$;
\item[\textup{(P2)}] $\theta\le\frac12$;
\item[\textup{(P3)}] If $\theta>0$ then $\abs{G_2}\abs{G_3}\le\theta^{-2}$.
\end{enumerate}
\end{lemma}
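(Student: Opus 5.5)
The three claims follow from counting zero-sum pairs. For fixed $b\in G_2$ there is at most one $c\in G_3$ with $b+c=0$, namely $c=-b$. Hence $n_0\le\abs{G_2}$, and symmetrically $n_0\le\abs{G_3}$, so $n_0\le\min\set{\abs{G_2},\abs{G_3}}$. In fact $n_0=\abs{G_2\cap(-G_3)}$. Dividing by $\abs{G_2}\abs{G_3}$ gives
\[
  \theta=\frac{n_0}{\abs{G_2}\abs{G_3}}\le\frac{\min\set{\abs{G_2},\abs{G_3}}}{\abs{G_2}\abs{G_3}}=\frac1{\max\set{\abs{G_2},\abs{G_3}}},
\]
which is (P1).

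For (P2) I would split on $\max\set{\abs{G_2},\abs{G_3}}$. If this maximum is at least $2$, then (P1) immediately gives $\theta\le\frac12$. Otherwise $\abs{G_2}=\abs{G_3}=1$, say $G_2=\set{b}$ and $G_3=\set{c}$, so $\theta\in\set{0,1}$. The nondegeneracy hypothesis \eqref{eq:nondeg} forbids $\theta=1$, so $\theta=0\le\frac12$. This same case analysis also justifies the stated equivalence: $\theta=1$ forces every pair in $G_2\times G_3$ to sum to zero, which, since each $b$ has at most one partner, forces $\abs{G_2}=\abs{G_3}=1$ with $c=-b$.

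For (P3), assume $\theta>0$. By (P1), $\max\set{\abs{G_2},\abs{G_3}}\le\theta^{-1}$, and the minimum is bounded by the maximum. Therefore
\[
  \abs{G_2}\abs{G_3}\le\max\set{\abs{G_2},\abs{G_3}}^2\le\theta^{-2}.
\]

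There is no real obstacle here. The only point needing care is (P2) in the singleton case, which is exactly where hypothesis \eqref{eq:nondeg} is used.
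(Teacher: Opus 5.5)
Your proof is correct and follows the paper's own argument essentially step for step. You bound $n_0\le\min\set{\abs{G_2},\abs{G_3}}$ for (P1), split on whether $\max\set{\abs{G_2},\abs{G_3}}\ge2$ and use \eqref{eq:nondeg} in the singleton case for (P2), and use $\abs{G_2}\abs{G_3}\le\max\set{\abs{G_2},\abs{G_3}}^2\le\theta^{-2}$ for (P3).
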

 
\begin{proof}
(P1) For each $b\in G_2$ there is at most one $c\in G_3$ with $b+c=0$, so $n_0\le\abs{G_2}$ and symmetrically $n_0\le\abs{G_3}$, thus
\[
  \theta\le \frac{\min\set{\abs{G_2},\abs{G_3}}}{\abs{G_2}\abs{G_3}}
  =\frac1{\max\set{\abs{G_2},\abs{G_3}}}.
\]

(P2) If $\max\set{\abs{G_2},\abs{G_3}}\ge2$ then (P1) gives $\theta\le\frac12$; otherwise $\abs{G_2}=\abs{G_3}=1$ and so $\theta<1$ implies $b+c\ne0$, so $\theta=0$.
 
(P3) By (P1),
\[ \abs{G_2}\abs{G_3}\le\max\set{\abs{G_2},\abs{G_3}}^2\le\theta^{-2}. \qedhere \]
\end{proof}
 
\subsubsection{The case $\abs{G_1}>\abs{G_2}\abs{G_3}$}
We first record a simple bound on $H(W)$.
\begin{lemma}\label{lem:regime1-base}
For any nonempty finite $G_1\subseteq\R\setminus\{0\}$ and $G_2,G_3\subseteq\R$,
\[
  H(W) = h(\theta)+(1-\theta)H(W\mid W\ne0) \ge h(\theta)+(1-\theta)\log\abs{G_1}.
\]
\end{lemma}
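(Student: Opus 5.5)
The plan is to use the chain rule on the indicator of $\{W=0\}$ and then bound the conditional entropy by conditioning on $S$.

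\emph{The event $\{W=0\}$.} Since $G_1\subset\R\setminus\{0\}$, we have $U_{G_1}\neq0$ almost surely. Hence $W=0$ if and only if $S=0$, and so $\P(W=0)=\theta$.

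\emph{The identity.} Let $T=\mathbf 1\{W=0\}$, which is a function of $W$. By (E2) of \cref{lem:entropy-facts},
\[
H(W)=H(T)+\theta\,H(W\mid W=0)+(1-\theta)\,H(W\mid W\ne0).
\]
Here $H(T)=h(\theta)$, and $H(W\mid W=0)=0$ because $W$ is constant on that event. This gives the claimed identity. If $\theta=0$ the conditioning on $\{W\neq0\}$ is trivial, and by (P2) of \cref{lem:theta} we always have $1-\theta>0$, so the conditioning is well defined.

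\emph{The inequality.} It remains to show $H(W\mid W\ne0)\ge\log\abs{G_1}$. On the event $\{W\neq0\}=\{S\neq0\}$, the variable $U_{G_1}$ is independent of $S$, so its conditional law given $S=s$ is still uniform on $G_1$. By (E1), conditioning reduces entropy, so under $\P(\cdot\mid S\ne0)$ we have
\[
H(W\mid S\ne0)\ge H(W\mid S,\,S\ne0)=\sum_{s\ne0}\P(S=s\mid S\ne0)\,H(sU_{G_1}).
\]
For $s\neq0$ the map $x\mapsto sx$ is injective, so $H(sU_{G_1})=\log\abs{G_1}$, and the sum equals $\log\abs{G_1}$.

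There is no real obstacle here. The only point needing care is that $\{W\ne0\}$ and $\{S\ne0\}$ are the same event, and that independence of $U_{G_1}$ and $S$ survives conditioning on this $S$-measurable event.
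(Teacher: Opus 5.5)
Your proof is correct and is essentially the paper's argument: apply the chain rule to the indicator of $\{W=0\}=\{S=0\}$, then condition on $S$ within $\{S\neq 0\}$, using the independence of $U_{G_1}$ and $S$ and the injectivity of $x\mapsto sx$. One small remark: citing (P2) for $1-\theta>0$ relies on the subsection's standing nondegeneracy assumption $\theta<1$, but if $\theta=1$ both sides are $0$ under the usual convention, so nothing is lost.
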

\begin{proof}
Let $E=\1{S=0}$, where note $E=\1{W=0}$ as $0\not\in G_1$.
Thus
\[
  H(W)=H(W,E)=H(E)+H(W\mid E)
  =h(\theta)+(1-\theta)H(W\mid E=0),
\]
and so it remains to show $H(W\mid E=0)\geq\log\abs{G_1}$, which follows from
\begin{align*}
    H(W\mid E=0)
    &\geq H(W\mid S,E=0)
    \\ &= \sum_{s\neq0}\P(S=s\mid E=0)\,H(W\mid S=s)
    \\ &= \sum_{s\neq0}\P(S=s\mid E=0)\,H(sU_{G_1})
    \\ &= \sum_{s\neq0}\P(S=s\mid E=0)\,\log\abs{G_1}
    \\ &= \log\abs{G_1},
\end{align*}
as $U_{G_1}$ and $S$ are independent.
\end{proof}
We now complete the proof of \cref{thm:expander} in the case $\abs{G_1}>\abs{G_2}\abs{G_3}$.
Recall from \cref{def:KR} that $K_R=2+\log C_R\geq2$.
\begin{lemma}\label{lem:regime1-margin}
Define
\[
  \kappa_1 = -\min_{0<\theta\le1/2}
  \bracket{\theta\log\theta-(1-\theta)\log(1-\theta)} = 0.2117313\ldots,
\]
with minimizer $\theta_\star=0.1613782\ldots$ the root of $\theta(1-\theta)=e^{-2}$ in $(0,\frac12)$. If $\abs{G_1}>\abs{G_2}\abs{G_3}$ and $\theta\le\frac12$, then
\[
  h(\theta)+(1-\theta)\log\abs{G_1} \geq \frac12\log\mathcal N -\kappa_1 .
\]
\end{lemma}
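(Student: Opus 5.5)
The plan is to use $\abs{G_1}>\abs{G_2}\abs{G_3}$ to turn the bound into one about $\log\abs{G_1}$ alone. Since $\mathcal N=\abs{G_1}\abs{G_2}\abs{G_3}<\abs{G_1}^2$, we have
\[ \tfrac12\log\mathcal N<\log\abs{G_1}. \]
Writing
\[ h(\theta)+(1-\theta)\log\abs{G_1}=\log\abs{G_1}+h(\theta)-\theta\log\abs{G_1}, \]
it therefore suffices to show
\[ h(\theta)-\theta\log\abs{G_1}\ge -\kappa_1+\paren{\tfrac12\log\mathcal N-\log\abs{G_1}}. \]
The key is that the term $-\theta\log\abs{G_1}$ can be large, so we cannot just drop it. It has to be balanced against the slack $\log\abs{G_1}-\frac12\log\mathcal N=\frac12\log\paren{\abs{G_1}/(\abs{G_2}\abs{G_3})}$. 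That slack, however, is not obviously large either.

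The case $\theta=0$ is immediate: the left-hand side of the lemma is $\log\abs{G_1}>\frac12\log\mathcal N$.

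For $\theta>0$, I would argue directly. Treat the left-hand side minus the right-hand side as a function of $x=\log\abs{G_1}$:
\[ h(\theta)+(1-\theta)x-\tfrac12 x-\tfrac12\log(\abs{G_2}\abs{G_3}). \]
Since $\theta\le\frac12$, the coefficient $\frac12-\theta$ of $x$ is nonnegative, so the expression is nondecreasing in $x$. Hence it is minimized at the smallest allowed value, $x=\log(\abs{G_2}\abs{G_3})=:y$ (the constraint is strict, but bounding at the endpoint is fine). There the expression equals
\[ h(\theta)-\theta y. \]
Now use (P3) of \cref{lem:theta}: $y\le-2\log\theta$. Hence
\[ h(\theta)-\theta y\ge h(\theta)+2\theta\log\theta=\theta\log\theta-(1-\theta)\log(1-\theta). \]
By the definition of $\kappa_1$, this is at least $-\kappa_1$ for $0<\theta\le\frac12$.

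The only nontrivial point is identifying this combination. The crude bound $y\le-2\log\theta$ from (P3) exactly produces the expression $\theta\log\theta-(1-\theta)\log(1-\theta)$ appearing in the definition of $\kappa_1$. The remaining task is the calculus giving the numerical value and minimizer. Differentiating gives
\[ \log\theta+\log(1-\theta)+2\log e=0, \quad\text{i.e.}\quad \theta(1-\theta)=e^{-2}, \]
and one checks that this interior critical point is the minimum on $(0,\frac12]$ by comparing with the limit $0$ as $\theta\to0$ and the value $0$ at $\theta=\frac12$.
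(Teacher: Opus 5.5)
Your proof is correct and follows the paper's argument. Like the paper, you use $\frac12-\theta\ge0$ to replace $\log\abs{G_1}$ by $\log(\abs{G_2}\abs{G_3})$, and then apply (P3) to arrive at $\theta\log\theta-(1-\theta)\log(1-\theta)$. The only differences are minor. You locate the minimum by comparing the unique critical point in $(0,\frac12)$ with the boundary values $0$, whereas the paper uses strict convexity on $(0,\frac12)$. Your opening ``it suffices'' reduction is never used and can be dropped.
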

\begin{proof}
As $\frac12-\theta\ge0$ from \cref{lem:theta} and $\abs{G_1}>\abs{G_2}\abs{G_3}\ge1$,
\begin{align*}
    h(\theta) + (1-\theta)\log\abs{G_1}-\frac{1}{2}\log\mc N
    &= h(\theta) + \paren{\frac12-\theta}\log\abs{G_1}-\frac12\log\paren{\abs{G_2}\abs{G_3}}
    \\ &\geq h(\theta) - \theta\log\paren{\abs{G_2}\abs{G_3}}.
\end{align*}
If $\theta=0$ this is 0 and we are done.
If $\theta>0$, \cref{lem:theta} gives
$\log(\abs{G_2}\abs{G_3})\le-2\log\theta$, so
\[
  h(\theta) + (1-\theta)\log\abs{G_1}-\frac{1}{2}\log\mc N \ge h(\theta)+2\theta\log\theta
   = \theta\log\theta-(1-\theta)\log(1-\theta).
\]
Minimizing this expression, which we denote $f(\theta)$, over $\theta\in\left(0,\frac{1}{2}\right]$, we compute
\[
  f'(\theta)=\log\paren{\theta(1-\theta)}+2\log e,\qquad
  f''(\theta)=(\log e)\paren{\frac1\theta-\frac1{1-\theta}}>0
  \quad\text{for }0<\theta<\frac12,
\]
so $f$ is strictly convex on $(0,\frac12)$, with $f'(0^+)=-\infty$ and $f'(\frac12)>0$.
Hence $f$ has a unique interior minimum, at $\theta_\star$.
\end{proof}
 
\subsubsection{The case $\abs{G_1}\le\abs{G_2}\abs{G_3}$}
Without loss of generality, we may assume $\abs{G_2}\leq\abs{G_3}$.
Following the proof of
\cite[Lemma 3.2]{GGK}, which in turn extracts the argument from
\cite[Theorem~2.1]{KMPS2020}, define
\begin{align*}
  P&=\set{(a,b',ac)\ :\ (a,b',c)\in G_1\times G_2\times G_3}\subseteq\R^3,\\
  Q&=\set{\rho_{a',b,c'}:=\{(x,y,z)\in\R^3: bx-a'y+z=a'c'\}\ :\
     (a',b,c')\in G_1\times G_2\times G_3}.
\end{align*}
Recall the notation
\[ I(P,Q)=\#\{(x,\rho)\in P\times Q:x\in\rho\} \]
from \cref{thm:incidence}.
We first verify the number of collisions $a(b+c)=a'(b'+c')$ equals $I(P,Q)$.
\begin{lemma}\label{lem:col-inc}
The map $\phi:(a,b',c)\mapsto(a,b',ac)$ is bijective as a map $G_1\times G_2\times G_3\to P$, and $\psi:(a',b,c')\mapsto\rho_{a',b,c'}$ is a well-defined injection of $G_1\times G_2\times G_3$ into the set of planes of $\R^3$.
Consequently $\abs{P}=\abs{Q}=\mathcal N$.
Moreover, letting
\[
  \mathrm{Col} = \#\set{(a,b,c,a',b',c')\in G_1^2\times G_2^2\times G_3^2:\
  a(b+c)=a'(b'+c')},
\]
we have $\mathrm{Col}=I(P,Q)$.
\end{lemma}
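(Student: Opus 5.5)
The proof is a direct verification, done in three steps: bijectivity of $\phi$, injectivity of $\psi$, and the identification of collisions with incidences.

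\textbf{Bijectivity of $\phi$.} Surjectivity onto $P$ holds by the definition of $P$. For injectivity, suppose $(a,b',ac)=(\bar a,\bar b',\bar a\bar c)$. Then $a=\bar a$ and $b'=\bar b'$, and since $a\in G_1\subset\R\setminus\{0\}$ we may divide $ac=a\bar c$ by $a$ to get $c=\bar c$. Hence $\abs{P}=\mathcal N$.

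\textbf{Injectivity of $\psi$.} Each equation $bx-a'y+z=a'c'$ has $z$-coefficient $1\ne0$, so it defines a genuine plane. Since the $z$-coefficient is normalized to $1$, two such equations define the same plane exactly when the remaining coefficients agree: $b=\bar b$, $a'=\bar a'$, and $a'c'=\bar a'\bar c'$. As $a'\neq0$, the last equation gives $c'=\bar c'$. So $\psi$ is injective and $\abs{Q}=\mathcal N$.

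\textbf{Collisions equal incidences.} Consider the map
\[
(a,b,c,a',b',c')\mapsto\big(\phi(a,b',c),\psi(a',b,c')\big).
\]
It is a bijection from $G_1^2\times G_2^2\times G_3^2$ onto $P\times Q$, because it is a product of the bijections $\phi$ and $\psi$ after reordering coordinates. The point $(a,b',ac)$ lies on $\rho_{a',b,c'}$ if and only if
\[
ba-a'b'+ac=a'c',
\]
that is, if and only if $a(b+c)=a'(b'+c')$. So the six-tuples counted in $\mathrm{Col}$ correspond exactly to the incident pairs in $P\times Q$. Since $P$ is a set and $\phi$ is injective, and likewise for $Q$ and $\psi$, no incidence is counted twice. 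This gives $\mathrm{Col}=I(P,Q)$.

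The only step needing care is that $0\notin G_1$; it is used both to invert $\phi$ and to make $\psi$ injective. There is no substantial obstacle.
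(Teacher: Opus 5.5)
Your proposal is correct and follows the paper's proof essentially step for step: you invert $\phi$ using $a\neq0$, and you get injectivity of $\psi$ from the normalized $z$-coefficient together with $a'\neq0$. You then identify incidences with collisions through the bijection $(a,b,c,a',b',c')\mapsto(\phi(a,b',c),\psi(a',b,c'))$ and the equivalence $ba-a'b'+ac=a'c'\iff a(b+c)=a'(b'+c')$. Your argument spells out the injectivity of $\psi$ slightly more explicitly than the paper does.
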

\begin{proof}
Bijectivity of $\phi$ is clear as $a\neq0$.
$\rho_{a',b,c'}$ is a plane, i.e., has nonzero normal vector $(b,-a',1)$, and injectivity of $\psi$ follows, after recalling $a'\neq0$.\footnote{\cite[Lemma 3.2]{GGK} does not assume $0\notin G_1$, and thus their subsequent identification $\mathrm{Col}=I(P,Q)$ is erroneous; their conclusion nevertheless survives in the equal-size case they consider, as the tuples with $a=0$ or $a'=0$ contribute $O(\abs A^4)$ to $\mathrm{Col}$ and they eventually bound $I(P,Q)=O(\abs{A}^{9/2})$, which can absorb the $O(\abs A^4)$ error.}

The point $\phi(a,b',c)=(a,b',ac)$ lies on the plane $\psi(a',b,c')$ if and only if
\[
  b\cdot a-a'\cdot b'+ac=a'c'
  \quad\Longleftrightarrow\quad
  a(b+c)=a'(b'+c').
\]
The map $(a,b,c,a',b',c')\mapsto(\phi(a,b',c),\psi(a',b,c'))$ is a bijection from
$G_1^2\times G_2^2\times G_3^2$ onto $P\times Q$, and so the previous equivalence implies $\mathrm{Col}=I(P,Q)$.
\end{proof}
 
We verify the hypotheses of \cref{thm:incidence} with $k=\max\set{\abs{G_1},\abs{G_2}}+1$ and $\ell=1$; note hypotheses (i) and (ii) hold, the latter trivially so.
\begin{lemma}\label{lem:inc-hyp}
Any line $\mc L'\subset\R^2$ satisfies
\[ \abs{\mc L'\cap(G_1\times G_2)}\le\max\set{\abs{G_1},\abs{G_2}}. \]
Thus, if a line $\mc L\subseteq\R^3$ is contained in some plane of $Q$, then
\[ \abs{\mc L\cap P}\le\max\set{\abs{G_1},\abs{G_2}}<k. \]
\end{lemma}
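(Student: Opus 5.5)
The proof has two parts: a planar counting claim, and a transfer of that claim to $\R^3$ through the coordinate projection $(x,y,z)\mapsto(x,y)$.

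For the planar claim, fix a line $\mc L'\subset\R^2$. If $\mc L'$ is vertical, say $\{x=x_0\}$, then every point of $\mc L'\cap(G_1\times G_2)$ has first coordinate $x_0$. So it is determined by its second coordinate, giving at most $\abs{G_2}$ points. Otherwise $\mc L'$ is the graph of a function $y=f(x)$. Then every point of the intersection is determined by its first coordinate, giving at most $\abs{G_1}$ points. In either case the count is at most $\max\set{\abs{G_1},\abs{G_2}}$.

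For the second statement, let $\mc L\subseteq\R^3$ be a line contained in some plane $\rho_{a',b,c'}$. Every point $(a,b',ac)\in P$ is determined by its first two coordinates $(a,b')$, since the third coordinate is forced. Hence the projection $\pi(x,y,z)=(x,y)$ is injective on $P$, and
\[ \abs{\mc L\cap P}=\abs{\pi(\mc L\cap P)}\le\abs{\pi(\mc L)\cap(G_1\times G_2)}. \]

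It remains to control $\pi(\mc L)$, which is either a line or a single point.

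If $\pi(\mc L)$ is a line, the planar bound gives at most $\max\set{\abs{G_1},\abs{G_2}}$ points.

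If $\pi(\mc L)$ is a single point, then $\mc L$ is vertical, i.e.\ parallel to $(0,0,1)$. This is where containment in a plane of $Q$ is used. The normal vector of $\rho_{a',b,c'}$ is $(b,-a',1)$. Its dot product with $(0,0,1)$ is $1\neq0$, so no vertical line lies in any plane of $Q$. Hence this case cannot occur. (Even if it did, the injectivity of $\pi$ on $P$ would bound the count by $1$.)

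Therefore $\abs{\mc L\cap P}\le\max\set{\abs{G_1},\abs{G_2}}<k$.

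The only step that needs care is the vertical-line case. Naively, such a line has a one-point projection and the counting argument does not apply to it as a line. The proof handles this case separately, using either the normal vector of the planes or the injectivity of $\pi$ on $P$.
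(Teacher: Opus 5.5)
Your overall route matches the paper's: the planar count, then the projection $\pi(x,y,z)=(x,y)$, with vertical lines excluded via the normal $(b,-a',1)$. However, the transfer step rests on a false claim.

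The projection $\pi$ is \emph{not} injective on $P$. Fix $(a,b')\in G_1\times G_2$. The points $(a,b',ac)$ with $c\in G_3$ all lie in $P$, and they are pairwise distinct because $a\neq0$. So $P$ has $\abs{G_3}$ points above each point of $G_1\times G_2$, and the third coordinate is not determined by the first two. Hence your justification of $\abs{\mc L\cap P}=\abs{\pi(\mc L\cap P)}$ is invalid. Your parenthetical remark is also wrong: the vertical line $\set{(a,b')}\times\R$ meets $P$ in exactly $\abs{G_3}$ points. This can be far larger than $\max\set{\abs{G_1},\abs{G_2}}$; take $\abs{G_1}=\abs{G_2}=1$ and $\abs{G_3}$ large.

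The repair uses only ingredients you already have, in a different logical order. First show that $\mc L$ is not vertical: its direction vector is orthogonal to $(b,-a',1)$, so it cannot be $(0,0,\lambda)$ with $\lambda\neq0$. Then $\pi$ is injective on the line $\mc L$ itself, and $\mc L':=\pi(\mc L)$ is a line. Therefore $\abs{\mc L\cap P}=\abs{\pi(\mc L\cap P)}\le\abs{\mc L'\cap(G_1\times G_2)}\le\max\set{\abs{G_1},\abs{G_2}}$.

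The key point is that injectivity on $\mc L$, not on $P$, is what transfers the count. This is exactly why the hypothesis that $\mc L$ lies in a plane of $Q$ is indispensable rather than optional. With this correction, your argument is the paper's proof.
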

\begin{proof}
If $\mc L'$ is vertical it meets $G_1\times G_2$ in at most $\abs{G_2}$ points; otherwise distinct points of $\mc L'\cap(G_1\times G_2)$ have distinct $x$-coordinates in $G_1$, and so there are at most $\abs{G_1}$ of them.

The plane $\rho_{a',b,c'}$ has normal vector $\mbf n=(b,-a',1)$. If
$\mc L\subseteq\rho_{a',b,c'}$ has direction vector $\mbf d$ then $\mbf d\cdot\mbf n=0$; a vertical $\mbf d=(0,0,\lambda)$ with $\lambda\ne0$ would give $\mbf d\cdot\mathbf n=\lambda\ne0$, contradiction.
Hence $\mc L$ is non-vertical and projects injectively onto a line $\mc L'$ of the plane
$z=0$.
Under the same projection, $P$ projects into $G_1\times G_2$, so
\[ \abs{\mc L\cap P}\le\abs{\mc L'\cap(G_1\times G_2)}\le\max\set{\abs{G_1},\abs{G_2}}<k. \qedhere\]
\end{proof}
Applying \cref{thm:incidence} then completes the proof of \cref{thm:expander}, as follows.
\begin{lemma}\label{lem:col-bound}
If $\abs{G_1}\le\abs{G_2}\abs{G_3}$ and $\abs{G_2}\le\abs{G_3}$, then $\max\set{\abs{G_1},\abs{G_2}}\le\mathcal N^{1/2}$, and
consequently
\[
  \mathrm{Col} \le 4C_R\,\mathcal N^{3/2}.
\]
Thus,
\[ H(W) \geq \frac12\log\mc N - K_R. \]
\end{lemma}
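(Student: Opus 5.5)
The proof has three parts: check the size inequality, apply \cref{thm:incidence} with the parameters chosen above, and convert the collision count into an entropy bound via collision entropy.

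\emph{Size inequality.} Since $\abs{G_2}\le\abs{G_3}$, we have $\abs{G_2}^2\le\abs{G_2}\abs{G_3}\le\mathcal N$, using $\abs{G_1}\ge1$. The hypothesis $\abs{G_1}\le\abs{G_2}\abs{G_3}$ gives $\abs{G_1}^2\le\abs{G_1}\abs{G_2}\abs{G_3}=\mathcal N$. Hence $\max\set{\abs{G_1},\abs{G_2}}\le\mathcal N^{1/2}$.

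\emph{Incidence bound.} By \cref{lem:col-inc}, $\abs P=\abs Q=\mathcal N$, so hypothesis (i) of \cref{thm:incidence} holds, and (ii) is vacuous over $\R$. For (iii), take $k=\max\set{\abs{G_1},\abs{G_2}}+1$ and $\ell=1$. Any line contained in at least one plane of $Q$ meets $P$ in fewer than $k$ points by \cref{lem:inc-hyp}, so no line both contains $k$ points of $P$ and lies in $\ell$ planes of $Q$. \Cref{def:KR} then gives
\[ I(P,Q)\le C_R\paren{\mathcal N^{1/2}\mathcal N+\mathcal N+k\mathcal N}. \]
Using $k\le\mathcal N^{1/2}+1\le2\mathcal N^{1/2}$ and $\mathcal N\le\mathcal N^{3/2}$, the bracket is at most $(1+1+2)\mathcal N^{3/2}$. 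Since $\mathrm{Col}=I(P,Q)$ by \cref{lem:col-inc}, this yields $\mathrm{Col}\le4C_R\mathcal N^{3/2}$.

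\emph{Entropy bound.} With $W'$ an independent copy of $W$, we have $\P(W=W')=\mathrm{Col}/\mathcal N^2$, because the six uniform coordinates are independent and each tuple in $G_1^2\times G_2^2\times G_3^2$ has probability $\mathcal N^{-2}$. By (E4),
\[ H(W)\ge H_2(W)=-\log\frac{\mathrm{Col}}{\mathcal N^2}\ge 2\log\mathcal N-\log(4C_R)-\frac32\log\mathcal N=\frac12\log\mathcal N-(2+\log C_R). \]
The right-hand side equals $\frac12\log\mathcal N-K_R$.

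The only step needing care is the choice of $k$ in hypothesis (iii). It must exceed the line–point bound of \cref{lem:inc-hyp}, and it must also stay within $O(\mathcal N^{1/2})$ so that the term $k\abs Q$ is absorbed into $\mathcal N^{3/2}$. Both requirements are met precisely because of the size inequality established first.
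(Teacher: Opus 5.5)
Your proof is correct and follows the paper's argument essentially step for step. It establishes the same size inequality, applies \cref{thm:incidence} with the same choice $k=\max\set{\abs{G_1},\abs{G_2}}+1$, $\ell=1$ (your absorption $k\le 2\mathcal N^{1/2}$ is only a slightly different bookkeeping of the paper's bound $C_R(2\mathcal N^{3/2}+2\mathcal N)\le 4C_R\mathcal N^{3/2}$), and makes the same passage through the collision entropy $H_2(W)=2\log\mathcal N-\log\mathrm{Col}$.
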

\begin{proof}
If $\max\set{\abs{G_1},\abs{G_2}}=\abs{G_1}$, then
\[ \max\set{\abs{G_1},\abs{G_2}}^2 = \abs{G_1}^2 \leq \abs{G_1}\abs{G_2}\abs{G_3} = \mc N, \]
and otherwise
\[ \max\set{\abs{G_1},\abs{G_2}}^2=\abs{G_2}^2 \leq \abs{G_2}\abs{G_3} \leq \mc N. \]
By \cref{lem:col-inc,lem:inc-hyp}, \cref{thm:incidence}
applies with $|P|=|Q|=\mathcal N$, $\ell=1$ and 
\[ k=\max\set{\abs{G_1},\abs{G_2}}+1\le\mathcal N^{1/2}+1, \]
and gives
\[
  \mathrm{Col}=I(P,Q) \le C_R\paren{\mathcal N^{1/2}\cdot\mathcal N
  +1\cdot\mathcal N+\paren{\mathcal N^{1/2}+1}\mathcal N}
   = C_R\paren{2\mathcal N^{3/2}+2\mathcal N} \leq 4C_R \mc N^{3/2}.
\]
Let $W'=U_{G_1}'(U_{G_2}'+U_{G_3}')$ be an independent copy of $W$ defined via independent copies of $U_{G_1},U_{G_2},U_{G_3}$.
Then
\[ \sum_w\P(W=w)^2=\P(W=W')=\frac{\mathrm{Col}}{\mathcal N^2}, \]
and so
\[ H(W) \geq H_2(W) = 2\log\mc N - \log\mathrm{Col} \geq \frac12\log\mc N - K_R. \qedhere \]
\end{proof}
Having completed both cases $\abs{G_1}>\abs{G_2}\abs{G_3}$ and $\abs{G_1}\leq\abs{G_2}\abs{G_3}$, this completes the proof of \cref{thm:expander}.
\begin{remark}
    The separate argument for the case $\abs{G_1}>\abs{G_2}\abs{G_3}$ is necessary: as a simple example, the incidence argument alone cannot see the degeneracy from \eqref{eq:nondeg}.
    For $G_2=\{b\}$ and $G_3=\{-b\}$ one has $\mathcal N=\abs{G_1}$ while every one of the $\abs{G_1}^2$ tuples is a collision, so $\mathrm{Col}=\mc N^2$, contradicting \cref{lem:col-bound} as $\abs{G_1}=\mc N\to\infty$.
    However, this unbalanced case where $\abs{G_1}$ is large is precisely what the first case covers.
\end{remark}
\begin{remark}\label{rmk:expander-sharp}
The following family of examples shows \cref{thm:expander} is sharp, up to the constant $K_R$ being removed.
Let $G_1\subseteq\R\setminus\{0\}$ be finite and symmetric, let $G_2=\{b_1,b_2\}$, and let $G_3=-G_2$.
Writing $d=b_1-b_2\ne0$, the variable $S=U_{G_2}+U_{G_3}$ is $0$ with probability $\frac12$ and $\pm d$ with probability $\frac14$ each; conditionally on $S\ne0$, $W=U_{G_1}S$ is uniform on $dG_1$ as $G_1$ is symmetric.
So
\[
  H(W)=h\paren{\frac12}+\frac12\log\abs{G_1}=1+\frac12\log\abs{G_1} = \frac12\log\paren{4\abs{G_1}} = \frac12\log\mc N.
\]
This limits the ability of this portion of the argument to improve upon $\delta$, as discussed further in \cref{sec:open}.
\end{remark}
\subsection{Conditioning to uniform}\label{subsec:layered}
In this subsection, we apply the bound from \cref{subsec:expander} to arbitrary laws via our uniformization lemma, \cref{lem:unif-decomp}.
The gain of \cref{lem:unif-decomp} over a direct application of \cref{lem:vadhan} to $X$, as done by Gavalakis, Goh, and Kontoyiannis \cite{GGK}, is that the loss is $\Delta(H)=O(\log H)$ rather than $H-H_\infty(X)$, which can be as large as $H-O(1)$.
This is what allows this section's statements to involve Shannon entropy alone and not min-entropy.
The following is the upgraded analog of \cite[Lemma 3.5]{GGK}.
\begin{theorem}\label{thm:layered-zf}
Let $X,Y,Z$ be independent finitely supported real-valued random variables with positive entropy and $\P(X=0)=0$, and let $\pi_{YZ}=\P(Y+Z=0)$. Then
\[
  H\paren{X(Y+Z)} \ge \frac12\bracket{(1-\pi_{YZ})\paren{H(X)-\Delta(H(X))}
  +H(Y)-\Delta(H(Y))+H(Z)-\Delta(H(Z))}-K_R,
\]
where $\Delta(\eta)=\log(1+\eta)+\log e + 2$.
\end{theorem}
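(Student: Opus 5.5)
The plan is to uniformize each of $X,Y,Z$ separately with \cref{lem:unif-decomp}, condition on the three resulting labels, and apply \cref{thm:expander} cell by cell. The only real care needed is in the degenerate cells, and in keeping the weight $(1-\pi_{YZ})$ attached to the $X$-term alone.

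\emph{Setup.} Apply \cref{lem:unif-decomp} separately to $X$, $Y$ and $Z$. We build the augmented probability space as a product, so that the pairs $(X,L_1)$, $(Y,L_2)$, $(Z,L_3)$ are mutually independent. This gives the following:
\begin{itemize}
\item conditionally on $L_1=\ell_1$, $L_2=\ell_2$, $L_3=\ell_3$, the variables $X,Y,Z$ are independent and uniform on $G^{(1)}_{\ell_1}$, $G^{(2)}_{\ell_2}$, $G^{(3)}_{\ell_3}$ respectively;
\item these sets have sizes $2^{n^{(1)}_{\ell_1}}$, $2^{n^{(2)}_{\ell_2}}$, $2^{n^{(3)}_{\ell_3}}$;
\item $\E[n^{(k)}_{L_k}]\ge H(\cdot)-\Delta(H(\cdot))$ for each $k$.
\end{itemize}
Since $G^{(1)}_{\ell_1}\subseteq\operatorname{supp}X$ and $\P(X=0)=0$, every $G^{(1)}_{\ell_1}$ avoids $0$, as \cref{thm:expander} requires. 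Conditioning reduces entropy (E1), so
\[
H(X(Y+Z))\ \ge\ H(X(Y+Z)\mid L_1,L_2,L_3)=\E\bigl[H(U_{G_1}(U_{G_2}+U_{G_3}))\bigr],
\]
with the $G_k$ given by the labels.

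\emph{Per-cell bound.} Let $\theta(\ell_2,\ell_3)=\P(U_{G_2}+U_{G_3}=0)$. I will show that every cell satisfies
\[
H(U_{G_1}(U_{G_2}+U_{G_3}))\ \ge\ \tfrac12\bigl[(1-\theta)n_1+n_2+n_3\bigr]-K_R .
\]
There are two cases.
\begin{itemize}
\item \emph{Nondegenerate cells} ($\theta<1$). \Cref{thm:expander} gives the stronger bound $\tfrac12(n_1+n_2+n_3)-K_R$, which implies the display because $n_1\ge0$.
\item \emph{Degenerate cells} ($\theta=1$). Here $G_2=\{b\}$ and $G_3=\{-b\}$, so $n_2=n_3=0$ and the right-hand side equals $-K_R<0$. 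The left-hand side is nonnegative, so the display holds trivially.
\end{itemize}

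\emph{Averaging.} Take expectations over the labels. The $n_2$ and $n_3$ terms average to at least $H(Y)-\Delta(H(Y))$ and $H(Z)-\Delta(H(Z))$. For the cross term, $\theta$ depends only on $(L_2,L_3)$ and is independent of $L_1$, so
\[
\E[(1-\theta)n_1]=\bigl(1-\E\theta\bigr)\,\E[n^{(1)}_{L_1}].
\]
Moreover $\E\theta=\P(Y+Z=0)=\pi_{YZ}$, because $(Y,Z)$ is a mixture over $(L_2,L_3)$ of the pairs $(U_{G_2},U_{G_3})$. Since $1-\pi_{YZ}\ge0$, we may then insert $\E[n^{(1)}_{L_1}]\ge H(X)-\Delta(H(X))$, which yields the theorem.

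I expect the main obstacle to be the degenerate cells and the cross term. The fix is to use the $\theta$-weighted per-cell bound rather than the plain output of \cref{thm:expander}, and then to use the independence of the augmented label pairs to factor $\E[(1-\theta)n_1]$. That product structure of the augmentation has to be set up explicitly at the start.
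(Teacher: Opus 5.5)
Your proposal is correct and follows essentially the paper's argument: uniformize $X,Y,Z$ separately via \cref{lem:unif-decomp} on a product space, condition on the three labels, apply \cref{thm:expander} on the nondegenerate cells, and use the independence of $L_1$ from $(L_2,L_3)$ to factor the $X$-term. The only difference is bookkeeping. You weight the $X$-term by $1-\theta(L_2,L_3)$, whose mean is exactly $1-\pi_{YZ}$, whereas the paper weights it by the indicator of the nondegenerate event $\mathcal{E}^c$ and then uses $\P(\mathcal{E})\le\pi_{YZ}$.
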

 
\begin{proof}
Apply \cref{lem:unif-decomp} separately to $X$, to $Y$ and to $Z$, obtaining pairs $(X,L)$, $(Y,L')$, $(Z,L'')$ with data $(G_\ell,n_\ell)$, $(G'_{\ell'},n'_{\ell'})$, $(G''_{\ell''},n''_{\ell''})$.
Realize the three couplings on a single product probability space, so that the three labeled
pairs are mutually independent.
Then, conditionally on $(L,L',L'')=(\ell,\ell',\ell'')$, the triple $(X,Y,Z)$ consists of three independent uniform variables on $G_\ell$, $G'_{\ell'}$, $G''_{\ell''}$.
 
As before, let $W=X(Y+Z)$. 
Then
\begin{equation}\label{eq:cond-labels}
  H(W) \ge H(W\mid L,L',L'')=\E\bracket{H\paren{U_{G_L}(U_{G'_{L'}}+U_{G''_{L''}})}}.
\end{equation}
We need to exclude the degenerate case before we can apply \cref{thm:expander}.
Let
\[ \mc E:=\{\exists\,b\in\R:G'_{L'}=\{b\}\text{ and }G''_{L''}=\{-b\}\}, \]
an event that is $(L',L'')$-measurable and independent of $L$.
So $\mc E\subseteq\{Y+Z=0\}$ and thus $\P(\mc E) \le \pi_{YZ}$.
On $\mc E$ both $G'_{L'}$ and $G''_{L''}$ are singletons, so $n'_{L'}=n''_{L''}=0$ on $\mc E$.
 
On $\mc E^c$, the triple $(G_L,G'_{L'},G''_{L''})$ satisfies the hypotheses of \cref{thm:expander}, where
\[ \log\paren{|G_L||G'_{L'}||G''_{L''}|}=n_L+n'_{L'}+n''_{L''}. \]
Hence,
\[ H\paren{U_{G_L}(U_{G'_{L'}}+U_{G''_{L''}})} \ge \frac12\paren{n_L+n'_{L'}+n''_{L''}}-K_R \]
on $\mc E^c$.
As the left-hand side is nonnegative almost surely, including on $\mc E$, \eqref{eq:cond-labels} gives
\begin{align*}
    H(W)
    &\ge \E\bracket{1_{\mc E^c}H\paren{U_{G_L}(U_{G'_{L'}}+U_{G''_{L''}})}}
    \\ &\geq \frac12\bracket{\E\bracket{\1{\mc E^c}n_L}
    +\E\bracket{\1{\mc E^c}n'_{L'}}+\E\bracket{\1{\mc E^c}n''_{L''}}}
    -K_R\,\P(\mc E^c).
\end{align*}
By independence of $\mc E$ from $L$,
\[ \E[\1{\mc E^c}n_L]=\P(\mc E^c)\E[n_L], \]
and as $n'_{L'}=n''_{L''}=0$ on $\mc E$, we have $\E[\1{\mc E^c}n'_{L'}]=\E[n'_{L'}]$ and likewise for $n''$.
Thus, by \cref{lem:unif-decomp},
\begin{align*}
    H(W)
    &\ge \frac12\bracket{\P(\mc E^c)\,\E[n_L]+\E[n'_{L'}]+\E[n''_{L''}]}-K_R
    \\ &\geq \frac12\bracket{(1-\pi_{YZ})\paren{H(X)-\Delta(H(X))}+H(Y)-\Delta(H(Y))+H(Z)-\Delta(H(Z))}-K_R;
\end{align*}
note that $\P(\mc E^c)\E[n_L]\geq(1-\pi_{YZ})(H(X)-\Delta(H(X)))$ is valid even in the case $H(X)<\Delta(H(X))$, as then the left-hand side is nonnegative while the right-hand side is nonpositive.
\end{proof}
\begin{corollary}\label{cor:layered}
Let $\widetilde X,\widetilde X',\widetilde Y,\widetilde Z$ be i.i.d.\ finitely supported real-valued zero-free random variables with $\widetilde H=H(\widetilde X)>0$, and let
$\widetilde\pi=\P(\widetilde X+\widetilde X'=0)$.
Then
\[
    H\paren{\widetilde X(\widetilde Y+\widetilde Z)} \ge
    \frac32\widetilde H-\frac12\widetilde\pi\widetilde H
    -\frac32\log\paren{1+\widetilde H}-K_R-\frac{3}{2}(\log e + 2).
\]
\end{corollary}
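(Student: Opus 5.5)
This follows by specializing \cref{thm:layered-zf} to $X=\widetilde X$, $Y=\widetilde Y$, $Z=\widetilde Z$, which are independent and finitely supported. They have positive entropy $\widetilde H$, and $\P(\widetilde X=0)=0$ by zero-freeness. Since $(\widetilde Y,\widetilde Z)$ has the same joint law as $(\widetilde X,\widetilde X')$, we get $\pi_{YZ}=\P(\widetilde Y+\widetilde Z=0)=\widetilde\pi$. \Cref{thm:layered-zf} then gives
\[
  H\paren{\widetilde X(\widetilde Y+\widetilde Z)} \ge \frac12\bracket{(1-\widetilde\pi)\paren{\widetilde H-\Delta(\widetilde H)}+2\paren{\widetilde H-\Delta(\widetilde H)}}-K_R .
\]

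Next we expand the bracket. It equals
\[ (3-\widetilde\pi)\widetilde H-(3-\widetilde\pi)\Delta(\widetilde H), \]
so the right-hand side becomes
\[ \frac32\widetilde H-\frac12\widetilde\pi\widetilde H-\frac{3-\widetilde\pi}{2}\Delta(\widetilde H)-K_R. \]
Because $\widetilde\pi\ge0$ and $\Delta(\widetilde H)=\log(1+\widetilde H)+\log e+2>0$ (as $\widetilde H>0$), we have
\[ \frac{3-\widetilde\pi}{2}\Delta(\widetilde H)\le\frac32\Delta(\widetilde H). \]
Substituting the definition of $\Delta$ turns this into
\[ \frac32\log(1+\widetilde H)+\frac32(\log e+2), \]
which is exactly the claimed bound.

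The only step needing care is the sign of the error term. We must bound $-(3-\widetilde\pi)\Delta/2$ from below by $-3\Delta/2$, and this uses $\Delta>0$; the direction is correct. We could also invoke \cref{lem:theta}(P2) through the paper's equivalents to get $\widetilde\pi\le\frac12$, but that is not needed here. No other obstacle arises.
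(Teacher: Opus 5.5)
Your proof is correct and is the same argument the paper intends: the paper gives no separate proof, treating the corollary as the immediate specialization of \cref{thm:layered-zf} to i.i.d.\ zero-free inputs. As in your write-up, this uses $\pi_{YZ}=\widetilde\pi$ and then relaxes the error term $\frac{3-\widetilde\pi}{2}\Delta(\widetilde H)$ to $\frac32\Delta(\widetilde H)$ via $\widetilde\pi\ge0$ and $\Delta(\widetilde H)>0$.
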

 
\section{Large-doubling entropic Elekes--Ruzsa theorem}\label{sec:upper}
As done by Gavalakis, Goh, and Kontoyiannis \cite{GGK}, we now establish an upper bound on the quantity $H(\widetilde X(\widetilde Y+\widetilde Z))$ from \cref{cor:layered}, in terms of sum and product entropies, to combine with our lower bound from \cref{sec:incidence} to get a result of sum-product phenomenon type, specifically the finite support version of our large-doubling linear entropic Elekes--Ruzsa theorem, \cref{thm:baseline}.
As noted by Gavalakis, Goh, and Kontoyiannis \cite{GGK}, a result of this form appeared in the original preprint version of a paper by M\'ath\'e and O'Regan \cite{MO2025}.
However, the proof of the result, by functional submodularity, was incorrect; the result was removed in the published version of that paper.
Thus, Gavalakis, Goh, and Kontoyiannis \cite[Lemma 3.6]{GGK} then provided a proof of this result for completeness; however, their proof is also incorrect, for essentially the same reason as before.
The issue in the proof of M\'ath\'e and O'Regan \cite{MO2025} is that when $X\neq 0$, one can determine $(X,Y,Z)$ from $(X,Y+Z,XY,YZ)$, but this is not true for $X=0$; Gavalakis, Goh, and Kontoyiannis \cite{GGK} correct this issue by conditioning on $X\neq 0$, but then forget to remove this conditioning at the end.
As discussed in \cref{rmk:erratum}, the originally stated bound is simply false.
We prove a corrected upper bound in \cref{prop:upper}, and then in \cref{rmk:erratum} also discuss why using this bound salvages the main results of Gavalakis, Goh, and Kontoyiannis \cite[Theorems 1.3, 1.4, and 1.6]{GGK}.
Note that, with a coincidence in numbering, \cite[Theorems 1.3 and 1.4]{GGK} appear in this paper as \cref{thm:GGK_1_3,thm:GGK_1_4}, respectively.
As in \cref{sec:incidence}, for the sake of rigor our proofs are provided completely.

The analog of this upper bound for continuous real-valued random variables, using differential entropy, was proven by the author, Gavalakis, and Kontoyiannis \cite[Proposition 7.1]{LGK}; as functional submodularity does not hold for differential entropy, its proof is different and immune to this issue (the variables are also continuous, so there are no atoms at 0).
However, the correct expression in this continuous case depends on $\E[\log\abs{X}]$, and so heavily depends on the behavior of $X$ near 0. 

\subsection{The product-sum upper bound}\label{subsec:upper-bound}

\begin{proposition}\label{prop:upper}
Let $X,Y,Z$ be jointly distributed discrete random variables taking values in a field, with
$H(X,Y,Z)<\infty$.
Then
\begin{equation}\label{eq:upper}
H\paren{X(Y+Z)} \le H(XY,XZ)+H(X,Y+Z)-H(X,Y,Z)+\P(X=0)\,H\paren{Y,Z\mid Y+Z,\ X=0}.
\end{equation}
Moreover, the slackness in \eqref{eq:upper} equals exactly
\[ I(R;S)-H(W) \ge 0, \]
where
\[ R=(XY,XZ), \qquad S=(X,Y+Z), \qquad\text{and}\qquad W=X(Y+Z). \]
\end{proposition}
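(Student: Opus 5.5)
The plan is to compute the slackness exactly rather than bound it, using the chain rule (E2) and data processing (E3) from \cref{lem:entropy-facts}. Throughout, write $R=(XY,XZ)$, $S=(X,Y+Z)$ and $W=X(Y+Z)$. All of these are functions of $(X,Y,Z)$, so every entropy involved is finite, since it is at most $H(X,Y,Z)<\infty$ by (E3). By definition $I(R;S)=H(R)+H(S)-H(R,S)$. So if we write $\mathrm{RHS}$ for the right-hand side of \eqref{eq:upper}, both claims reduce to two facts:
\[
H(R,S)=H(X,Y,Z)-\P(X=0)\,H\paren{Y,Z\mid Y+Z,\ X=0}
\qquad\text{and}\qquad
I(R;S)\ge H(W).
\]
Given the first identity, $\mathrm{RHS}-H(W)=I(R;S)-H(W)$ exactly, and the second fact shows this is nonnegative.

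For the identity, note that $(R,S)$ is a function of $(X,Y,Z)$. The chain rule then gives
\[
H(X,Y,Z)=H(X,Y,Z,R,S)=H(R,S)+H(X,Y,Z\mid R,S).
\]
We compute the last term by splitting over the values $(r,s)$ of $(R,S)$. The value of $X$ is the first coordinate of $S$, so every value $(r,s)$ lies either in $\{X\neq0\}$ or in $\{X=0\}$.

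On a value with $X=x\neq0$, we recover $Y=XY/x$ and $Z=XZ/x$ using division in the field. Hence $(X,Y,Z)$ is determined and the conditional entropy is $0$.

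On a value with $X=0$, we have $R=(0,0)$ and $S=(0,Y+Z)$. Conditioning on $(R,S)=(r,s)$ is therefore the same as conditioning on the event $\{X=0,\ Y+Z=t\}$, where $t$ is the second coordinate of $s$. The conditional entropy of $(X,Y,Z)$ there equals that of $(Y,Z)$ on this event. Summing these with weights $\P(X=0,Y+Z=t)=\P(X=0)\P(Y+Z=t\mid X=0)$ gives exactly $\P(X=0)\,H(Y,Z\mid Y+Z,X=0)$.

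This establishes the identity. This bookkeeping on the event $\{X=0\}$ is the only delicate point, and it is exactly where the earlier arguments went wrong. There, one used functional submodularity as though $(X,Y,Z)$ were always recoverable from $(R,S)$.

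For nonnegativity, observe that $W$ is a function of $R$, namely $XY+XZ$, and also a function of $S$, namely the product of its two coordinates. By data processing (E3), applied with these two functions,
\[
I(R;S)\ge I(W;W)=H(W).
\]
This is the same mechanism as functional submodularity (\cref{lem:submod}), now applied with the exact value of $H(R,S)$ in place of $H(X,Y,Z)$. Combining this with the identity yields \eqref{eq:upper}, together with the stated exact expression for its slackness.
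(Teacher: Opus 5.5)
Your proposal is correct and follows essentially the same route as the paper. Both arguments use the chain rule to write $H(R)+H(S)=H(X,Y,Z)-H(X,Y,Z\mid R,S)+I(R;S)$, evaluate $H(X,Y,Z\mid R,S)=\P(X=0)\,H(Y,Z\mid Y+Z,\ X=0)$ exactly by splitting on the $S$-measurable event $\{X=0\}$, and obtain $I(R;S)\ge I(W;W)=H(W)$ from data processing.
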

\begin{proof}
Let $T=(X,Y,Z)$.
Both $R,S$ are functions of $T$, so 
\begin{equation}\label{eq:upper-chain}
H(R)+H(S)=H(R,S)+I(R;S)=H(T)-H(T\mid R,S)+I(R;S).
\end{equation}
Next $W=f(R)$ with $f(r_1,r_2)=r_1+r_2$ and $W=g(S)$ with
$g(s_1,s_2)=s_1s_2$, so by the data processing property of mutual information, i.e., (E3) of \cref{lem:entropy-facts}, applied twice,
\begin{equation}\label{eq:upper-dp}
I(R;S) \ge I(f(R);g(S))=I(W;W)=H(W).
\end{equation}
Finally we compute $H(T\mid R,S)$ exactly.
The event $\{X=0\}$ is $S$-measurable.
On $\{X\ne0\}$, the pair $(R,S)$ determines $T$, so there is no conditional entropy in $T$, while on $\{X=0\}$ one has $R=(0,0)$ and $S=(0,Y+Z)$, so conditioning on $(R,S)$ is conditioning on $Y+Z$, while $X=0$ is determined.
Therefore
\[ H(T\mid R,S)=\P(X=0)\,H\paren{T\mid R,S,\ X=0}=\P(X=0)\,H\paren{Y,Z\mid Y+Z,\ X=0}. \]
Substituting this and \eqref{eq:upper-dp} into \eqref{eq:upper-chain} gives \eqref{eq:upper},
with slackness exactly $I(R;S)-H(W)\ge0$.
\end{proof}
\begin{corollary}\label{cor:upper-zf}
Let $X,Y,Z$ be i.i.d.\ finitely supported random variables with values in a field, let $H=H(X)$, $p_0=\P(X=0)$, and let $X'$ be an independent copy of $X$. Then
\begin{equation}\label{eq:upper-iid}
\begin{split}
H\paren{X(Y+Z)} &\le 2H(XX')+H(X+X')-2H+p_0\paren{2H-H(X+X')}\\
&= 2H(XX')-(1-p_0)\paren{2H-H(X+X')}.
\end{split}
\end{equation}
In particular, for a zero-free law $\widetilde X$, with i.i.d.\ copies $\widetilde Y,\widetilde Z,\widetilde X'$ and entropy $\widetilde H=H(\widetilde X)$,
\begin{equation}\label{eq:upper-zf}
H\paren{\widetilde X(\widetilde Y+\widetilde Z)} \le
2H(\widetilde X\widetilde X')+H(\widetilde X+\widetilde X')-2\widetilde H .
\end{equation}
\end{corollary}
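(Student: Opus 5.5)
The plan is to apply \cref{prop:upper} with $X,Y,Z$ i.i.d.\ and bound each of the four terms on its right-hand side by the sum and product entropies.

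\textbf{The pair term $H(XY,XZ)$.} By subadditivity (E3), $H(XY,XZ)\le H(XY)+H(XZ)$. Since $(X,Y)$ and $(X,Z)$ are both pairs of independent copies of $X$, each of $XY$ and $XZ$ has the law of $XX'$. Hence
\[ H(XY,XZ)\le 2H(XX'). \]

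\textbf{The terms $H(X,Y+Z)$ and $H(X,Y,Z)$.} By independence, $H(X,Y+Z)=H+H(X+X')$ and $H(X,Y,Z)=3H$. These two terms together contribute $H(X+X')-2H$.

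\textbf{The correction term.} On $\{X=0\}$ the pair $(Y,Z)$ is independent of $X$, so
\[ H\paren{Y,Z\mid Y+Z,\ X=0}=H(Y,Z\mid Y+Z). \]
Since $Y+Z$ is a function of $(Y,Z)$, the chain rule (E2) gives
\[ H(Y,Z\mid Y+Z)=H(Y,Z)-H(Y+Z)=2H-H(X+X'). \]
Multiplying by $p_0=\P(X=0)$ gives the correction term $p_0\paren{2H-H(X+X')}$.

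\textbf{Assembling.} Summing the three pieces yields the first line of \eqref{eq:upper-iid}. The second line is the algebraic identity
\[ H(X+X')-2H+p_0(2H-H(X+X'))=-(1-p_0)(2H-H(X+X')). \]
For a zero-free law we have $p_0=0$, which gives \eqref{eq:upper-zf} immediately.

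All entropies involved are finite because the supports are finite. No step is a real obstacle. The only point needing care is justifying that conditioning on $X=0$ leaves the joint law of $(Y,Z,Y+Z)$ unchanged, and this follows from the independence of $X$ from $(Y,Z)$.
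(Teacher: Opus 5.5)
Your proposal is correct and matches the paper's approach. The paper gives no separate proof, treating the corollary as an immediate specialization of \cref{prop:upper} to i.i.d.\ variables, and your term-by-term bounds (subadditivity for $H(XY,XZ)\le 2H(XX')$, independence for $H(X,Y+Z)$ and $H(X,Y,Z)$, and $H(Y,Z\mid Y+Z)=2H-H(X+X')$ for the correction term) carry out exactly that specialization.
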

\begin{remark}[Erratum]\label{rmk:erratum}
The statement corrected by \cref{prop:upper} entered the literature as \cite[Lemma 3.6]{GGK},
which asserts, for arbitrary field-valued $X,Y,Z$ and then in the i.i.d.\ case,
\begin{align}
    H\paren{X(Y+Z)} &\le H(XY,XZ)+H(X,Y+Z)-H(X,Y,Z)+1, \label{eq:ggk15}\\
    H\paren{X(Y+Z)} &\le 2H(XY)+H(X+Y)-2H(X)+1. \label{eq:ggk16}
\end{align}
Both are false without a correction term: \eqref{eq:ggk15} can fail by an unbounded margin for every $\P(X=0)>0$, and \eqref{eq:ggk16} is false whenever $\P(X=0)>\frac12$; see
\cref{ex:15,ex:16}.
The difficulty lies in moving from the last display in the proof of \cite[Lemma 3.6]{GGK} to \eqref{eq:ggk15}, where they stop conditioning on the indicator $\1{X=0}$.
\Cref{prop:upper} repairs the argument by retaining
$\P(X=0)H(Y,Z\mid Y+Z,X=0)$ exactly; when $p_0=0$ it is strictly stronger than
\eqref{eq:ggk15}--\eqref{eq:ggk16}, the additive $+1$ being unnecessary.
 
At the level of their main theorems, their Theorems 1.3 and 1.6 survive exactly as stated.
Theorem 1.3 combines their Lemma 3.5 with \eqref{eq:ggk16}; with \eqref{eq:upper-iid} in
its place the same argument yields
\begin{equation}\label{eq:ggk-corrected}
    \max\set{H(X+X'),H(XX')}\ge\frac16(4H+3H_\infty(X)-4p_0H)-O(1),
\end{equation}
where $H=H(X)$, and (i) if $H_\infty(X)\le\frac23H$ their conclusion is trivial, since
\[ \frac16(4H+3H_\infty(X))\le H\le H(X+X'), \]
while (ii) if $H_\infty(X)>\frac23H$ then
$p_0\le p_{\max}=\max_x \P(X=x)<2^{-2H/3}$, so
\[ p_0H\le H2^{-2H/3}\le\frac3{2e\ln2}<0.797 \]
can be absorbed by the $O(1)$.
Theorem 1.6 uses Theorem 1.3 only through its statement, so is also unaffected.
Theorem 1.4 is proved instead from \cite[Lemmas 4.2 and 4.3]{GGK}, and the latter has the same un-conditioning mistake with exceptional event $E^c=\{X+Y=0\text{ or }Z+W=0\}$, whose probability is not killed by the assumption $\P(X=0)=0$.
The analog of \cref{prop:upper} for it would carry the correction term $\P(E^{c})H(X,Y,Z,W\mid X+Y,Z+W,E^{c})$, and then for the i.i.d.\ case, which is what they need for Theorem 1.4, we can split into the cases of $E^c$ where $X+Y=0$ or $Z+W=0$, and in each case the analog of the $p_0H$ term from \eqref{eq:ggk-corrected} is
\[ \P(X+Y=0)H \leq p_{\max} H = O(1) \]
for $H_\infty(X)>\frac12H(X)$ by an identical argument, the only regime in which Theorem 1.4 has content.
Here, we use the fact that
\[ \P(X+Y=0) =\sum_x \P(X=x)\P(Y=-x) \leq p_{\max} \sum_x \P(X=x) = p_{\max}. \]
\end{remark}
 
\begin{example}[A counterexample to \eqref{eq:ggk15}]\label{ex:15}
Let $p_0\in(0,1)$, let $X$ take the values $0$ and $1$ with probabilities $p_0$ and $1-p_0$, respectively, and let $Y,Z$ be independent of $X$ and of each other and uniform on $\{1,2,\dots,n\}$.
Then
\[
H\paren{X(Y+Z)}-\bracket{H(XY,XZ)+H(X,Y+Z)-H(X,Y,Z)+1}
=p_0\paren{2\log n-H(Y+Z)}-1 ,
\]
which tends to $+\infty$ as $n\to\infty$.
\end{example}
\begin{proof}
Write $q=1-p_0$.
Since $Y+Z\ge2>0$, the indicator $\1{X=1}=X$ is a function of $X(Y+Z)$, so 
\[ H(X(Y+Z)) = H(X) + H(X(Y+Z)|X) = h(p_0) + qH(Y+Z). \]
Similarly, $X$ is a function of $(XY,XZ)$, so
\[ H(XY,XZ)= H(X) + H(XY,XZ|X) = h(p_0)+2q\log n. \]
By independence,
\[ H(X,Y+Z)=h(p_0)+H(Y+Z) \qquad\text{and}\qquad H(X,Y,Z)=h(p_0)+2\log n. \]
Hence,
\[ H\paren{X(Y+Z)}-\bracket{H(XY,XZ)+H(X,Y+Z)-H(X,Y,Z)+1} = p_0(2\log n-H(Y+Z))-1, \]
as desired.
Since $H(Y+Z)\le\log(2n-1)=\log n+O(1)$, the right-hand side is $p_0\log n+O(1)\to\infty$.
\end{proof}
The i.i.d.\ form \eqref{eq:ggk16} is more robust, the family above using variables that are not identically distributed, but it too fails for every $p_0>\frac12$.
\begin{example}[A counterexample to \eqref{eq:ggk16}]\label{ex:16}
Let $n$ be a positive integer, set $N=2n^2$ and $A=\{N+1,N+2,\dots,N+n\}$. Let $p_0\in(0,1)$, with $X$ equal to $0$ with probability $p_0$ and otherwise uniform on $A$, and let $Y,Z,X'$ be independent copies of $X$. Then
\begin{equation}\label{eq:counterex16}
    H\paren{X(Y+Z)}-\bracket{2H(XX')+H(X+X')-2H(X)+1}=(1-p_0)^2(2p_0-1)\log n+O(1),
\end{equation}
which tends to $+\infty$ as $n\to\infty$ for any $p_0>\frac12$.
\end{example}
\begin{proof}
    The calculations here are routine but tedious, so we will be more brief than usual.
    Note that $N$ was chosen sufficiently large so that for $i,j,i',j'\in\set{1,2,\dots,n}$,
    \[ (N+i)(N+j) = (N+i')(N+j') \iff (i,j)=(i',j') \text{ or }(i,j)=(j',i'): \]
    this is because $1 \leq ij \leq n^2 < N$, so the values of $ij$ and $i+j$ can be read off from the value of $(N+i)(N+j)=N^2+N(i+j)+ij$, which determines $i$ and $j$ up to ordering.
    Then conditional on $\set{X\neq0,X'\neq0}$, the product $XX'$ has the same distribution, up to a bijective relabeling, as an unordered pair from $\set{1,2,\dots,n}^2$, so it is straightforward to calculate
    \[ H(XX') \geq H(XX'|X>0,X'>0) = 2(1-p_0)^2\log n + O(1), \]
    and similarly
    \[ H(XX') \leq H(XX',\1{X>0,X'>0}) = H(XX'|\1{X>0,X'>0}) + H(\1{X>0,X'>0}) = 2(1-p)^2\log n + O(1). \]
    
    A similar argument, conditioning on the $O(1)$ bits of information of whether $X$ or $X'$ are 0, yields
    \[ H(X+X')=p_0^2\cdot0+2p_0(1-p_0)\log n + (1-p_0)^2\log n + O(1) = (1-p_0^2)\log n + O(1); \]
    this uses the computation that $H(X+X'\mid X>0,X'>0)$, which is a triangular distribution on $2n-1$ values, has entropy $\log n + O(1)$, which can be verified as
    \[ \log n = H(X\mid X>0) \leq H(X+X'\mid X>0, X'>0) \leq \log(2n-1) \leq \log n + 1, \]
    by independence of $X$ and $X'$, so the first inequality follows by conditioning on $X'$.
    
    We have $\P(X(Y+Z)\neq0)=(1-p_0)(1-p_0^2)$.
    Conditional on $\set{X(Y+Z)\neq 0}$, similar to our computation of $H(XX')$, note that for $X=N+i$, $Y=N+j$, and $Z=N+k$, we have
    \[ X(Y+Z) = 2N^2 + N(2i+j+k) + i(j+k), \]
    with $2 \leq i(j+k) \leq 2n^2 \leq N$, so $X(Y+Z)$ determines $2i+j+k$ and $i(j+k)$, which up to ordering determines the pair $(i,j+k)$ (in some cases $j+k>n$, in which case the ordering is determined as $i\leq n$), so similar to the $H(XX')$ argument, this ordering impacts the entropy by at most a bit, and so
    \[ H(X(Y+Z)\mid X(Y+Z)\neq 0) = H(i,j+k\mid X(Y+Z)\neq 0) + O(1) = 2 \log n + O(1), \]
    where the last equality follows as $i$ and $j+k$ are conditionally independent, with $j+k$ essentially being a triangular distribution so a similar argument as for $H(X+X')$ yields it has entropy $\log n + O(1)$.
    Thus,
    \[ H(X(Y+Z)) = 2(1-p_0)(1-p_0^2)\log n + O(1). \]

    Lastly, $H(X) = (1-p_0)\log n + O(1)$, so combining our computations yields \eqref{eq:counterex16}.
\end{proof}
\subsection{Assembling together}\label{subsec:baseline}
In this subsection we prove \cref{prop:baseline}, the finite support version of \cref{thm:baseline}.

Recall the notation
\[ p_0=\P(X=0), \qquad p_{\max}=\max_x\P(X=x), \qquad\text{and}\qquad \pi=\P(X+X'=0). \]
Also recall from \cref{rmk:erratum} that $\pi\leq p_{\max}$.
We first record the following lemma, which we will need for \cref{prop:baseline}.
It is a slight improvement on \cite[Proposition 5.1]{GGK}, which carries the additive loss $\frac32$ in place of $h(p_{\max})\leq1$, and has a simpler, self-contained proof, where the proof of \cite[Proposition 5.1]{GGK} also needs a computation from \cite{GMT2025}.
\begin{lemma}\label{lemma:maxatom}
Let $X$ take values in an abelian group with $H(X)<\infty$ and let $X'$ be an independent copy of $X$. Then
\begin{equation}\label{eq:maxatom-used}
p_{\max}H(X) \le H(X+X')-H(X)+h(p_{\max}) \le H(X+X')-H(X)+1 .
\end{equation}
\end{lemma}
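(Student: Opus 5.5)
Let $x_0$ be an atom of maximal mass, so $p:=p_{\max}=\P(X=x_0)$. If $p=1$, then $H(X)=0$ and \eqref{eq:maxatom-used} is trivial, so assume $p<1$. Let $Y$ have the law of $X$ conditioned on $\{X\ne x_0\}$. The plan is to condition on whether $X'$ hits the top atom and then use monotonicity (E5) twice; no incidence or uniformization machinery is needed.

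\emph{Step 1 (decomposition of $H(X)$).} The indicator $\1{X=x_0}$ is a function of $X$, so the chain rule (E2) gives
\[ H(X)=h(p)+(1-p)H(Y). \]
Equivalently,
\[ (1-p)H(Y)=H(X)-h(p). \]

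\emph{Step 2 (conditioning on $X'$).} Let $E=\1{X'=x_0}$. By (E1), $H(X+X')\ge H(X+X'\mid E)$. On $\{E=1\}$ we have $X+X'=X+x_0$, a translate of $X$, whose entropy is $H(X)$. On $\{E=0\}$, $X+X'$ is distributed as $X+Y$ with $X,Y$ independent. Hence
\[ H(X+X')\ge pH(X)+(1-p)H(X+Y). \]

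\emph{Step 3 (bounding $H(X+Y)$).} The key point is not to bound $H(X+Y)$ by $H(X)$, which would lose the gain. Instead I compare it with $H(Y)$, conditioning once more on $\1{X=x_0}$:
\[ H(X+Y)\ge pH(x_0+Y)+(1-p)H(Y+Y')\ge pH(Y)+(1-p)H(Y)=H(Y). \]
Here $Y'$ is an independent copy of $Y$, and $H(Y+Y')\ge H(Y)$ by (E5). Substituting into Step 2 and using Step 1,
\[ H(X+X')\ge pH(X)+(1-p)H(Y)=pH(X)+H(X)-h(p). \]
Rearranging gives $p_{\max}H(X)\le H(X+X')-H(X)+h(p_{\max})$. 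The second inequality in \eqref{eq:maxatom-used} then follows from $h\le1$.

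The main obstacle is getting the additive loss down to exactly $h(p)$ rather than something like $(1+p)h(p)$. That weaker loss is what results from conditioning symmetrically on both indicators $\1{X=x_0}$ and $\1{X'=x_0}$ and bounding everything in terms of $H(Y)$. The asymmetric route above avoids it: the branch $\{X'=x_0\}$ contributes the full $pH(X)$, and only the complementary branch is expressed through $H(Y)=(H(X)-h(p))/(1-p)$. Finite entropy of $X$ ensures every quantity is finite, so all these manipulations are legitimate.
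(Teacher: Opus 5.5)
Your proposal is correct and is essentially the paper's proof: condition on whether $X'$ equals the maximal atom, use translation invariance on that branch, bound the complementary branch below by $H(Y)$, and combine with $H(X)=h(p_{\max})+(1-p_{\max})H(Y)$. The only differences are cosmetic. In Step~3 the paper gets $H(X+Y)\ge H(X+Y\mid X)=H(Y)$ in one line, so your extra conditioning on whether $X=x_0$ is unnecessary, and your separate handling of $p_{\max}=1$ is a bit of care the paper leaves implicit.
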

\begin{proof}
The second inequality is trivial as $h(\cdot)\leq1$.
Let $a$ satisfy $p_{\max}=\P(X=a)$.
Then
\[ H(X+X') \ge H\paren{X+X'\mid\1{X'=a}}=p_{\max}\,H(X+a)+(1-p_{\max})\,H(X+X^*), \]
where $X^*\sim\mathrm{Law}(X'\mid X'\ne a)$ is independent of $X$.
As $H(X+a)=H(X)$,
\[ H(X+X^*)\ge H(X+X^*\mid X)=H(X^*), \]
and
\[ H(X) = H(X,\1{X=a}) = h(p_{\max}) + H(X|\1{X=a}) = h(p_{\max}) + (1-p_{\max})H(X^*), \]
we have
\[ H(X+X')\ge p_{\max}H(X)+(1-p_{\max})H(X^*) \geq p_{\max}H(X) + H(X)-h(p_{\max}), \]
which rearranges to the result.
\end{proof}
Finally, we combine \cref{cor:layered} and \cref{cor:upper-zf}, to prove the main result of this section.
Recall from \cref{thm:layered-zf} the per-variable uniformization loss, which we denote by 
\[ \Delta(\eta)=\log(1+\eta)+\log e+2 \]
and write
\begin{equation}\label{eq:E1}
    E_1(\eta) = \frac32\Delta(\eta)+K_R
\end{equation}
for the error term of \cref{cor:layered}, whose conclusion then reads
\[ H(\widetilde X(\widetilde Y+\widetilde Z))\ge\frac32\widetilde H-\frac12\widetilde\pi\widetilde H-E_1(\widetilde H). \]
\begin{proposition}\label{prop:baseline}
Let $\widetilde X$ be zero-free and finitely supported with $\widetilde H=H(\widetilde X)>0$. Then, using the notation of $\widetilde u$ and $\widetilde m$ from \cref{thm:eer},
\begin{equation}\label{eq:baseline}
\widetilde m \ge F_1(\widetilde u)-\vartheta(\widetilde H), \qquad F_1(\widetilde u) = \frac54-\frac34\widetilde u,
\end{equation}
where
\begin{equation}\label{eq:theta1}
\vartheta(\eta) = \frac{E_1(\eta)+\frac12}{2\eta}
 = \frac{\frac34\log(1+\eta)+\frac12K_R+\frac34(\log e+2)+\frac14}{\eta}.
\end{equation}
$\vartheta:(0,\infty)\to(0,\infty)$ is a nonincreasing function.
\end{proposition}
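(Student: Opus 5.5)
The plan is to sandwich $H(\widetilde X(\widetilde Y+\widetilde Z))$ between the incidence-side lower bound of \cref{cor:layered} and the functional-submodularity upper bound \eqref{eq:upper-zf} of \cref{cor:upper-zf}, where $\widetilde Y,\widetilde Z$ are i.i.d.\ copies of $\widetilde X$. The one leftover term, $\widetilde\pi$, is then removed using \cref{lemma:maxatom}.

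\emph{Step 1: combine the two bounds.} \Cref{cor:layered}, written with $E_1$ from \eqref{eq:E1}, gives
\[ H(\widetilde X(\widetilde Y+\widetilde Z))\ge\frac32\widetilde H-\frac12\widetilde\pi\widetilde H-E_1(\widetilde H). \]
\Cref{cor:upper-zf} gives
\[ H(\widetilde X(\widetilde Y+\widetilde Z))\le 2\widetilde m\widetilde H+(1+\widetilde u)\widetilde H-2\widetilde H. \]
Chaining these and rearranging yields
\[ 2\widetilde m\widetilde H\ge\Big(\frac52-\widetilde u\Big)\widetilde H-\frac12\widetilde\pi\widetilde H-E_1(\widetilde H). \]

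\emph{Step 2: control $\widetilde\pi$.} This is the only genuinely new input. By the estimate recorded in \cref{rmk:erratum}, $\widetilde\pi=\P(\widetilde X+\widetilde X'=0)\le p_{\max}(\widetilde X)$. Applying \cref{lemma:maxatom} to $\widetilde X$ gives
\[ p_{\max}\widetilde H\le H(\widetilde X+\widetilde X')-\widetilde H+1=\widetilde u\widetilde H+1. \]
Hence $\frac12\widetilde\pi\widetilde H\le\frac12\widetilde u\widetilde H+\frac12$. Substituting into Step 1 gives
\[ 2\widetilde m\widetilde H\ge\Big(\frac52-\frac32\widetilde u\Big)\widetilde H-E_1(\widetilde H)-\frac12. \]
Dividing by $2\widetilde H$ gives exactly \eqref{eq:baseline} with $\vartheta(\eta)=(E_1(\eta)+\frac12)/(2\eta)$. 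Expanding $E_1=\frac32\Delta+K_R$ with $\Delta(\eta)=\log(1+\eta)+\log e+2$ recovers the displayed formula \eqref{eq:theta1}.

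\emph{Step 3: monotonicity.} Write $\vartheta(\eta)=\big(\frac34\log(1+\eta)+c\big)/\eta$ with $c=\frac12K_R+\frac34(\log e+2)+\frac14>0$. The numerator of $\vartheta'$ is
\[ \frac34\log e\cdot\frac{\eta}{1+\eta}-\frac34\log(1+\eta)-c. \]
Since $\frac{\eta}{1+\eta}<1$ and $c>\frac34\log e$, this is negative for all $\eta>0$, so $\vartheta$ is decreasing; positivity of $\vartheta$ is clear. I expect no real obstacle here. The only point needing care is Step 2, namely recognizing that the $\widetilde\pi$ loss from \cref{cor:layered} is paid for by the additive doubling via \cref{lemma:maxatom}. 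This is what changes the coefficient of $\widetilde u$ from $\frac12$ to $\frac34$.
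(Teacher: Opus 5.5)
Your proposal is correct and follows the paper's proof essentially step for step. You sandwich $H(\widetilde X(\widetilde Y+\widetilde Z))$ between \cref{cor:layered} and \eqref{eq:upper-zf}, then absorb $\widetilde\pi\le\widetilde p_{\max}\le\widetilde u+1/\widetilde H$ via \cref{lemma:maxatom}; your explicit derivative check for the monotonicity of $\vartheta$ also fills in the step the paper leaves as straightforward.
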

 
\begin{proof}
By \cref{cor:layered} and the zero-free upper bound \eqref{eq:upper-zf} of
\cref{cor:upper-zf},
\[
\frac32\widetilde H-\frac12\widetilde\pi\widetilde H-E_1(\widetilde H)
 \le H\paren{\widetilde X(\widetilde Y+\widetilde Z)}
 \le 2\widetilde m\widetilde H+(1+\widetilde u)\widetilde H-2\widetilde H.
\]
Rearranging,
\begin{equation}\label{eq:baseline-mid}
2\widetilde m\widetilde H \ge
\paren{\frac52-\widetilde u-\frac12\widetilde\pi}\widetilde H-E_1(\widetilde H).
\end{equation}
Define
\[ \widetilde\pi = \P(\widetilde X + \widetilde X'=0) \qquad\text{and}\qquad \widetilde p_{\max} = \max_x \P(\widetilde X=x), \]
where $\widetilde X'$ is an independent copy of $\widetilde X$.
From \cref{rmk:erratum} we have $\widetilde \pi \leq \widetilde p_{\max}$.
Then \cref{lemma:maxatom} gives
\[ \widetilde\pi\le\widetilde p_{\max}\le\widetilde u+1/\widetilde H; \]
hence \eqref{eq:baseline-mid} becomes
\[
2\widetilde m\widetilde H \ge \paren{\frac52-\frac32\widetilde u}\widetilde H
-E_1(\widetilde H)-\frac12,
\]
which dividing by $2\widetilde H$ gives the result.
$\vartheta$ is clearly positive, and it is straightforward to check that it is nonincreasing.
\end{proof}
\section{Proof of the main theorem}\label{sec:main}
We now have two lower bounds, \cref{cor:eer-eff,prop:baseline}, for the relative product entropy $\widetilde m$ of a zero-free law, both linear in $\widetilde u$ and both valid with no hypothesis beyond being zero-free and finitely supported.
We denote these linear bounds via the lines
\[
F_1(\widetilde u)=\frac54-\frac34\widetilde u \quad\text{(\cref{prop:baseline})},
\qquad
F_2(\widetilde u)=2-6\widetilde u \quad\text{(\cref{cor:eer-eff})}.
\]
We first reincorporate atoms at zero, proving the finite support version of \cref{thm:main} as \cref{thm:main-body}, and then provide a truncation argument in \cref{lem:truncation} that extends this result to the general case; we also extend \cref{cor:eer-eff,prop:baseline} to yield \cref{thm:eer,thm:baseline}, respectively.

Recall our notation from \cref{sec:prelim}: $H=H(X)$ with zero-free law $\widetilde X\sim\operatorname{Law}(X|X\neq0)$ of entropy $\widetilde H=H(\widetilde X)$.
And the four expressions
\[
  u=\frac{H(X+X')}{H}-1,\qquad m=\frac{H(XX')}{H},\qquad
  \widetilde u=\frac{H(\widetilde X+\widetilde X')}{\widetilde H}-1,\qquad
  \widetilde m=\frac{H(\widetilde X\widetilde X')}{\widetilde H},
\]
where $X'$ is an independent copy of $X$ and $\widetilde X'$ is an independent copy of $\widetilde X$.
Let $p_0=\P(X=0)$.
We first lower bound $u$ in terms of $\widetilde u$ and $p_0$.

\begin{lemma}\label{lem:transfer}
Let $X$ be a discrete, finitely supported, real-valued random variable with $0\le p_0<1$, and let $X'$ be an independent copy.
Suppose $H=H(X)>1$. Then
\[ u \geq p_0 + (1-p_0)\widetilde u - \frac{2}{H-1}. \]
\end{lemma}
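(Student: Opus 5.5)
The plan is to compare $H(X+X')$ with the entropy of the zero-free sum by conditioning on which of $X,X'$ vanish, and to compare $H$ with $\widetilde H$ through the exact chain-rule identity
\[ H=h(p_0)+(1-p_0)\widetilde H. \]
This identity follows from (E2) of \cref{lem:entropy-facts}, since $\1{X\neq0}$ is a function of $X$. It also gives well-definedness of $\widetilde u$: because $h(p_0)\le1<H$, it forces $\widetilde H\ge (H-1)/(1-p_0)>0$. This is where the hypothesis $H>1$ enters.

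\emph{Step 1: lower bound on the sum entropy.} Let $E=\1{X\ne0}$ and $E'=\1{X'\ne0}$. By (E1),
\[ H(X+X')\ge H(X+X'\mid E,E'). \]
On $\{E=E'=1\}$, which has probability $(1-p_0)^2$, the conditional law of $X+X'$ is that of $\widetilde X+\widetilde X'$, with entropy $(1+\widetilde u)\widetilde H$. On each of the two mixed events, which have total probability $2p_0(1-p_0)$, the sum equals a copy of $\widetilde X$, with entropy $\widetilde H$. On $\{E=E'=0\}$ the contribution is $0$. Hence
\[ H(X+X')\ge (1-p_0)^2(1+\widetilde u)\widetilde H+2p_0(1-p_0)\widetilde H=(1-p_0^2)\widetilde H+(1-p_0)^2\widetilde u\widetilde H. \]

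\emph{Step 2: rewrite the target in terms of $\widetilde H$.} The claim is equivalent to
\[ H(X+X')\ge (1+p_0)H+(1-p_0)\widetilde u H-\frac{2H}{H-1}. \]
Substituting $H=h(p_0)+(1-p_0)\widetilde H$ gives
\[ (1+p_0)H+(1-p_0)\widetilde uH=(1-p_0^2)\widetilde H+(1-p_0)^2\widetilde u\widetilde H+\bigl[(1+p_0)+(1-p_0)\widetilde u\bigr]h(p_0). \]
So, compared with Step 1, the only discrepancy is the bracketed term. Since $\widetilde u\le1$ by (E5$'$), that term is at most $2h(p_0)\le2$. Therefore $H(X+X')\ge(1+p_0)H+(1-p_0)\widetilde uH-2$. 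Dividing by $H$ gives $u\ge p_0+(1-p_0)\widetilde u-2/H$, and $2/H\le 2/(H-1)$ then yields the stated bound.

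The main point needing care is the bookkeeping of the $h(p_0)$ terms, which appear multiplied by coefficients up to $2$; (E5$'$) is what keeps them bounded. Everything else is a direct conditioning computation, and the proof should go through without obstacles.
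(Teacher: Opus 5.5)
Your proof is correct and follows the paper's argument: the same chain-rule identity $H=h(p_0)+(1-p_0)\widetilde H$, and the same conditioning on the two zero-indicators to lower-bound $H(X+X')$. The only difference is the final bookkeeping. You substitute for $H$ and absorb the $h(p_0)$ terms using $\widetilde u\le1$, whereas the paper divides by $H-h(p_0)$ and uses $u\le1$; your version even gives the slightly sharper error $2/H$.
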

\begin{proof}
We have
\begin{equation}\label{eq:transfer-1}
    H = H(X,\1{X=0}) = h(p_0) + H(X\mid\1{X=0}) = h(p_0) + (1-p_0)\widetilde H.
\end{equation}
Similarly,
\begin{align}
    H(XX')
    &= H(XX',\1{XX'=0}) \label{eq:transfer-2}
    \\ &= h((1-p_0)^2) + H(XX'\mid\1{XX'=0}) \notag
    \\ &= h((1-p_0)^2)+(1-p_0)^2H(\widetilde X\widetilde X') \notag
    \\ &= h((1-p_0)^2) + (1-p_0)^2\widetilde m\widetilde H. \notag
\end{align}
We bound
\begin{align*}
    H(X+X')
    &\geq H(X+X'\mid\1{X=0},\1{X'=0})
    \\ &= 2p_0(1-p_0)\widetilde H + (1-p_0)^2H(\widetilde X + \widetilde X')
    \\ &= (1-p_0)\widetilde H\bracket{2p_0 + (1-p_0)(1+\widetilde u)}
    \\ &= (1-p_0)\widetilde H\bracket{1 + p_0 + (1-p_0)\widetilde u}
    \\ &= (H-h(p_0))\bracket{1 + p_0 + (1-p_0)\widetilde u},
\end{align*}
where the last equality uses \eqref{eq:transfer-1}.
This rearranges to
\[ 1 + p_0 + (1-p_0)\widetilde u \leq \frac{(1+u)H}{H-h(p_0)} = (1+u)\paren{1+\frac{h(p_0)}{H-h(p_0)}}, \]
or
\[ p_0 + (1-p_0)\widetilde u \leq u + \frac{(1+u)h(p_0)}{H-h(p_0)} \leq u + \frac{2}{H-1}, \]
which rearranges to the result, recalling $u\leq 1$ from (E5') of \cref{lem:entropy-facts}.
\end{proof}
In the hope that future work can provide better linear entropic Elekes--Ruzsa bounds, we state in larger generality how such zero-free bounds translate to values of $\delta$ for the entropic sum-product phenomenon.
\begin{proposition}\label{prop:master}
Let $F\colon[0,1)\to\R$, let $\vartheta\colon(0,\infty)\to[0,\infty)$ be nonincreasing with
$\vartheta(\eta)\to0$ as $\eta\to\infty$, and suppose that every zero-free finitely supported
real-valued $\widetilde X$ with $\widetilde H>0$ satisfies
\begin{equation}\label{eq:master-hyp}
\widetilde m \ge F(\widetilde u)-\vartheta(\widetilde H).
\end{equation}
Let
\[
\Phi = \inf_{0\le\widetilde u<1}\frac{F(\widetilde u)}{1-\widetilde u},
\]
assume $\Phi>1$, and let $\delta\in\left(0,\frac{\Phi-1}{\Phi+1}\right]$.
Then for all i.i.d.\ finitely supported real-valued $X,X'$ of entropy
\begin{equation}\label{eq:master-thresh}
H > 1+\frac2{1-\delta},
\end{equation}
one has
\begin{equation}\label{eq:master}
\max\set{H(X+X'),H(XX')} \ge (1+\delta)H
-\paren{\frac{2\Phi}{H-1}+\vartheta(H-1)+\frac2H}H .
\end{equation}
In particular, if in addition $F\le2$ and $\vartheta(\eta)=\frac{a\log \eta+b}{\eta}$ for all $\eta\ge H-1$, with constants $a,b\ge0$, then 
\begin{equation}\label{eq:master-sharp}
\max\set{H(X+X'),H(XX')} \ge (1+\delta)H-\frac{2\Phi H}{H-1}-a\log H-b-2 .
\end{equation}
\end{proposition}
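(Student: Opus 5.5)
The plan is to reduce everything to the zero-free law $\widetilde X$, using \cref{lem:transfer} on the additive side and the exact decomposition \eqref{eq:transfer-2} on the multiplicative side. I will then collapse both sides to a single parameter $t=(1-p_0)(1-\widetilde u)\in[0,1]$, for which the sum and product bounds become two lines crossing at height $2\Phi/(\Phi+1)$.

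\emph{Preliminaries.} Since $H>1$ and $H=h(p_0)+(1-p_0)\widetilde H$ by \eqref{eq:transfer-1}, we get $p_0<1$ and $\widetilde H>0$, so the hypothesis \eqref{eq:master-hyp} applies. Moreover
\[
\widetilde H=\frac{H-h(p_0)}{1-p_0}\ge H-h(p_0)\ge H-1,
\]
so monotonicity of $\vartheta$ gives $\vartheta(\widetilde H)\le\vartheta(H-1)$.

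\emph{Additive side.} \cref{lem:transfer} gives
\[
1+u\ \ge\ 1+p_0+(1-p_0)\widetilde u-\frac{2}{H-1}\ =\ 2-t-\frac2{H-1}.
\]

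\emph{Multiplicative side.} From \eqref{eq:transfer-2} we have
\[
H(XX')\ \ge\ (1-p_0)^2\widetilde m\widetilde H\ =\ (1-p_0)\,\widetilde m\,(H-h(p_0)).
\]
If $\widetilde u<1$, the definition of $\Phi$ gives $F(\widetilde u)\ge\Phi(1-\widetilde u)$, so $\widetilde m\ge\Phi(1-\widetilde u)-\vartheta(H-1)$. If this right-hand side is negative, we simply use $H(XX')\ge0$ instead. Otherwise, multiplying through and using $h(p_0)\le1$, $1-p_0\le1$ and $H-h(p_0)\le H$ yields
\[
m\ \ge\ \Phi t\Bigl(1-\frac1H\Bigr)-\vartheta(H-1)\ \ge\ \Phi t-\frac{\Phi}{H}-\vartheta(H-1).
\]
In the edge case $\widetilde u=1$ we have $t=0$, and the additive bound alone gives $1+u\ge2-\frac2{H-1}$, which already suffices.

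\emph{Combining.} We now have
\[
\max\set{1+u,m}\ \ge\ \max\set{2-t,\ \Phi t}-\frac{2}{H-1}-\frac{\Phi}{H}-\vartheta(H-1).
\]
The minimum over $t\in[0,1]$ of $\max\set{2-t,\Phi t}$ occurs where $2-t=\Phi t$, at height
\[
\frac{2\Phi}{\Phi+1}=1+\frac{\Phi-1}{\Phi+1}\ \ge\ 1+\delta.
\]
Multiplying by $H$ and loosening constants, namely $\frac{\Phi}{H}\le\frac{2\Phi}{H-1}$ and $\frac{2}{H-1}$ absorbed against the remaining slack, gives \eqref{eq:master}. I expect the threshold \eqref{eq:master-thresh} to be needed only to keep $H-1$ positive and to make this constant bookkeeping go through; I do not expect it to play a structural role.

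\emph{The particular case.} For \eqref{eq:master-sharp}, substitute $\vartheta(H-1)=\frac{a\log(H-1)+b}{H-1}$. The term $\vartheta(H-1)H$ then equals
\[
\frac{H}{H-1}\bigl(a\log(H-1)+b\bigr),
\]
which has to be compared with $a\log H+b$, up to a slack absorbed using $F\le2$ (so $\Phi\le2$) and the threshold on $H$.

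The only delicate step is this final bookkeeping of constants: checking that the $\frac{H}{H-1}$ factor and the $\frac2{H-1}$ versus $\frac2H$ discrepancies fit inside the stated error terms. If they do not fit exactly, I would first try absorbing the excess into the $\frac{2\Phi H}{H-1}$ term. The conceptual core, the reduction to the single parameter $t$ and the crossing point $2\Phi/(\Phi+1)$, is straightforward.
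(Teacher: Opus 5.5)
Your proof of \eqref{eq:master} is sound, and it takes a different route from the paper. The paper splits on $u\ge\delta$. When $u<\delta$, it extracts $q=1-p_0\ge(1-u-\alpha)/(1-\widetilde u)$ from \cref{lem:transfer}, and it needs the threshold there to make $1-u-\alpha>0$. Your min-max in the single parameter $t=q(1-\widetilde u)$ needs only $H>1$. The points you left loose do close. If $\Phi(1-\widetilde u)<\vartheta(H-1)$, then $\Phi t<\vartheta(H-1)$, so $m\ge0$ already gives your multiplicative bound. The final comparison $\frac{2}{H-1}+\frac{\Phi}{H}\le\frac{2\Phi}{H-1}+\frac2H$ is equivalent to $\Phi(H+1)\ge2$, which holds.

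The particular case \eqref{eq:master-sharp}, however, does not follow along your route, and no bookkeeping fixes it. Once you replace $\vartheta(\widetilde H)$ by $\vartheta(H-1)$, your error term becomes $H\vartheta(H-1)=a\log(H-1)+b+\frac{a\log(H-1)+b}{H-1}$. This exceeds $a\log H+b$ by $\frac{a\log(H-1)+b}{H-1}-a\log\frac{H}{H-1}$, which grows linearly in $b$, and in $a$ once $H-1>e$. The only slack you have is $\frac{2\Phi H}{H-1}+2-\frac{2H}{H-1}-\Phi$, which is independent of $a$ and $b$. So for fixed $H$ and large $b$ the inequality simply fails to follow, and absorbing into $\frac{2\Phi H}{H-1}$ cannot help. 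Already in \cref{thm:main-body}, with $a=18$, $b\ge57$, $\Phi=\frac43$ and $H=5$, the excess is about $17$ against a slack of $\frac32$.

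The missing idea is never to pass through $\vartheta(H-1)$. Keep $\vartheta(\widetilde H)$ in $H(XX')\ge q\widetilde m\,(H-h(p_0))$, and use the exact identity $q\widetilde H=H-h(p_0)$ from \eqref{eq:transfer-1}. The subtracted term is then $q\,\vartheta(\widetilde H)(H-h(p_0))=q^2N(\widetilde H)$, where $N(\eta)=a\log\eta+b$. Since $\widetilde H\le H/q$, you get $q^2N(\widetilde H)\le q^2N(H)+aq^2\log\frac1q$. Next, $aq^2\log\frac1q\le a(1-q^2)\le(1-q^2)N(H)$, using $v\ln\frac1v\le 1-v$ with $v=q^2$ and $\log H\ge1$; the last fact is where the threshold genuinely enters. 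Hence $q^2N(\widetilde H)\le N(H)$. With this, your own parametrization yields $m\ge\Phi t-\frac{N(H)+2}{H}$, using $\Phi\le2$ from $F\le2$ for the $h(p_0)$ loss. Combining this with $1+u\ge2-t-\frac{2}{H-1}$ then gives \eqref{eq:master-sharp}.
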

\begin{proof}
If $u\ge\delta$ then $H(X+X')=(1+u)H\ge(1+\delta)H$ and we are done.
So assume $u<\delta$ and set $\alpha=\frac2{H-1}$,
so that the hypothesis \eqref{eq:master-thresh} says exactly $\alpha<1-\delta$, so
\begin{equation*}
u+\alpha<\delta+\alpha<1 .
\end{equation*}
Since $H>1$ we have $p_0<1$, so \cref{lem:transfer} applies: for brevity, let $q=1-p_0$, so that
\begin{equation*}
p_0+q\widetilde u \le u+\alpha < 1 .
\end{equation*}
In particular $\widetilde u<1$, so $F(\widetilde u)$ is defined.
This rearranges to
\begin{equation}\label{eq:master-q}
q = 1-p_0 \ge \frac{1-u-\alpha}{1-\widetilde u} > 0 .
\end{equation}
By \eqref{eq:transfer-1},
\begin{equation}\label{eq:master-L}
    \widetilde H = \frac{H-h(p_0)}{q} \geq H-1 > 0,
\end{equation}
so by \eqref{eq:transfer-2},
\begin{equation}\label{eq:master-prod}
    H(XX') \geq q^2\widetilde m\widetilde H = q\widetilde m(H-h(p_0)) \geq q\widetilde m H - 2,
\end{equation}
recalling $\widetilde m \leq 2$ as $H(\widetilde X\widetilde X') \leq H(\widetilde X,\widetilde X') = 2\widetilde H$.
These two inequalities, along with $\vartheta$ being nonincreasing, yield
\[
    m \ge q\widetilde m-\frac2H
    \ge q\paren{F(\widetilde u)-\vartheta(\widetilde H)}-\frac2H
    \ge qF(\widetilde u)-\vartheta(H-1)-\frac2H .
\]
Thus, by \eqref{eq:master-q}
\begin{equation}\label{eq:master-qF}
qF(\widetilde u) \ge \frac{1-u-\alpha}{1-\widetilde u}\,F(\widetilde u) \ge \Phi(1-u-\alpha).
\end{equation}
Therefore, using $u<\delta$ and rearranging the assumption $\delta\leq\frac{\Phi-1}{\Phi+1}$ to yield $\Phi(1-\delta)\ge1+\delta$,
\begin{equation}\label{eq:master-3}
m \ge \Phi(1-u-\alpha)-\vartheta(H-1)-\frac2H
 \ge \Phi(1-\delta)-\Phi\alpha-\vartheta(H-1)-\frac2H
 \ge 1+\delta-\frac{2\Phi}{H-1}-\vartheta(H-1)-\frac2H .
\end{equation}
Multiplying by $H$ gives \eqref{eq:master}.
 
Now assume $F\le2$ and $\vartheta(\eta)=\frac{N(\eta)}{\eta}$ for $\eta\ge H-1$, where
$N(\eta)=a\log\eta+b$ with $a,b\ge0$.
By \eqref{eq:master-prod},
\begin{align*}
    H(XX')
    &\geq q\widetilde m(H-h(p_0))
    \geq q\paren{F(\widetilde u)-\vartheta(\widetilde H)}(H-h(p_0))
    \geq qHF(\widetilde u)-2-q^{2}N(\widetilde H),
\end{align*}
where the last step used $q\widetilde H=H-h(p_0)$ from \eqref{eq:master-L} when multiplying with $\vartheta(\widetilde H)$, and $F\leq2$.
We claim that
\begin{equation}\label{eq:master-N}
q^{2}N(\widetilde H) \le N(H).
\end{equation}
Indeed $\widetilde H\le H/q$ by \eqref{eq:master-L}, so
\[ q^{2}N(\widetilde H)\le q^{2}N(H)+aq^{2}\log(1/q), \]
and with $v=q^{2}$ and
$\ln(1/v)\le\frac{1-v}{v}$,
\[
aq^{2}\log\frac1q=\frac a2\,v\log\frac1v\le\frac{a\log e}2\,(1-v)\le a(1-q^{2})
 \le (1-q^{2})N(H),
\]
because $a\le a\log H\le N(H)$, the first inequality holding since
$H>1+2/(1-\delta)\ge3$ forces $\log H>1$.
Adding the two estimates gives \eqref{eq:master-N}.
Consequently, by \eqref{eq:master-qF} and the same bounding from \eqref{eq:master-3},
\[
    m \ge qF(\widetilde u)-\frac{N(H)+2}H \ge \Phi(1-u-\alpha)-\frac{N(H)+2}H
    \ge 1+\delta-\frac{2\Phi}{H-1}-\frac{a\log H+b+2}H,
\]
and multiplying by $H$ gives \eqref{eq:master-sharp}, recalling in the case $u\geq\delta$ the inequality trivially holds.
\end{proof}
Thus equipped, we prove the finite support version of our main theorem, \cref{thm:main}.
\begin{theorem}\label{thm:main-body}
There are absolute constants $c_1,c_2,H_0$ such that for all i.i.d.\ finitely supported
real-valued $X,X'$ with $H=H(X)\ge H_0$,
\begin{equation}\label{eq:main-eff}
\max\set{H(X+X'),\,H(XX')} \ge \frac87H-c_1\log H-c_2-\frac12\log C_R .
\end{equation}
One may take
\[ c_1=18,\qquad c_2=63,\qquad\text{and}\qquad H_0=5. \]
\end{theorem}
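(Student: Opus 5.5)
The plan is to feed both zero-free linear bounds into \cref{prop:master}, using their pointwise maximum
\[
F(\widetilde u)=\max\set{F_1(\widetilde u),F_2(\widetilde u)},\qquad F_1(\widetilde u)=\tfrac54-\tfrac34\widetilde u,\quad F_2(\widetilde u)=2-6\widetilde u,
\]
together with a common error function $\vartheta$ that dominates both error terms. By \cref{cor:eer-eff} and \cref{prop:baseline}, every zero-free finitely supported $\widetilde X$ with $\widetilde H>0$ satisfies $\widetilde m\ge F_k(\widetilde u)-\vartheta_k(\widetilde H)$ for $k=1,2$. Hence \eqref{eq:master-hyp} holds with $F$ and with $\vartheta=\max\set{\vartheta_1,\vartheta_2}$. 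This $\vartheta$ is nonincreasing and tends to $0$. Also $F\le2$.

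The key computation is $\Phi=\inf_{0\le\widetilde u<1}F(\widetilde u)/(1-\widetilde u)$. The ratio $F_1(\widetilde u)/(1-\widetilde u)$ has derivative $\frac12(1-\widetilde u)^{-2}>0$, so it is increasing. The ratio $F_2(\widetilde u)/(1-\widetilde u)$ has derivative $-4(1-\widetilde u)^{-2}<0$, so it is decreasing. The two lines cross where $2-6\widetilde u=\frac54-\frac34\widetilde u$, that is, at $\widetilde u=\frac17$, where $F=\frac87$.

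Consequently the infimum of the maximum is attained at $\widetilde u=\frac17$, giving $\Phi=\frac{8/7}{6/7}=\frac43>1$. Then $\frac{\Phi-1}{\Phi+1}=\frac17$, so we may take $\delta=\frac17$. The threshold \eqref{eq:master-thresh} becomes $H>1+\frac73$, which holds once $H\ge5$.

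To apply the sharp form \eqref{eq:master-sharp}, I need $\vartheta(\eta)\le\frac{a\log\eta+b}{\eta}$ for all $\eta\ge H-1\ge4$. Monotonicity arguments are unaffected if I replace $\vartheta$ by this larger expression on $[4,\infty)$, keeping the constant extension below $4$ as in \eqref{eq:vartheta}.

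For $\eta\ge4$, $\vartheta_2(\eta)=\frac{18\log\eta+57}{\eta}$. The numerator of $\vartheta_1$ is
\[
\tfrac34\log(1+\eta)+\tfrac12K_R+\tfrac34(\log e+2)+\tfrac14=\tfrac34\log(1+\eta)+\tfrac12\log C_R+O(1),
\]
which is at most $18\log\eta+57+\frac12\log C_R$ for $\eta\ge4$. I would therefore take $a=18$ and $b=57+\frac12\log C_R$.

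With these choices, \eqref{eq:master-sharp} gives
\[
\max\set{H(X+X'),H(XX')}\ge\tfrac87H-\tfrac83\cdot\tfrac{H}{H-1}-18\log H-57-\tfrac12\log C_R-2 .
\]
For $H\ge5$ we have $\frac{H}{H-1}\le\frac54$, so the middle term is at most $\frac{10}{3}$. The constants then total $57+2+\frac{10}{3}<63$. This yields $c_1=18$, $c_2=63$ and $H_0=5$.

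The only delicate points are bookkeeping ones. First, $F$ must be defined on $[0,1)$, which it is. Second, the hypothesis $\vartheta(\eta)=\frac{a\log\eta+b}{\eta}$ on $\eta\ge H-1$ must be met exactly; I would handle this by defining $\vartheta$ to equal this majorant there, and checking that the majorant is nonincreasing on $[4,\infty)$ by the same derivative computation as in \cref{cor:eer-eff}. I expect no substantive obstacle: the real content, namely the crossing point at $\widetilde u=\frac17$ yielding $\Phi=\frac43$, is a one-line optimization.
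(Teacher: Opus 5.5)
Your proposal is correct and is essentially the paper's own proof. You take $F=\max\{F_1,F_2\}$, obtain $\Phi=\frac43$ at the crossing $\widetilde u=\frac17$, majorize both error terms by $(18\log\eta+57+\frac12\log C_R)/\eta$ on $[4,\infty)$, and apply \eqref{eq:master-sharp} using $\frac{2\Phi H}{H-1}\le\frac{10}{3}$.

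One bookkeeping point should be made explicit. Below $4$, extend $\vartheta$ by the majorant's value at $4$, namely $\frac{93}{4}+\frac18\log C_R$, and not by $\frac{93}{4}$, which would make $\vartheta$ jump upward at $4$ whenever $C_R>1$. With that choice, \eqref{eq:master-hyp} for $\widetilde H<4$ holds trivially because $\vartheta\ge\frac{93}{4}>2\ge F$ while $\widetilde m\ge 0$, exactly as the paper observes.
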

 
\begin{proof}
For an affine function $A-C\widetilde u$, for $\widetilde u\in[0,1)$ we have
\[ \frac{d}{d\widetilde u}\frac{A-C\widetilde u}{1-\widetilde u} = \frac{A-C}{(1-\widetilde u)^2}, \]
so $\frac{A-C\widetilde u}{1-\widetilde u}$ is strictly increasing if $A>C$ and strictly decreasing if $A<C$.
Thus, for our functions
\[
F_1(\widetilde u)=\frac54-\frac34\widetilde u,
\qquad\text{and}\qquad
F_2(\widetilde u)=2-6\widetilde u,
\]
we have $\frac{F_1(\widetilde u)}{1-\widetilde u}$ is strictly increasing and $\frac{F_2(\widetilde u)}{1-\widetilde u}$ is strictly decreasing, and they intersect when $F_1(\widetilde u)=F_2(\widetilde u)$ at $\widetilde u=\frac{1}{7}$.
Thus, letting $F=\max\set{F_1,F_2}$, where $F\leq 2$, we have
\[ \Phi = \inf_{0\leq \widetilde u < 1 }\frac{F(\widetilde u)}{1-\widetilde u}=\frac{F(1/7)}{1-1/7}=\frac43, \]
and so $\frac{\Phi-1}{\Phi+1}=\frac17$.

Recall the error functions associated to $F_1$ and $F_2$ from \cref{prop:baseline} and \cref{cor:eer-eff}, respectively, are
\[
    \vartheta_1(\eta)
    =\frac{\frac34\log(1+\eta)+\frac12K_R+\frac34(\log e+2)+\frac14}{\eta}
    \qquad\text{and}\qquad
    \vartheta_2(\eta)
    =\begin{cases}\frac{18\log\eta+57}{\eta}&\text{if }\eta\ge4, \\ 
    \vartheta_2(4)=\frac{93}4 & \text{if }0<\eta<4, \end{cases}
\]
respectively.
Define
\[ N(\eta) = 18\log\eta+57+\frac12\log C_R \qquad\text{and}\qquad \vartheta(\eta) = \begin{cases} \frac{N(\eta)}{\eta} & \text{if }\eta\geq4, \\ \frac{N(4)}{4} & \text{if }0<\eta<4.\end{cases} \]
Then $\vartheta\geq\max\set{\vartheta_1,\vartheta_2}$ on $[4,\infty)$, and is nonnegative, tends to 0, and is nonincreasing.
It thus satisfies the conditions of \cref{prop:master}, except for potentially \eqref{eq:master-hyp} for $\widetilde H < 4$, but in this case the condition is trivial as $\widetilde m \geq 0 \geq F(\widetilde u)-\vartheta(\widetilde H)$ because $F\leq2$ and $\vartheta(\widetilde H)\geq\frac{93}{4}$.

As $\frac{\Phi-1}{\Phi+1}=\frac17$, \cref{prop:master} holds with $\delta=\frac17$.
The threshold \eqref{eq:master-thresh} reads $H>\frac{10}{3}$ and is implied by $H\geq H_0=5$.
Since $H\geq5$ gives $H-1\geq4$, we have $\vartheta(\eta)=\frac{18\log\eta + 57+\frac12\log C_R}{\eta}$ for all $\eta\geq H-1$, so \eqref{eq:master-sharp} gives
\[
\max\set{H(X+X'),H(XX')} \ge \frac87H-\frac{2\Phi H}{H-1}-18\log H-59-\frac12\log C_R.
\]
For $H\ge5$ we have $H/(H-1)\le\frac54$, so $2\Phi H/(H-1)\le\frac83\cdot\frac54=\frac{10}3\le4$, which yields \eqref{eq:main-eff}.
\end{proof}
\begin{remark}\label{rmk:1_9}
    Note that with just $F_1$, one would have $\Phi=F_1(0)=\frac54$ and thus $\delta=\frac{\Phi-1}{\Phi+1}=\frac19$.
    However, as $F_2(\widetilde u)<0$ for $\widetilde u>\frac13$, with just $F_2$ one would have $\Phi=-\infty$, and would not be able to prove any entropic sum-product phenomenon allowing for atoms at 0.
\end{remark}
\subsection{Removing the finite-support hypothesis}
The finite support hypothesis may be discarded at the outermost level, without revisiting the proofs, via the following simple bound on the entropies of the variable, its i.i.d.\ sum, and its i.i.d.\ product upon truncating a random variable to a finite support.
\begin{lemma}\label{lem:truncation}
Let $X$ be a discrete real-valued random variable with mass function $p$ and
entropy $H=H(X)<\infty$, and let $\theta\in\left(0,\frac12\right]$.
Then there is a finite $F\subseteq\operatorname{supp}X$ with
\[
\P(X\notin F) \le \theta,\qquad
T_F:=\sum_{x\notin F}p(x)\log\frac1{p(x)} \le \theta ,
\]
and for every such $F$ the finitely supported variable
$X_F=\mathrm{Law}(X\mid X\in F)$, with independent copy $X_F'$, satisfies
\begin{enumerate}[label=(\roman*)]
    \item $\displaystyle H-3\theta \le H(X_F) \le \frac H{1-\theta}$;
    \item $\displaystyle H(X_F+X_F')\le\frac{H(X+X')}{(1-\theta)^{2}}$ and
    $\displaystyle H(X_FX_F')\le\frac{H(XX')}{(1-\theta)^{2}}$.
\end{enumerate}
\end{lemma}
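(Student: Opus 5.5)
The plan is to do everything explicitly in terms of $q:=\P(X\in F)$ and the tail quantity $T_F$. The only tools needed are the exact formula for the entropy of a conditioned law and the fact that conditioning reduces entropy, (E1) of \cref{lem:entropy-facts}.

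\emph{Existence of $F$.} Since $\sum_x p(x)=1$ and $\sum_x p(x)\log\frac1{p(x)}=H<\infty$ are both convergent series of nonnegative terms over the countable set $\operatorname{supp}X$, their tails can be made small simultaneously. Concretely, enumerate $\operatorname{supp}X=\{x_1,x_2,\dots\}$ and take $F=\{x_1,\dots,x_n\}$ with $n$ large enough that both tail sums are at most $\theta$. If the support is finite, simply take $F=\operatorname{supp}X$.

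\emph{Part (i).} For any such $F$ we have $q=1-\P(X\notin F)\in[1-\theta,1]$, and $X_F$ has mass function $p/q$ on $F$. A direct computation gives the identity
\[
H(X_F)=-\sum_{x\in F}\frac{p(x)}{q}\log\frac{p(x)}{q}=\frac{H-T_F}{q}+\log q .
\]
\begin{itemize}
\item For the upper bound, drop $\log q\le 0$ and $-T_F\le0$ to get $H(X_F)\le H/q\le H/(1-\theta)$.
\item For the lower bound, note $H-T_F\ge0$ and $q\le1$, so $\frac{H-T_F}{q}\ge H-T_F\ge H-\theta$. It then remains to show $\log q\ge\log(1-\theta)\ge-2\theta$ for $\theta\in(0,\tfrac12]$. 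This holds because $\theta\mapsto\log(1-\theta)+2\theta$ is concave, equals $0$ at $\theta=0$, and equals $0$ at $\theta=\tfrac12$, so it is nonnegative in between.
\end{itemize}
Summing the two pieces gives $H(X_F)\ge H-3\theta$.

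\emph{Part (ii).} Condition on the pair of indicators $(\1{X\in F},\1{X'\in F})$. By (E1) and nonnegativity of entropy,
\[
H(X+X')\ge H\bigl(X+X'\mid \1{X\in F},\1{X'\in F}\bigr)\ge q^2\,H(X+X'\mid X\in F,X'\in F).
\]
By independence of $X$ and $X'$, the law of $X+X'$ on the event $\{X\in F,X'\in F\}$ is exactly that of $X_F+X_F'$. Hence $H(X_F+X_F')\le H(X+X')/q^2\le H(X+X')/(1-\theta)^2$. The identical argument with products in place of sums gives the bound for $H(X_FX_F')$.

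There is no real obstacle here. The only points needing care are the entropy identity for the conditioned law and the elementary inequality $\log(1-\theta)\ge-2\theta$ (base-2 logarithm) on $(0,\frac12]$.
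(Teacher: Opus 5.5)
Your proposal is correct and follows essentially the same route as the paper: the identity $H(X_F)=\frac{H-T_F}{q}+\log q$ together with $\log(1-\theta)\ge-2\theta$ gives part (i), and conditioning on the indicators of $\{X\in F\}$ and $\{X'\in F\}$ gives part (ii). The one difference is that you justify $\log(1-\theta)\ge-2\theta$ on $(0,\frac12]$ by concavity, a step the paper only asserts.
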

 
\begin{proof}
The series $\sum_xp(x)=1$ and $\sum_xp(x)\log(1/p(x))=H$ both converge, so along any exhaustion of $\operatorname{supp}X$ by finite sets their tails tend to $0$; fix $F$ with both at most $\theta$, and put $P_F=\P(X\in F)\ge1-\theta>0$.
The mass function of $X_F$ being $p/P_F$ on $F$, we have
\[ H(X_F)=P_F^{-1}\sum_{x\in F}p(x)\paren{\log(1/p(x))+\log P_F}
=\frac{H-T_F}{P_F}+\log P_F. \]
Thus
\[ H(X_F) \leq \frac{H}{P_F} \leq \frac{H}{1-\theta}, \]
and
\[ H(X_F) \geq H-T_F + \log P_F \geq H - \theta + \log(1-\theta) \geq H - 3\theta, \]
using the inequality $\log(1-\theta)\geq-2\theta$ for $\theta\in\left(0,\frac12\right]$.

Lastly, we bound
\[
  H(X+X')\ge H\paren{X+X'\mid\1{X\in F},\1{X'\in F}}
  \ge P_F^{2}H\paren{X+X'\mid X\in F,X'\in F}=P_F^{2}H(X_F+X_F'),
\]
and then recall $P_F\geq1-\theta$.
The bound for the product is identical.
\end{proof}
\cref{thm:main} with its explicit constants follows from the following corollary.
\begin{corollary}\label{cor:main-general}
    Let $X,X'$ be i.i.d.\ discrete real-valued random variables with $6\leq H=H(X)<\infty$. Then
    \begin{equation}\label{eq:main-general}
    \max\set{H(X+X'),\,H(XX')} \ge \frac87H-18\log H-65-\frac12\log C_R .
    \end{equation}
\end{corollary}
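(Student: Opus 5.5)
The plan is to reduce to the finitely supported case, \cref{thm:main-body}, by applying the truncation \cref{lem:truncation} with a small $\theta$, and then to transfer the resulting bound back to $X$.

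\textbf{Step 1: truncate.} Fix $\theta\in(0,\frac12]$, to be chosen of order $1/H$; I take $\theta=1/(8H)$, which is at most $\frac1{48}$. \Cref{lem:truncation} supplies a finite $F$ and the variable $X_F$ with
\[
H-3\theta\le H(X_F)\le \frac{H}{1-\theta},
\]
together with $H(X_F+X_F')\le H(X+X')/(1-\theta)^2$ and the same bound for the product. Since $H\ge6$, we get $H(X_F)\ge 6-3/48>5=H_0$, so \cref{thm:main-body} applies to $X_F$.

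\textbf{Step 2: apply the finite case.} \Cref{thm:main-body} gives
\[
\max\set{H(X_F+X_F'),H(X_FX_F')}\ge\frac87H(X_F)-18\log H(X_F)-63-\frac12\log C_R.
\]
On the left I use the truncation bounds, so the left side is at most $(1-\theta)^{-2}\max\set{H(X+X'),H(XX')}$. On the right I use $H(X_F)\ge H-3\theta$ in the linear term. In the logarithmic term I use $H(X_F)\le H/(1-\theta)$, so that $\log H(X_F)\le\log H+\log\frac1{1-\theta}$, and the extra piece is negligible.

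\textbf{Step 3: untruncate and absorb constants.} Writing $M=\max\set{H(X+X'),H(XX')}$, I multiply through by $(1-\theta)^2\ge1-2\theta$. If the right-hand side $R$ is negative there is nothing to prove. Otherwise $M\ge(1-2\theta)R$, which is at least $R-2\theta R$. Since $R\le\frac87H$ and $\theta=1/(8H)$, the loss $2\theta R$ is at most $\frac27$. The remaining losses are also bounded:
\begin{itemize}
\item $\frac87\cdot3\theta\le\frac1{2H}\le\frac1{12}$;
\item $18\log\frac1{1-\theta}\le18\log\frac{48}{47}<0.6$.
\end{itemize}
The total additional loss is therefore below $2$, which turns $63$ into $65$ and proves \eqref{eq:main-general}.

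The only real obstacle is the bookkeeping in Step 3. A multiplicative factor $(1-\theta)^2$ on an entropy of size $\Theta(H)$ costs $\Theta(\theta H)$, so $\theta$ must scale like $1/H$ rather than be a constant. One must also check that $H(X_F)$ stays above $H_0=5$, and that the sign of the right-hand side causes no trouble when multiplying by $(1-\theta)^2$.
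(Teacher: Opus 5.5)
Your proposal is correct and follows essentially the same route as the paper: truncate via \cref{lem:truncation}, apply \cref{thm:main-body} to $X_F$, then undo the $(1-\theta)^2$ factor. The differences are only in bookkeeping. You take $\theta=1/(8H)$ where the paper takes $\theta=H^{-2}$. You also apply $(1-\theta)^2\ge1-2\theta$ to the lower bound $R$, which forces your sign case, whereas the paper applies it to $B\ge0$ directly and uses $B\le 2H/(1-\theta)$.
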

 
\begin{proof}
Apply \cref{lem:truncation} with $\theta=H^{-2}\le\frac1{36}\le\frac12$.
By (i),
\[ H(X_F) \ge H-3H^{-2} \ge6-\frac1{12} > 5, \] 
so $X_F$ being finitely supported, \cref{thm:main-body} and then (i) gives
\begin{align*}
    \max\set{H(X_F+X_F'),H(X_FX_F')}
    &\geq \frac87 H(X_F) -18\log H(X_F)-63-\frac12\log C_R
    \\ &\geq \frac87 H - \frac{24}{7}\theta - 18\log H + 18\log(1-\theta) - 63-\frac12\log C_R
    \\ &\geq \frac87 H - 18\log H - 64 - \frac{2}{21} - \frac12\log C_R,
\end{align*}
where we again use $\log(1-\theta)\geq-2\theta$ for $\theta\in\left(0,\frac12\right]$.
By (ii), where for brevity let
\[ B=\max\set{H(X_F+X_F'),H(X_FX_F')}, \]
we have
\begin{align*}
    \max\set{H(X+X'),H(XX')}
    &\geq (1-\theta)^2 B \geq B - 2\theta B \geq B - \frac{4\theta H}{1-\theta} = B-\frac{4}{H(1-\theta)} \geq B - \frac{24}{35},
\end{align*}
where we use the trivial bound $B\leq 2H(X_F) \leq \frac{2H}{1-\theta}$.
Combining this with our lower bound on $B$ completes the proof.
\end{proof}
 
The same truncation removes the finite support condition from \cref{cor:eer-eff}, proving \cref{thm:eer}.
\begin{proof}[Proof of \cref{thm:eer}]
We apply \cref{lem:truncation} to $\widetilde X$ with $\theta=\widetilde H^{-2}\le\frac1{25}$.
For brevity, we define $\widetilde H_F=H(\widetilde X_F)$ and let $\widetilde u_F,\widetilde m_F$ be the two corresponding quantities formed from the zero-free finitely supported $\widetilde X_F$.
Then by \cref{lem:truncation},
\begin{align*}
    \widetilde m
    \geq (1-\theta)^2\widetilde m_F \frac{\widetilde H_F}{\widetilde H}
    \geq \kappa\widetilde m_F
    \geq \widetilde m_F - \frac{26}{5}\widetilde H^{-2},
\end{align*}
where we use the fact that $\widetilde m_F \leq 2$ and let
\[ \kappa=(1-\theta)^2\paren{1-\frac{3\theta}{\widetilde H}} \geq 1-2\theta-\frac{3\theta}{\widetilde H} \geq 1 - \frac{13}{5}\widetilde H^{-2} \geq 1-\frac{104}{1000}. \]
\eqref{eq:eer-clean} then yields
\begin{align*}
    \widetilde m_F \geq 2-6\widetilde u_F - \vartheta(\widetilde H_F),
\end{align*}
where $\vartheta(\eta)=\frac{18\log\eta+57}{\eta}$ for $\eta\geq4$ (in other words, we use the $\vartheta$ defined in \cref{cor:eer-eff}, rather than that defined in \cref{thm:eer}).
We bound
\begin{align*}
    1+\widetilde u_F
    &= \frac{H(\widetilde X_F+\widetilde X_F')}{\widetilde H_F}
    \leq \frac{(1+\widetilde u)\widetilde H}{(1-\theta)^2\widetilde H_F}
    \leq \frac{1+\widetilde u}{\kappa},
\end{align*}
so, using $\widetilde u\leq1$,
\begin{align*}
    6\widetilde u_F \leq 6\widetilde u + 12\frac{1-\kappa}{\kappa} \leq 6\widetilde u + \frac{156}{5\kappa}\widetilde H^{-2} \leq 6\widetilde u + 35\widetilde H^{-2}.
\end{align*}
And as $\vartheta$ is nonincreasing,
\begin{align*}
    \vartheta(\widetilde H_F) \leq \vartheta(\widetilde H-3\theta) \leq \vartheta(\widetilde H)\frac{\widetilde H}{\widetilde H-3\theta} = \vartheta(\widetilde H) + \frac{3\vartheta(\widetilde H)}{\widetilde H^3-3} \leq \vartheta(\widetilde H) + \frac{3\vartheta(5)}{\frac{122}{25}\widetilde H^2} \leq \vartheta(\widetilde H) + \frac{61}{5}\widetilde H^{-2}.
\end{align*}
Combining these bounds yields
\[ \widetilde m \geq 2 - 6\widetilde u - \vartheta(\widetilde H) - \paren{\frac{26}{5}+35+\frac{61}{5}}\widetilde H^{-2} \geq 2-6\widetilde u - \vartheta(\widetilde H) - 11\widetilde H^{-1} \geq 2-6\widetilde u - \frac{18\log\widetilde H + 57+11}{\widetilde H}, \]
which completes the proof.
\end{proof}
The same truncation extends \cref{prop:baseline} as well, yielding \cref{thm:baseline}.
\begin{proof}[Proof of \cref{thm:baseline}]
We apply \cref{lem:truncation} to $\widetilde X$ with $\theta=\widetilde H^{-2}\le\frac1{25}$.
Then, using the same estimates as in the proof of \cref{thm:eer}, by \cref{prop:baseline},
\begin{align*}
    \widetilde m
    \geq \widetilde m_F - \frac{26}{5}\widetilde H^{-2}
    \geq \frac54-\frac34\widetilde u_F - \vartheta(\widetilde H_F) - \frac{26}{5}\widetilde H^{-2},
\end{align*}
where $\vartheta$ is defined as in \cref{prop:baseline}, not \cref{thm:baseline}.
Analogous to before, we bound
\[ \frac34\widetilde u_F \leq \frac34\widetilde u + \frac{3}{2}\cdot\frac{13}{5\kappa}\widetilde H^{-2} \leq \frac34\widetilde u + \frac{22}{5}\widetilde H^{-2}, \]
and
\[ \vartheta(\widetilde H_F) \leq \vartheta(\widetilde H) + \frac{3\vartheta(5)}{\frac{122}{25}\widetilde H^2} \leq \vartheta(\widetilde H) + \paren{\frac{3}{5}+\frac{7}{100}K_R}\widetilde H^{-2}. \]
Thus,
\begin{align*}
    \widetilde m
    &\geq \frac54-\frac34\widetilde u - \vartheta(\widetilde H) - \paren{\frac{51}{5}+\frac{7}{100}K_R}\widetilde H^{-2}
    \\ &\geq \frac54-\frac34\widetilde u - \frac{\frac34\log(1+\widetilde H)+\paren{\frac12+\frac{7}{500}}K_R+\paren{\frac34(\log e + 2)+\frac14+\frac{51}{25}}}{\widetilde H}
    \\ &\geq \frac54-\frac34\widetilde u - \frac{\frac34\log(1+\widetilde H)+K_R+5}{\widetilde H}. \qedhere
\end{align*}
\end{proof}
\begin{remark}\label{rmk:characteristic}
The ground field, in our case $\R$, matters twice.
\Cref{sec:incidence} needs characteristic $0$, so that hypothesis (ii) of \cref{thm:incidence} is vacuous; over $\F_p$ one would have to carry $\abs{G_1}\abs{G_2}\abs{G_3}\ll p^{2}$ through the dyadic uniformized layering.
More seriously, \cref{sec:solymosi,sec:eer} use the \emph{order structure} of $\R$ essentially, through the ordering of the slopes and the strict convex combination in the proof of \cref{thm:asym-solymosi}, so the Solymosi half, i.e., the $F_2$ argument, has no clear analog outside of $\R$.
However, as discussed in \cref{rmk:1_9}, without the Solymosi half we can still achieve $\delta=\frac19$, and so this result holds for arbitrary fields of characteristic 0.
\end{remark}

\section{Open problems}\label{sec:open}
The main conjecture we have is \cref{conjecture:entropic_solymosi}, which as discussed would imply \cref{conjecture:main} for $\delta=\frac13$, and this would be tight by the example of the author, Gavalakis, and Kontoyiannis \cite[Example 3]{LGK} showing $\delta\leq\frac13$; the example is naturally also asymptotically tight for \cref{conjecture:entropic_solymosi}.
Recall their example has $X=0$ with probability $\frac13$ and otherwise is uniformly distributed in $\set{1,2,\dots,n}$.

We can generalize their example to arbitrary $p=\P(X=0)$, though for simplicity will shift the arithmetic progression from $\set{1,2,\dots,n}$ to $\set{N+1,\dots,N+n}$ for $N=2n^2$ to better separate the products.
While $p=\frac13$ is optimal for \cref{conjecture:main} in that $p\neq\frac13$ will provide an upper bound on $\delta$ larger than $\frac13$, we note that any $p\in[0,1)$ yields a tight example for \cref{conjecture:entropic_solymosi}, as follows.
Note this construction is \cref{ex:16}, and so we have already computed the entropies to be
\[ H(X)=(1-p)\log n + O(1), \quad H(X+X')=(1-p^2)\log n + O(1), \quad H(XX') = 2(1-p)^2\log n + O(1). \]
Then
\[ 2H(X+X')+H(XX') \sim (4-4p)\log n \sim 4H(X). \]

More generally, if we have any zero-free law $\widetilde X$ and define $X$ by adding an atom at 0 with probability $p$, i.e., $\widetilde X\sim X\mid X\neq 0$ and $\P(X=0)=p$, then by our computations in \cref{lem:transfer}, where our lower bound on $H(X+X')$ is tight up to an additive constant of 2, as we conditioned on two indicator variables, we have $H=(1-p)\widetilde H + O(1)$,
\[ H(X+X') = 2p(1-p)\widetilde H + (1-p)^2 H(\widetilde X + \widetilde X') + O(1), \quad\text{and}\quad H(XX') = (1-p)^2H(\widetilde X\widetilde X') + O(1). \]
Thus, if we define
\[ \widetilde\Lambda = 2H(\widetilde X+\widetilde X')+H(\widetilde X\widetilde X')-4\widetilde H, \]
we analogously have
\begin{align*}
    \Lambda
    &= 2H(X+X')+H(XX')-4H
    \\ &= 2(1-p)^2H(\widetilde X+\widetilde X')+(1-p)^2H(\widetilde X\widetilde X')-4(1-p)^2\widetilde H + O(1)
    \\ &= (1-p)^2\widetilde\Lambda + O(1).
\end{align*}
In other words, adding an atom at 0 essentially contracts the difference between the two sides of \cref{conjecture:entropic_solymosi} towards 0, and so in some sense this conjecture is ``zero-invariant.''
\begin{remark}\label{rmk:conj-es-zf}
    Concretely, this shows that the general case of \cref{conjecture:entropic_solymosi} follows from the zero-free case.
    If for all $\eps>0$ we knew $\widetilde\Lambda \geq -\eps\widetilde H-O(1)$ for all zero-free laws $\widetilde X$ with entropy $\widetilde H=H(\widetilde X)$, which is equivalent to \cref{conjecture:entropic_solymosi} for zero-free laws via diagonalization, then for any $X$ with $\P(X=0)=p$ and $\widetilde X\sim X\mid X\neq 0$,
    \[ \Lambda = (1-p)^2\widetilde\Lambda - O(1) \geq -\eps(1-p)^2\widetilde H - O(1) = -\eps(1-p) H - O(1) \geq -\eps H - O(1), \]
    which by diagonalization implies \cref{conjecture:entropic_solymosi} for all discrete random variables $X$, not necessarily zero-free.
\end{remark}
One avenue for further research would be to improve the coefficient $6$ of \cref{thm:eer}, or prove that it is tight.
We also note that the improved function need not be linear in $\widetilde u$, though a nonlinear shape may be unnatural to prove.
\cref{prop:master} already establishes how such an improvement would translate to an improvement on $\delta$.
Recall the decomposition of the coefficient $6=2\times(1+2)$ from \cref{rmk:shape}.
Within this proof structure, where of course it is possible a different approach will yield a better coefficient, we believe the outer factor of 2 from the Solymosi argument in \cref{sec:solymosi} is structural, i.e., cannot be improved.
The inner term of 2 comes from the application of the entropic Ruzsa triangle inequality, \eqref{eq:difference-from-sum}, i.e.,
\[ H(X-X') - H(X) \leq 2(H(X+X')-H(X)). \]
While the constant 2 in the combinatorial analog,
\[ \frac{\abs{A-A}}{\abs{A}}\leq\paren{\frac{\abs{A+A}}{\abs{A}}}^2, \]
is known to be tight by \cite{HennecartRobertYudin1999}, it is not known whether this constant is tight in the entropic case.
An improvement here would immediately improve the coefficient of 6 and thus improve $\delta$ beyond $\frac17$, not to mention the improvement being of immense independent interest.

Recall the \emph{varentropy} of a discrete random variable $X$ with probability mass function $p$ is $\Var(-\log p(X))$, where the entropy is simply $\E[-\log p(X)]$.
A second avenue for further research would be to control the varentropy of sums, to essentially allow for a second moment method to improve our control in \cref{sec:eer}, e.g., replacing our need for a $\beta$-adic decomposition in \cref{lem:levelsets}.

\section*{Acknowledgments}
Rupert Li was partially supported by a Hertz Fellowship and a PD Soros Fellowship.
Claude was used as an auxiliary tool while preparing this manuscript.
All statements, arguments, and proofs were verified and either written or revised by the author, who takes full responsibility.

\bibliographystyle{amsinit}
\bibliography{ref}

\end{document}